\def\BibTeX{{\rm B\kern-.05em{\sc i\kern-.025em b}\kern-.08em
    T\kern-.1667em\lower.7ex\hbox{E}\kern-.125emX}}
 \newtheorem{theorem}{Theorem}
 \newtheorem{lemma}[theorem]{Lemma}
 \newtheorem*{theorem*}{Theorem}
 \newtheorem*{lemma*}{Lemma}
 \newtheorem{proposition}[theorem]{Proposition}
 \newtheorem*{proposition*}{Proposition}
 \theoremstyle{definition}
 \newtheorem{defi}[theorem]{Definition}
\pgfplotsset{compat=1.14}
\DeclareMathOperator*{\argmin}{argmin}
\newcommand{\fast}{\mbox{``{\sf fast}''}}
\newcommand{\init}{\mbox{``{\sf init}''}}
\newcommand{\pre}{\mbox{``{\sf pre-com}''}}
\newcommand{\com}{\mbox{``{\sf com}''}}
\newcommand{\SKIP}{\mbox{{\sf SKIP}}}
\newcommand{\lockvalue}{\mbox{{\sf lockvalue}}_q}
\newcommand{\lockround}{\mbox{{\sf lockite}}_q}
\newcommand{\clock}{\mbox{{\sf clock}}_q}
\newcommand{\status}{\mbox{{\sf status}}}
\newcommand{\tmax}{t_\text{max}}
\newcommand{\RBA}{{\sf RBA}}
\newcommand{\HBA}{{\sf HBA}}
\newcommand{\negl}{{\sf negl}}
\newcommand{\Iprove}{\mathcal{I}_\text{prove}}
\newcommand{\Iveri}{\mathcal{I}_\text{veri}}
\newcommand{\Oprove}{\mathcal{O}_\text{prove}}
\newcommand{\Overi}{\mathcal{O}_\text{veri}}
\newcommand{\keygen}{\mbox{{\sf KeyGen}}}
\newcommand{\prove}{\mbox{{\sf Prove}}}
\newcommand{\veri}{\mbox{{\sf Veri}}}
\newcommand{\Hyb}{{\sf Hyb}}
\begin{document}

\title{Fair Byzantine Agreements for Blockchains}

\author{
\IEEEauthorblockN{Tzu-Wei Chao}
\IEEEauthorblockA{
Taipei, Taiwan \\
 CheshireCatNick@gmail.com
}
\and
\IEEEauthorblockN{Hao Chung}
\IEEEauthorblockA{
Taipei, Taiwan \\
 chunghaoqc@gmail.com
}
\and
\IEEEauthorblockN{Po-Chun Kuo}
\IEEEauthorblockA{
Taipei, Taiwan \\
 pk@byzantine-lab.io
}
}

\maketitle

\begin{abstract}
Byzantine general problem is the core problem of the consensus algorithm, and many protocols are proposed recently to improve the decentralization level, the performance and the security of the blockchain.
There are two challenging issues when the blockchain is operating in practice.
First, the outcomes of the consensus algorithm are usually related to the incentive model, so whether each participant's value has an equal probability of being chosen becomes essential.
However, the issues of fairness are not captured in the traditional security definition of Byzantine agreement.
Second, the blockchain should be resistant to network failures, such as cloud services shut down or malicious attack, while remains the high performance most of the time.

This paper has two main contributions.
First, we propose a novel notion called \textit{fair validity} for Byzantine agreement.
Intuitively, fair validity lower-bounds the expected numbers that honest nodes' values being decided if the protocol is executed many times.
However, we also show that any Byzantine agreement could not achieve fair validity in an asynchronous network, so we focus on synchronous protocols.
This leads to our second contribution: we propose a \emph{fair}, \emph{responsive} and \textit{partition-resilient} Byzantine agreement protocol tolerating up to 1/3 corruptions.
Fairness means that our protocol achieves fair validity. 
Responsiveness means that the termination time only depends on the actual network delay instead of depending on any pre-determined time bound. 
Partition-resilience means that the safety still holds even if the network is partitioned, and the termination will hold if the partition is resolved.

For the performance, our Byzantine agreement outdoes the state-of-art synchronous protocols. Precisely, the expected round complexity of our protocol is 6.33 rounds for the static adversary. For comparison, the protocol proposed by Abraham et al. in Financial Cryptography 2019 requires expected 10 rounds and the Algorand Agreement proposed by Chen et al. in Cryptology ePrint 2018/377 requires expected 8.2 rounds. Moreover, we conduct an experiment with 21 nodes in 10 regions spanning 3 continents on Google cloud platform, and the results show the latency of our responsive protocol is 241.79 ms. 

\end{abstract}

\begin{IEEEkeywords}
Fair, Byzantine agreement, consensus, blockchain, responsiveness, synchronous network
\end{IEEEkeywords}

\section{Introduction}
\label{section:introduction}
Byzantine agreement is one of the central problems in the field of distributed algorithms and cryptography.
It also plays an important role in multiparty computation and constructing cryptocurrencies.

In 1982, Lamport, Shostak, and Pease \cite{Lamport1982} introduced the \emph{Byzantine general problem}: several generals want to make a consensus on whether they should attack or not, while some of them may be malicious. 

In this paper, we consider the following setting.
Suppose there are $n$ users, of which at most $t$ may be malicious. The malicious users may deviate from the protocol arbitrarily. Each user $q$ starts with an initial value $v_q$. All the users want to decide on one of the initial values, satisfying the following three conditions:
\begin{enumerate}
    \item \textbf{Agreement.} Two honest users never decide on different values.
    \item \textbf{Termination.} All honest users terminate in a finite time.
    \item \textbf{Validity.} The decision value must be the initial value of any node.\footnote{The original Byzantine general problem only considers the binary case. That is, the initial values can only be $0$ or $1$.
    The validity is defined as if all honest nodes start from the initial value $b \in \{0,1\}$, then all honest nodes must decide on $b$.
    Here we consider the multi-value case, and we follow the definition in \cite{Fischer1983}.
    }
\end{enumerate}
The protocol that solves such a problem is called \emph{Byzantine agreement} (BA).

\subsection{Byzantine Agreement in Blockchain}
The blockchain systems allow many mutually untrusted users to maintain a distributed ledger with consensus.
However, the long confirmation latency prevents the existing blockchain systems from many daily applications.
For example, the confirmation latency of Ethereum is about 5 to 10 minutes.
It is unrealistic to wait such a long time for micro-payment systems.

Recently, some proposals try to overcome the long latency, but it is challenging to decide who has the right to issue the blocks and to guarantee that every user shares the same ledger.
Chen and Micali \cite{Chen2016} proposed a novel blockchain system, Algorand, that solves the consensus problem by BA.
Pass and Shi \cite{PassS17} also proposed a blockchain system, Hybrid consensus, that reduces the latency by BA.
The performance and the security of such blockchain systems highly depend on the underlying Byzantine agreements,
so it is imperative to design a secure and efficient Byzantine agreement protocol under the reasonable assumptions for the Internet nowadays.

\paragraph{Fairness}
The incentive model plays an essential role in most of the blockchains. It motivates the miners and validators to execute and follow the protocol. 
It also relates to the issue and the distribution of the currency. 
Consequently, if we use BA to decide whose block (initial value) is chosen, \emph{whether each participant’s value has an equal probability of being chosen} becomes essential and directly influences the economics of the blockchain.

However, the notion of fairness is not captured in the traditional security definition (agreement, termination, validity).
Therefore, to measure the performance of BA protocols, especially in the context of blockchains, we propose a new definition of the validity, called \emph{strongly fair validity}.
Intuitively, if $n$ users join a BA, the BA protocol satisfies strongly fair validity if the probability that one's value is accepted by some honest nodes is lower-bounded by $\frac{1}{n}$ except a negligible probability.

\paragraph{Synchronous and Asynchronous Network}
An extensive literature has studied Byzantine agreement in different network models.
In a synchronous network, there is a priori known upper-bound $\lambda$ of the network delay, while an asynchronous network does not have.
For convenience, we call the BA protocols designed for the former model the \emph{synchronous BA} and the BA protocols designed for the latter model the \emph{asynchronous BA}.

When applying to the blockchain, asynchronous BAs usually outperform synchronous BAs from two aspects.
First, asynchronous BA has better resistance to network failures.
Although the network nowadays is highly reliable, network failures happen from time to time.
For example, the undersea cables break or the network services shut down for updating.
In these cases, the network delay may be much longer than the typical case and the security of a synchronous BA is not guaranteed.
Second, the performance of the synchronous protocols is limited by the upper-bound $\lambda$.
On the other hand, there is no upper-bound for the network delay in the asynchronous protocols, so the protocols proceed when enough messages are delivered, which only depends on the actual network delay.

However, the nature that the asynchronous protocols do not depend on any pre-determined time bound makes it impossible to achieve strongly fair validity\footnote{In fact, even the weakly fair validity cannot be achieved. We will elucidate it in Section \ref{section:fairness}.}.
In this paper, we show the following impossibility result.
\begin{theorem}(informal, restated in Theorem \ref{theorem:strongly})
In an asynchronous network, no Byzantine agreement tolerating some Byzantine nodes can achieve agreement, termination and strongly fair validity at the same time.
\end{theorem}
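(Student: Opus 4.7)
The plan is to adapt the classical FLP-style silencing argument, but to route it through \emph{fair validity} rather than through plain termination. The key observation is that in an asynchronous network tolerating at least one Byzantine node, an adversary can render any single honest node indistinguishable from a silent corrupted one; this forces the decision to be made without that node's input, and with distinct initial values that input's probability of being accepted collapses to something negligible, violating the $1/n$ guarantee.

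First I would set up the contradiction. Suppose $\Pi$ is an asynchronous BA tolerating $t\geq 1$ Byzantine nodes and satisfies agreement, termination, and strong fair validity. Draw pairwise distinct initial values $v_1,\ldots,v_n$ from a sufficiently large domain and assign $v_q$ to node $q$. Construct an adversary $\mathcal{A}$ that corrupts $t$ nodes, none of which is $P_1$, and configures every Byzantine input to some value other than $v_1$. Simultaneously, $\mathcal{A}$ schedules the network so that every message sent by $P_1$ is delayed past the termination time of the remaining honest nodes. This schedule is admissible precisely because the asynchronous model forbids any a priori delay bound: from the viewpoint of the other honest nodes, a delayed $P_1$ is indistinguishable from a silent Byzantine $P_1$, so by termination they must decide without waiting for it.

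By termination, the other honest nodes decide at some finite time $T$. Their joint view at time $T$ is a function only of $v_2,\ldots,v_n$, $\mathcal{A}$'s independent coins, and $\Pi$'s internal randomness, so it is information-theoretically independent of $v_1$. By the standard validity clause the decision must equal the initial value of some node; since $v_1$ appears in no honest view and $\mathcal{A}$ has deliberately excluded $v_1$ from every Byzantine input, the output is $v_1$ only with probability at most $1/|\text{domain}|$, which is negligible for a sufficiently large domain. Consequently $v_1$ is accepted by zero honest nodes with probability $1-\negl$, contradicting the strongly fair lower bound of $1/n - \negl$.

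The main obstacle I expect is making the independence step airtight in the presence of setup assumptions such as a PKI or a common reference string, which could in principle carry subliminal correlations with $v_1$. I would neutralize this by observing that inputs are sampled independently of the setup, so conditioned on any fixed setup and on $\mathcal{A}$'s coins, $v_1$ remains (nearly) uniform on its share of the domain, and the honest decision, being a deterministic function of a view that never sees $v_1$, still hits $v_1$ only with negligible probability. A secondary subtlety is the quantifier order in the definition of fair validity: the bound must hold against some admissible adversary and scheduler, which is exactly the regime the construction above exploits.
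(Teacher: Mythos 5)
Your proof is correct and follows essentially the same route as the paper's: isolate a single honest node by delaying all of its messages, observe that the remaining honest nodes cannot distinguish it from a silent Byzantine node and so must decide without its input by termination, and conclude that its value is decided only with negligible probability, contradicting the $1/n$ lower bound. Your write-up is in fact more careful than the paper's two-case sketch (you make the indistinguishability, independence, and collision-probability steps explicit), but it is the same argument.
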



Thus, our problem is whether we can have a secure synchronous BA that achieves fair validity while enjoys the advantages of asynchronous BAs as many as possible?
The answer is positive.
In the following, we introduce two desired properties for designing synchronous BA.

\paragraph{Partition-Resilience}
Algorand agreement proposed by Chen et al. \cite{Chen2018ALGORANDAS} is a synchronous protocol.
In their work, they propose a new property, called \emph{partition-resilience}:
a Byzantine agreement protocol is \emph{partition-resilient} (PR) if the agreement always holds even if the network is asynchronous, and the termination holds if the network becomes synchronous and all the delayed messages delivered.
Notice that ``a synchronous BA with PR'' is different from ``an asynchronous BA.''
For the former, the protocol is still parameterized by a time-bound $\lambda$ and some properties\footnote{In this paper, fair validity and responsiveness in our protocols depend on $\lambda$.} other than the agreement may still rely on $\lambda$.
On the other hand, an asynchronous BA performs qualitatively the same no matter the condition of the network.

The network nowadays in highly reliable, so a synchronous BA with PR enjoys all the desired properties depending on $\lambda$ most of the time, while the agreement still holds even if the occasional failure happens.
When applying to blockchains, the agreement guarantees that the chain will not fork.
Thus, PR is a reasonable requirement of a BA protocol for building a blockchain.



\paragraph{Responsiveness}
Recently, Pass and Shi \cite{PassS17} proposed a blockchain protocol, called Hybrid consensus, whose security depends on the a priori known upper-bound $\lambda$ while the protocol proceeds as soon as the actual network delay.
In \cite{PassS17}, they defined a performance metric called \emph{responsiveness}:
a protocol is called responsive if its termination time depends only on the actual network delay $\delta$ but not on the a priori known upper-bound $\lambda$.

We borrow the same notion and apply it to Byzantine agreement.
We say a BA protocol is \emph{responsive} if all the honest nodes terminate on some values as fast as the actual network proceeds without depending on any pre-determined time bound.

\paragraph{Weakly Fair Validity}
In this work, however, we show that if a BA protocol only executes once, it is impossible to achieve both responsiveness and strongly fair validity.
Hence, we define a weaker notion of fairness, called \emph{weakly fair validity}, which captures the decided values when the BA protocol is executed many times.
When applying to blockchains, BA is usually executed once for each block.
Thus, weak fair validity is a reasonable metric if we examine the distribution of the proposers for a series of blocks.
We will formally introduce and justify the definition in Section \ref{section:fairness}.

\paragraph{Our Contributions}
To sum up, this paper has two main contributions.
First, we formalize the notion of fairness and analyze the relevant properties, including:
\begin{itemize}
    \item we define \emph{strongly fair validity}, which states that every honest node's value has a reasonable probability of being decided if the protocol is only executed once;
    \item we define \emph{weakly fair validity}, which lower-bounds the expected numbers that honest nodes' values being decided if the protocol is executed many times;
    \item we show that no BA protocol can achieve agreement, termination and weakly fair validity at the same time in an asynchronous network;
    \item we show that no BA protocol can achieve both responsiveness and strongly fair validity even in synchronous network.
\end{itemize}

Second, we propose two partition-resilient BA protocols tolerating up to $1/3$ corruptions that achieve a different level of fairness. 
The first protocol, called \RBA, achieves strongly fair validity, while the second protocol, called \HBA, achieves both responsiveness and weakly fair validity.
The two protocols not only justify the definition of fair validity but are also pragmatic and friendly for real-world implementation.
If there is no partition, \HBA ~terminates in $(4\tmax+6)\lambda$ in the worst case, $8\lambda$ in the average case and $4\delta$ in the best case, where $\tmax$ is the number of malicious users and $\delta$ is the actual network latency. 
In addition, only the pre-determined proposer needs to propose the value, so the bandwidth complexity is low. 
Even if the pre-determined proposer crashes, other users still can reach an agreement by the followed \RBA. 
In this aspect, our protocol avoids the single point of failure and resists to the DDoS attack.

Let $n$ be the number of nodes joining the protocol and $t$ be the number of malicious nodes.
Our work can be formally summarized as the following theorems.
\begin{theorem}\label{theorem:RBA}
Synchronous authenticated Byzantine agreement can achieve partition-resilience, strongly fair validity and optimal resilience $t<n/3$ with
\begin{itemize}
    \item in the best case, 5 rounds termination and $\mathcal{O}(n^2)$ communication,
    \item expected 8 rounds termination and $\mathcal{O}(n^2)$ communication,
    \item in the worst case, $4 \tmax+6$ rounds termination and $\mathcal{O}(n^2t)$ communication
    against an adaptive adversary.
\end{itemize}
\end{theorem}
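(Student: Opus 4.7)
The plan is to first describe the \RBA\ protocol and then verify each claimed property. I would organize the protocol as a sequence of views, each with a designated leader chosen by an unbiasable rotation (e.g.\ a VRF-based lottery), and each view consisting of four sub-rounds: leader proposal, pre-commit vote, commit vote, and view-change with lock transfer. Honest nodes carry a (\lockvalue, \lockround) pair to guarantee that once enough honest nodes commit-vote on a value, no conflicting value can later gather a commit quorum. The best-case $5$ rounds corresponds to a single successful view plus a short confirmation tail; the worst-case $4\tmax+6$ comes from iterating through all potentially Byzantine leaders before hitting an honest one; and the expected $8$ rounds is a geometric average over leader quality.

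First I would prove \emph{agreement} by the standard quorum-intersection argument: a value can be committed in view $r$ only after collecting pre-commit and commit certificates of size at least $2n/3$, and any two such sets share an honest node whose locked value prevents a conflicting value from ever reaching the commit threshold in any later view. Since this argument nowhere invokes the synchrony bound $\lambda$, it immediately yields \emph{partition-resilience} for safety. I would then prove \emph{termination} under synchrony by induction on views: whenever an honest leader drives a view and all messages are delivered within $\lambda$, every honest node pre-commits and commits the leader's proposal, so the view decides within its $4$ rounds plus the $2$-round confirmation. At most $\tmax$ consecutive views can be wasted by Byzantine leaders, which gives the $4\tmax+6$ worst-case bound. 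The communication complexity per view is $\mathcal{O}(n^2)$ because each vote is an all-to-all broadcast of a constant-size authenticated message; multiplying by up to $\tmax$ views gives $\mathcal{O}(n^2 t)$ overall.

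For \emph{strongly fair validity}, I would leverage the unbiasable leader election: each node is the leader of a given view with probability $1/n$, independent of the adversary's view. Conditioned on the first honest leader $q$ and on the network being synchronous, the decided value equals $v_q$, so $\Pr[v_q \text{ decided}] \geq 1/n - \negl$. The expected round bound then follows from the fact that the number of views until an honest leader appears is geometric with success probability at least $2/3$; the expected number of views is $\leq 3/2$, giving roughly $4\cdot (3/2) + 2 = 8$ expected rounds once one accounts for the terminating commit phase.

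The main obstacle I anticipate is establishing strongly fair validity against an \emph{adaptive} adversary, because a naive leader-election scheme lets the adversary corrupt the designated leader as soon as its identity is exposed, driving $\Pr[v_q \text{ decided}]$ below $1/n$. To handle this I would use a player-replaceable / VRF-style mechanism in which the leader's identity is revealed only jointly with their already-signed proposal, so the proposal is broadcast before the adversary can react; combined with the locked-value mechanism, this preserves both safety across views and the $1/n$ fairness bound. The remaining subtlety is calibrating the exact round counts in the protocol's view-change so that the best-case, expected, and worst-case bounds match $5$, $8$, and $4\tmax+6$ respectively, which will require a careful accounting of the overhead that the lock-transfer step adds on top of the four voting sub-rounds.
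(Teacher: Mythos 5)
Your overall shape is right (quorum intersection for agreement, a $(\mbox{{\sf lockvalue}},\mbox{{\sf lockite}})$ pair for cross-iteration safety, safety independent of $\lambda$ for partition-resilience, a geometric tail for the expected 8 rounds), but there is a genuine gap in how you reconcile the deterministic worst-case bound with strong fairness. You elect a fresh leader per view by ``an unbiasable rotation (e.g.\ a VRF-based lottery).'' If the per-view leader is an independent random lottery, then the adversary's nodes can win the lottery in arbitrarily many consecutive views with nonzero probability, so you only get probabilistic finality (as in Algorand), not the deterministic $t+1$-iteration, $4\tmax+6$-round worst case claimed in the theorem. If instead the rotation is deterministic, then in a single execution a fixed node is leader of view 1 with probability $0$ or $1$, and strongly fair validity ($\Pr[v_q \text{ decided}] \geq 1/n - \negl$) fails. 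The paper's resolution is a single leader election performed once at Step~1: every node attaches a VRF credential to its initial message, the leader is always $\argmin$ of the \emph{currently known} valid credentials, and a Byzantine node can perturb this choice in at most one iteration, because it can reveal its (value, credential) pair only once without being caught equivocating. This one-shot election is what simultaneously yields the $1/n$ fairness bound (the minimum of $n$ uniform VRF outputs is uniform over the nodes) and the deterministic $t+1$-iteration termination ($t$ Byzantine nodes can each burn one late reveal, disrupting at most $t$ iterations). Your Case-3-style termination argument needs to be rewritten around withheld credentials rather than ``Byzantine leaders of consecutive views.''

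Two smaller points. First, you spend effort on strongly fair validity against an \emph{adaptive} adversary; the paper only claims fairness and the expected-8-round bound for \emph{static} adversaries (the ``against an adaptive adversary'' clause in the theorem attaches only to the worst-case termination bullet), and its fairness proof is a hybrid argument replacing the real VRF by an ideal uniform functionality, after which the $1/n$ bound is exact. Your player-replaceable fix is unnecessary for what is actually claimed. Second, partition-resilience requires not only that agreement survives asynchrony but that termination resumes after recovery; this needs the explicit forward conditions (jump to iteration $r$ upon seeing $2\tmax+1$ pre-commits or commits from iteration $r \geq r_q$), which your ``view-change with lock transfer'' gestures at but does not establish.
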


\begin{theorem}\label{theorem:HBA}
Synchronous authenticated Byzantine agreement can achieve responsiveness, partition-resilience, weakly fair validity and optimal resilience $t<n/3$ with
\begin{itemize}
    \item in the best case, less than $4\delta$ termination and $\mathcal{O}(n^2)$ communication,
    \item expected less than $6.33$ rounds termination and $\mathcal{O}(n^2)$ communication,
    \item in the worst case, $4 \tmax+9 $ rounds termination and $\mathcal{O}(n^2t)$ communication
    against a adaptive adversary.
\end{itemize}
\end{theorem}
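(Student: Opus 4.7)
The plan is to argue Theorem \ref{theorem:HBA} by presenting \HBA\ as a responsive \emph{fast path} prepended to the \RBA\ protocol of Theorem \ref{theorem:RBA}, and then showing that the composition preserves all of the properties already established for \RBA\ while adding responsiveness and weakly fair validity whenever the fast path succeeds. Concretely, I would describe the fast path as follows: the pre-determined proposer sends its value to everyone; each honest node issues a \fastvote\ upon first reception; a node who collects $n-t$ matching \fastvote{}s broadcasts a \fastcommit; a node who collects $n-t$ matching \fastcommit{}s outputs \fastcom\ and terminates. This takes at most $4\delta$ rounds of actual network travel and $\mathcal{O}(n^2)$ messages, giving the best-case bound.

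Next I would establish \emph{safety} for the composed protocol. The key invariant is that if any honest node outputs \fastcom\ on a value $v$, then at least $n-2t>t$ honest nodes have locked on $v$ before entering the \RBA\ subprotocol. I would then invoke the lock-respecting behavior already used for \RBA\ (no honest node proposes, votes, or switches away from a value on which $>t$ honest nodes are locked) to conclude that \RBA\ cannot subsequently confirm any $v'\neq v$. This yields agreement across the fast path and the fallback, and because the argument uses only message counts and lock invariants, it continues to hold when the network is partitioned, giving \emph{partition-resilience}. Termination when the partition heals, optimal resilience $t<n/3$, and the worst-case $4\tmax+9$ round bound (the extra $3$ coming from the fast-path preamble on top of the $4\tmax+6$ bound of Theorem \ref{theorem:RBA}) then follow directly from the corresponding parts of Theorem \ref{theorem:RBA}.

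For the expected-round and fairness claims I would condition on the identity of the pre-determined proposer. When the proposer is honest (probability at least $(n-t)/n \ge 2/3$) and the network is synchronous, the fast path succeeds in $4\delta \le 4\lambda$ time, and the proposer's value is the one decided, contributing directly to \emph{weakly fair validity}: over many executions with a randomly rotating proposer chosen by a VRF or public coin, each honest node proposes with probability $1/n$ and thus has its value confirmed with probability $\ge 1/n - \negl$. When the proposer is Byzantine, \HBA\ falls back to \RBA, whose expected round complexity is $8$ by Theorem \ref{theorem:RBA}; conditioning gives an expected cost below $\tfrac{2}{3}\cdot 4 + \tfrac{1}{3}\cdot(3+8)\approx 6.33$ rounds, which I would tighten by bookkeeping the exact overlap between the fast-path termination and the first \RBA\ round.

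The main obstacle I anticipate is the safety argument at the interface between the fast path and \RBA: the quorum that produces a \fastcom\ is $n-t$ \fastcommit{}s, but an adaptive adversary can corrupt the proposer mid-protocol and equivocate, so I must be careful to show that even in this case the locks carried into \RBA\ by $>t$ honest nodes are consistent on a single value. The cleanest way is to show that two disjoint quorums of \fastcommit{}s on different values would force more than $t$ corruptions; combined with a standard adaptive-security argument on the VRF-based proposer selection, this closes the gap. Weakly fair validity then comes out as a direct corollary of the conditional analysis, and the round and communication bounds follow by inspection.
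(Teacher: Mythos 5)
Your decomposition is essentially the paper's: a responsive fast path driven by a pre-determined proposer, folded into \RBA\ as a ``zeroth iteration'' so that the quorum-intersection and lock-carrying invariants of Theorem \ref{theorem:RBA} give agreement and partition-resilience across the interface; the worst case picks up $3$ extra rounds from the fast-path preamble ($4\tmax+6 \mapsto 4\tmax+9$); and the expected bound is obtained by exactly the paper's conditioning, $\tfrac{2}{3}\cdot 4 + \tfrac{1}{3}(3+8) \approx 6.33$. The one place you genuinely diverge is the proposer-selection mechanism underlying weakly fair validity: you rotate the proposer by a VRF or public coin so each honest node proposes with probability $1/n$ per execution, whereas the paper fixes a \emph{deterministic} permutation (nodes sorted by public key, indexed by the parameter $p$), so that over $M$ executions each node is pioneer $\lfloor M/n\rfloor$ times with certainty and weak fairness follows by counting rather than by a cryptographic reduction. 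The deterministic choice buys two things you would otherwise have to argue for: (i) every node knows the pioneer's identity before the protocol starts, which is what lets non-pioneer nodes vote immediately and is thus load-bearing for responsiveness (with a VRF-announced proposer you must explain how identity is learned without reintroducing a $\lambda$-wait), and (ii) it sidesteps the tension with Theorem \ref{theorem:strongly} -- a per-execution probability of $1/n-\negl$ is strongly fair validity, which the paper shows a responsive protocol cannot have, so your random-rotation claim as stated is slightly too strong and would need to be weakened to an expectation over the $M$ runs. Neither issue is fatal, but the paper's permutation-based election makes the fairness proof a two-line argument where yours needs the adaptive-security and identity-distribution bookkeeping you flag at the end.
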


The proof of Theorem \ref{theorem:RBA} and Theorem \ref{theorem:HBA} are given in Section \ref{section:RBA} and Section \ref{section:fastba}, respectively.

\subsection{Related Work}
To the best of our knowledge, only Abraham et al. \cite{Abraham2018} discussed the fairness in the context of BA.
In that paper, they defined the \emph{quality} of a BA: the probability of choosing a value that was proposed by an honest node is at least $\frac{1}{2}$ except with negligible probability.

Their definition \cite{Abraham2018} is not sufficient when the BA is applied to blockchains.
The quality views all the honest nodes as a whole.
There may be an honest node whose value is never accepted by other nodes, which is undesired in blockchains.
On the contrary, both the strong and the weak fair validity in this paper characterize the behavior of \emph{each} honest node.


Algorand agreement \cite{Chen2018ALGORANDAS} inspires us to design a synchronous BA resisting to the network failure.
In \cite{Chen2018ALGORANDAS}, they proposed a partition-resilient BA with leader election based on verifiable random functions.
The main contribution of our protocol is that \HBA ~further achieves responsiveness while remaining partition-resilience.
Besides, a leader is elected for each iteration in Algorand's design. On the contrary, our leader election procedure is independent of the iteration index, so the nodes are not required to propose their values at each iteration.
As a result, Algorand’s BA only achieves probabilistic finality, while \RBA ~and \HBA ~both terminate in $f$ iterations in the worst case, where $f$ is the number of malicious nodes.

Therefore, without sacrificing security, \HBA ~outdoes in the aspect of performance.
In the best case, \HBA ~terminates as fast as the actual network latency; 
in the worst case, \HBA ~achieves deterministic finality.


Another important related work is practical Byzantine fault tolerance (PBFT) by Castro and Liskov \cite{pbft}.
The notion of responsiveness is emerging in their work \cite{pbft}, but it is formally defined in \cite{PassS17}.
To achieve responsiveness, there is a specific node, the primary, that can be predicted for each view.
We adopt the same method in \HBA ~for the responsiveness.

When the primary does not follow the protocol, PBFT relies on \emph{view change} to switch to the next pre-determined primary.
However, the predictable primaries are easy to be attacked, like DDoS.
The attacker may compromise a series of primaries so that the protocol may halt for a long time.
On the contrary, in \HBA, when the primary\footnote{The pre-determined node in \HBA ~is called the pioneer. See Section \ref{section:fastba}.} is malicious and does not broadcast the valid messages, 
the honest nodes will initiate \RBA, whose leader is selected by a verifiable random function.
In this case, the attacker cannot predict who will be the leader, so the protocol terminates in the constant time in expectation.
Precisely, when \RBA ~is initiated, all the honest terminate on some values in $8 \lambda$.


Hybrid consensus \cite{PassS17} proposed by Pass and Shi is a responsive blockchain protocol, where the responsiveness relies on the underlying Byzantine fault tolerance (BFT) protocol.
Briefly speaking, the participants of the underlying BFT is selected by the permissionless Nakamoto consensus since the consistency of blockchain guarantees that every honest party agrees on the same set of participants.
Hence, \HBA ~can also be adopted as the underlying BFT.

\subsection{Technical Overview of \RBA ~and \HBA}
In this paper, we propose two BA protocols. Both of them achieve partition-resilience and tolerate up to $1/3$ corruptions.
The first protocol, robust Byzantine agreement (\RBA), achieves the strongly fair validity. 
The second protocol, hybrid Byzantine agreement (\HBA), achieves responsiveness and the weakly fair validity.
In the following paragraphs, we highlight the insights on how these protocols achieve these properties.
For convenience, we set the threshold of a supermajority to be $2t+1$ out of total population $3t+1$, where $t$ is the number of Byzantine nodes.

\paragraph{Agreement}
\RBA~is a leader based protocol.
The leader is elected by the pseudorandom value of a verifiable random function, which we called a credential.
At the beginning of \RBA, each node proposes its value and the credential for being a leader.

Then, each iteration consists of two phases of voting:
In the first phase voting, nodes identify the leader by comparing their credential and vote on the leader's value.
If an honest node receives a supermajority of votes for the same value, 
the node locks the value.
In the second phase voting\footnote{In Section \ref{section:RBA}, we call the first phase voting \emph{pre-commit message} and call the second phase voting \emph{commit message}.}, the nodes vote for the locked values.

If a node locks on some value, it will always vote for the locked value in the following iterations,
unless the locked value is updated.
A node only locks one value and updates its locked value if it receives a supermajority of votes for the same value in the first phase in the future iteration.

A node terminates if a supermajority of votes for the same value in the second phase.
This means that there is a supermajority of nodes locks on the value.

\paragraph{Partition-Resilience}
We design two mechanisms to achieve this.
First, at any time, at most one value can be locked by a supermajority of nodes.
Once the supermajority of node locks on a value, all honest nodes in the supermajority will only vote on the value for the first phase in the following iterations.
Then, it is impossible that a new value will be locked.
Hence, the honest nodes never decide on different values even if the partition exists.

Second, to ensure node can process in the same iteration even network partition sometimes happened, nodes will jump to the newest iteration if it receives a majority of votes in the first phase of that iteration.
That is, \textit{each node will update the locking value not only by the timing bound from the synchronous network but also the condition of valid votes is received asynchronously to against network partition}.

\paragraph{Responsiveness}
For \HBA, the pre-determined leader (we called \emph{pioneer}) mechanism is adopted.
Each node can know who is the pioneer by some pre-determined information before the protocol starts.
In the first iteration of \HBA, each node votes the value proposed by the pioneer \emph{immediately}.
If the pioneer is honest and the network operates normally, all the nodes simply decide on pioneer's value.
Otherwise, if no value is decided after the first iteration, every node starts \RBA ~with the initial state inherit from the first iteration.

Since the pioneer is pre-determined in the first iteration, each node decides on pioneer's value as soon as the votes in the first and second phase are enough.
In other words, the nodes work asynchronously in the first iteration, and thus,
the latency only depends on the actual network instead of the upper-bound.
Note that, there are still two voting phases in case of a network partition.

\paragraph{Fair Validity}
In \RBA, every node follows the leader's value, so we have to make sure every node has a reasonable probability of being chosen as the leader for the strongly fair validity.
The leader is chosen by the credentials from each node.
Thus, nodes have to wait for the worst-case network latency to ensure all the messages from honest nodes are received.

On the other hand, in \HBA, every node follows the pioneer's value to achieve the responsiveness,
so other node's value will not be decided if the pioneer and the network work normally.
Thus, \HBA ~only achieve weakly fair validity.
To do this, the pioneer election is done by permutation.
That is, there is a deterministic list for the order of pioneers (e.g., ranking by public key).
Suppose there are $n$ nodes joining the protocol.
In this case, the expected numbers that honest nodes' values being decided are roughly $\frac{M}{n}$ after $M$ times of the protocol.

\paragraph{Optimal Resilience}
The famous results by Dwork et al. \cite{DworkLS84} showed the impossibility of a permissioned consensus protocol even with public key infrastructure cannot tolerate 1/3 or more Byzantine corruptions in an asynchronous network.
Conceptually, suppose $n$ nodes are divided into three distinct sets: $S_1$, $S_2$ and $S_B$, where the nodes in $S_1$ and $S_2$ are honest and the ndoes in $S_B$ are malicious.
Due to the asynchronous network, the messages between $S_1$ and $S_2$ are delayed arbitrarily long.
Without loss of generality, we assume $n = 3t+1$ and $|S_1| = |S_2| = t, |S_B| = t+1$.
If the protocol only needs $2t$ nodes to proceed, the nodes in $S_B$ can send inconsistent messages to $S_1$ and $S_2$.
Then, the nodes in $S_1$ and $S_2$ may agree on the different values, respectively, so the agreement breaks.
If the protocol needs more than $2t$ nodes to proceeds, the nodes in $S_B$ just crash.
Then, the protocol will halt forever and the termination breaks.

Thus, once the malicious users are more than $\lfloor \frac{n-1}{3}\rfloor$, either the agreement or the termination breaks.
To achieve partition-resilience, both \RBA ~and \HBA ~tolerates $1/3$ corruptions, which is the optimal according to the argument above.\\



%
\subsection{Roadmap}
In Section \ref{section:model}, we formalize our network and adversary models. 
In Section \ref{section:fairness}, we define strongly fair validity and weakly fair validity, and we also give two relevant impossibility results.
Then, the protocol and the security analysis of \RBA ~and \HBA ~are given in Section \ref{section:RBA} and Section \ref{section:fastba}, respectively.
We analyze the communication complexity of \RBA ~and \HBA ~in Section \ref{section:complexity}.
We implemented our protocols by Go language and deployed on Google Cloud Platform services. The experiment results are presented in Section \ref{section:experiment}.
We also compare the performance of \HBA ~and three other BA protocols under different network conditions by simulation in Section \ref{section:simulation}.
Finally, the main contributions are concluded in Section \ref{section:conclusion}.
The proofs in Section \ref{section:RBA} and Section \ref{section:fastba} are given in Appendix \ref{app:rba} and Appendix \ref{app:hba}, respectively.

\section{Preliminaries}
\label{section:model}

\subsection{System Model}

In this paper, we consider the authenticated setting (i.e., digital signature and public key infrastructure (PKI) exist).
We further assume that when the users register their public keys on the PKI, they cannot choose the key in favor of other users' keys.
In practice, this can be done by commit-and-reveal schemes.
The users register the hash values of their public keys on the PKI first.
After all the users have registered, they reveal the public keys.

We say the adversary is \emph{static} if the adversary has to choose which nodes are corrupted before the protocol starts.
On the contrary, we say the adversary is \emph{adaptive} if the adversary can choose which nodes are corrupted during the protocol.
The corrupted nodes are called \emph{Byzantine} and the nodes that are not corrupted are called \emph{honest}. 
A Byzantine node can deviate from the protocol arbitrarily; it can engage in problematic malfunctions such as sending conflicting messages, violating algorithm criteria, delaying the messages between other nodes, and so on. We also assume the adversary has full control of the network. 
The adversary can learn all the messages delivered on the network and determine the delay and the order of the delivered messages.

We say a network is \emph{synchronous} if there exists a known time bound $\lambda$.
We say the network is \emph{partitioned} if the messages between the honest nodes are delayed such that the delivering time exceeds $\lambda$. 
A network is \emph{recovered from partition} if all nodes receive all the previous messages which should be delivered and the delay of the message is lower than $\lambda$.
We say a network is \emph{asynchronous} if time bound doesn't exist.

\subsection{Terminology}
Let $X$ be a random variable. The expectation value of $X$ is denoted as $\mathbb{E}[X]$.

A function $f$ is \emph{negligible} if for all polynomial $p$, there exists an integer $N$ such that for all integers $n > N$, it holds that $f(n) < \frac{1}{p(n)}$.

Given a security parameter $\kappa$, a protocol $P_0$ and a protocol $P_1$, we say $P_0$ is \emph{indistinguishable} from $P_1$, if for all polynomial-time distinguishers $D$, there is a negligible function $f$ such that \[
\underset{b\leftarrow \{0,1\}}{\Pr}[b=b'|D \text{ operates in } P_b(\kappa) \text{ and } D \text{ outputs } b'] \leq f(\kappa).
\]

\subsection{Verifiable Random Function}
The verifiable random function (VRF), introduced by Micali, Rabin and Vadhan \cite{vrf}, is a type of pseudorandom function by which anyone can verify the validity of the function evaluation from public information. 
The formal definition is given in Appendix \ref{appvrf}.

\section{Fairness}
\label{section:fairness}
\subsection{Definition}
Let $\kappa$ be the security parameter and $n$ be the number of total nodes joining Byzantine agreements, where each node $q_i$ starts with the initial value $v_i$.
Let $\mathcal{H}$ be the set of honest nodes at the end of the Byzantine agreement.

\begin{defi}[strongly fair validity]
A Byzantine agreement achieves \emph{strongly fair validity} for a set of adversaries $\mathcal{A}$ if for all adversaries in $\mathcal{A}$, there exists a negligible function $\negl$ such that for all $q_i \in \mathcal{H}$, it holds that
\begin{align}
    \Pr[v_i \text{ is the decided by some honest node}] \geq \frac{1}{n}-\negl(\kappa).
\end{align}
\end{defi}

In practice, if we apply the Byzantine agreement to blockchains, the Byzantine agreement may be executed many times, once for each block. 
Hence, we propose a weaker version of fairness, called \emph{weakly fair validity}.
Intuitively, it guarantees the lower-bound of the expected numbers that honest nodes’ values being decided if the protocol is executed many times.

\begin{defi}[weakly fair validity]
Suppose we execute a Byzantine agreement $M$ times.
Let $X_{ij}$ be a binary random variable such that $X_{ij} = 1$ if $q_i$'s initial value is decided by some honest node in $j$-th time; otherwise, $X_{ij} = 0$.
Let $\mathcal{H}_M$ be the set of honest nodes when the Byzantine agreement has to be executed $M$ times.
Then, we say the Byzantine agreement achieves \emph{weakly fair validity} for a set of adversaries $\mathcal{A}$ if, for all adversaries in $\mathcal{A}$, there exists a negligible function $\negl$ such that for all $q_i \in \mathcal{H}_M$, it holds that
\begin{align}
    \mathbb{E}[X_i] \geq \lfloor\frac{M}{n}\rfloor-\negl(\kappa),
\end{align}
where $X_i = \sum_{j=1}^M X_{ij}$.
\end{defi}

Obviously, a BA with strongly fair validity must be a BA with weakly fair validity.

\subsection{Impossibility of fair Byzantine agreements}
In this section, we prove two impossibilities of fair Byzantine agreements.

\begin{theorem}\label{theorem:asynValidity}
In an asynchronous network, no Byzantine agreement tolerating some Byzantine nodes can achieve agreement, termination and weakly fair validity at the same time.
\end{theorem}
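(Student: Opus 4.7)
The plan is to reduce the impossibility to the standard asynchronous fact that the adversary can indefinitely postpone all messages originating from a single honest node, forcing the remaining parties to decide without ever observing that node's input. I would proceed by contradiction against weakly fair validity: the silenced honest node's input can then influence the decision only by accident, which is too rare to match the $\lfloor M/n\rfloor$ lower bound on $\mathbb{E}[X_i]$.

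First, I would assume that some protocol $\Pi$ achieves agreement, termination and weakly fair validity in an asynchronous network while tolerating at least one Byzantine fault, and fix an arbitrary honest party $q_i$. Then I would introduce the adversary $\mathcal{A}_i$ that corrupts nobody but exploits its asynchronous control of the network to delay every outgoing message of $q_i$ until after all other honest nodes have terminated, while delivering the messages among the remaining nodes under some valid asynchronous schedule. Because $\Pi$ tolerates $t \ge 1$ faults, its termination property still applies here, so the $n-1$ other honest nodes each decide at a finite time on a common value $v^{*}$ (common by agreement), without ever having processed a message from $q_i$.

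Second, I would establish that $v^{*}$ cannot depend on $v_i$. The argument is a coupling between two executions of $\Pi$ that share the same setup, random tapes, and adversary schedule, but in which $q_i$ is initialized to $v_i$ in one and to any alternative $v'_i$ in the other; because $\mathcal{A}_i$ delivers no message of $q_i$ before the other nodes terminate, every $q_j$ with $j\neq i$ has the \emph{same} view in both executions and therefore outputs the same $v^{*}$. Consequently $v^{*}$ is determined purely by $\{v_j:j\neq i\}$, the setup, and the adversary's coins. Picking the initial values $v_1,\dots,v_n$ distinct from a sufficiently large input domain and choosing $v_i$ to differ from $v^{*}$ (possible precisely because $v^{*}$ does not depend on $v_i$), agreement across the coupled experiment forces $q_i$ itself, once finally unblocked, to output $v^{*}\neq v_i$ as well.

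Third, I would apply $\mathcal{A}_i$ identically in every one of the $M$ repetitions underlying the definition of weakly fair validity, obtaining $X_{ij}=0$ in every repetition and hence $\mathbb{E}[X_i]=0$. For any $M\ge n$ the bound demanded by weakly fair validity becomes $\mathbb{E}[X_i]\ge 1-\negl(\kappa)$, which is violated by a constant gap, producing the desired contradiction. The main technical obstacle I expect is the rigorous independence claim in the second step: for adaptive and possibly randomized protocols, one has to be explicit about the order of quantifiers between the adversary's coins and $q_i$'s input, and argue that the schedule produced by $\mathcal{A}_i$ does not secretly depend on $v_i$; a clean coupling under shared randomness is the natural device. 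A secondary but immediate point is to verify that $\mathcal{A}_i$ lies in whatever class $\mathcal{A}$ the paper quantifies over, which is automatic since it corrupts nobody and only manipulates the asynchronous network.
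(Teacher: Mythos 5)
Your proposal is correct and follows essentially the same route as the paper: isolate a single honest node by asynchronously delaying all of its messages, observe that the remaining $n-1$ nodes must still terminate because this execution is indistinguishable from one in which that node is Byzantine and silent, and conclude that the isolated node's value can never be decided, contradicting the $\lfloor M/n\rfloor$ bound. The paper's own proof is only a two-sentence sketch of this dilemma (wait and lose termination, or do not wait and lose fairness); your coupling argument and the explicit accounting over the $M$ repetitions supply the rigor that the paper omits.
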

\begin{proof}
Let $n$ be the number of total nodes joining Byzantine agreements.
We divide $n$ nodes into two sets: $n-1$ nodes in the first set $S_1$ and one node in the second set $S_2$.
Due to the asynchronous network, the delay between two sets can be arbitrarily long while the messages delivered in the same set arrive immediately.
In this case, the nodes in $S_1$ cannot distinguish whether the node in $S_2$ is Byzantine or the network is partitioned.
If the nodes in $S_1$ wait for the messages from $S_2$, the termination fails because the node in $S_2$ may be Byzantine.
If the nodes in $S_1$ do not wait for the messages from $S_2$, the weakly fair validity fails because the initial value of the node in $S_2$ will not be considered in the protocol.
\end{proof}

Theorem \ref{theorem:asynValidity} rules out the possibility that constructing an asynchronous Byzantine agreement to achieve fairness. 
That is why we construct \RBA~in section \ref{section:RBA} to achieve fairness.

Since the latency of typical synchronous Byzantine agreements is a multiple of the worst-case bound of network latency.
The latency of fair Byzantine agreements is also bounded by the worst-case network latency.
Can we construct a responsive Byzantine agreement that achieves
fairness?
We prove the impossibility of this question in next theorem.
\begin{theorem}\label{theorem:strongly}
Responsive synchronous Byzantine agreements cannot achieve strongly fair validity.
\end{theorem}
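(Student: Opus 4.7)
The plan is to derive a contradiction by exhibiting an adversarial schedule under which a responsive synchronous BA $\Pi$ cannot possibly decide the initial value of some targeted honest node with probability close to $1/n$. The key leverage is the definition of responsiveness: the termination time of $\Pi$ is bounded by some function $T(\delta)$ of the actual delay alone, independent of $\lambda$. So I would fix parameters with $\delta$ small enough that $T(\delta) < \lambda$, single out one honest node $q^{*}$, and construct an adversary that delays every outgoing message from $q^{*}$ by exactly $\lambda$ while letting all other honest-to-honest messages be delivered with delay $\delta$. This schedule is admissible for a synchronous network and is moreover indistinguishable to the other nodes from the case in which $q^{*}$ is simply a slow or silent Byzantine participant, which $\Pi$ must tolerate since $t \geq 1$.

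Next I would set $q^{*}$'s initial value $v^{*}$ to be drawn from a super-polynomial domain (or explicitly placed outside the initial values of every other participant, which is legal because initial values are worst-case inputs). Because $\Pi$ is responsive, every honest node except $q^{*}$ terminates strictly before any message from $q^{*}$ arrives, so the joint transcript of those $n-1$ nodes up to termination is perfectly simulable without any knowledge of $v^{*}$. Consequently their output is a random variable independent of $v^{*}$.

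I would then close the argument by combining validity and agreement. Validity requires the decided value to be some participant's initial value; the $n-1$ non-$q^{*}$ honest nodes never see $v^{*}$ and the Byzantine nodes would have to plant $v^{*}$ as their own initial value, which requires guessing it from a super-polynomial domain. Hence, except with negligible probability, the common decision among these $n-1$ nodes is some $v \neq v^{*}$; by agreement $q^{*}$ must also output $v$ when it terminates, so no honest node decides $v^{*}$ with more than negligible probability. This falls below the $\tfrac{1}{n} - \negl(\kappa)$ bound required by strongly fair validity, yielding the contradiction.

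The main obstacle I expect is justifying that the adversary can indeed force $\Pi$ to terminate without receiving anything from $q^{*}$; one must show that responsiveness really does preclude $\Pi$ from waiting $\lambda$ time for $q^{*}$'s message, even when $q^{*}$ is honest. This is exactly handled by the indistinguishability above: if $\Pi$ were to wait for $q^{*}$ whenever $q^{*}$ is slow, it would also have to wait whenever $q^{*}$ is silently Byzantine, which would make the termination time depend on $\lambda$ rather than on $\delta$, contradicting responsiveness. A secondary subtlety is that validity in the paper only asks the output to equal some proposed value, so the domain-size argument above is needed to rule out accidental collisions with $v^{*}$; for protocols where values are short (e.g., bits), the same argument applies with minor modifications by instead targeting the event that $v^{*}$ is a fresh value not duplicated among other nodes.
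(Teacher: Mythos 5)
Your proposal is correct and follows essentially the same route as the paper's proof: delay the targeted honest node's outgoing messages to just under the synchrony bound $\lambda$, observe that responsiveness forces every other node to decide before anything from that node arrives, and conclude that its value is decided with at most negligible probability, contradicting the $\frac{1}{n}-\negl(\kappa)$ bound. You are in fact more careful than the paper on two points it leaves implicit --- the indistinguishability of the slow honest node from a silent Byzantine node (which is what actually forbids a responsive protocol from waiting for $q^{*}$, and quietly requires $t\geq 1$), and the large-domain argument ruling out an accidental coincidence of the decided value with $v^{*}$.
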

\begin{proof}
We prove this theorem by contradiction. 
Assume a responsive synchronous Byzantine agreement achieves strongly fair validity.
Let $u$ be an honest node in the responsive synchronous Byzantine agreement and the latency of message sent from $u$ to other nodes be always in $(\lambda - \epsilon, \lambda)$, where $\epsilon$ is an arbitrary positive number less than $\lambda$.
If the decided time be in $(0,\lambda - \epsilon)$,
the message of the honest node $u$ has zero probability of being decided by the responsive Byzantine agreement.
This violates the definition of strongly fair validity.
Otherwise, the decided time is larger than $\lambda - \epsilon$.
This violates the definition of responsiveness.
\end{proof}

Thus, we construct a responsive Byzantine agreement \HBA~to achieve weakly fair in section \ref{section:fastba} as an example that a responsive Byzantine agreement can achieve weakly fairness.

\section{Robust Byzantine Agreement}
\label{section:RBA}
In this section, we introduce our first Byzantine agreement protocol with partition-resilience, strongly fair validity tolerating up to $1/3$ corruptions.
We call the protocol \emph{robust Byzantine agreement} ({\sf RBA}).


\subsection{Protocol}
\label{subsection:baprotocol}

Let $S$ be the set of all nodes and $n$ be the size of $S$.
Let $\tmax = \lfloor (n-1)/3 \rfloor$ and $V$ denote the set of values that can be decided. We also define two special values $\bot$ and $\SKIP$ that are not in $V$. 
For each node $q \in S$, $q$ has four internal variables: $r_q$ records the index of the iteration at which $q$ is working, $\lockvalue$ records the candidate value that $q$ supports, $\lockround$ records the index of the iteration from which $\lockvalue$ comes and $\clock$ is $q$'s local clock. Let $sk_q$ and $pk_q$ denote the secret key and public key of $q$, respectively. 

Let $F$ be a verifiable random function (VRF).
We write \[
\langle y,\pi \rangle \leftarrow F_{sk}(m)
\]
to denote the output of $F$ on the message $m$ with the secret key $sk$, where $y$ is the pseudorandom value and $\pi$ is the proof for $y$.
We define $\status$ to be the pre-determined public information, for example, the public key of all nodes in BA or the height of blocks in blockchain.

\paragraph{Message types}
We define three kinds of messages:
\begin{enumerate}
    \item the \emph{initial message} of the node $q$: $\left(\init,v_q,q,\langle y_q,\pi_q \rangle \right)$, where $\langle y_q,\pi_q \rangle \leftarrow F_{sk_q}(\status)$
    \item the \emph{pre-commit message} of the value $v$ from the node $q$ at the iteration $r$: $\left(\pre,v,q,r\right)$
    \item the \emph{commit message} of the value $v$ from the node $q$ at the iteration $r$: $\left(\com,v,q,r\right)$
\end{enumerate}
We assume all messages are protected by the digital signature, so the authentication of the messages hold.

\paragraph{Leader election}
With these notations, we introduce the leader election algorithm which will be a subroutine of \RBA. Let $M_q$ denote the set of initial messages that the node $q$ receives from all the nodes (including $q$ itself). The node $q$ verifies the VRF value $\langle y_j,\pi_j \rangle$ in $M_q$ and sets $U_q$ to be the set of nodes whose VRF values are valid. Then, $q$ computes \[
\ell_q = \argmin_{j\in U_q} y_j.
\]
We say $\ell_q$ is the \emph{leader} of $q$.

\paragraph{Updating internal variables}
Suppose a node $q$ is working at iteration $r_q$. The node $q$ updates its internal variables \emph{as soon as} one of the following conditions holds:
\begin{enumerate}
    \item (lock condition) If node $q$ has seen $2\tmax +1$ pre-commit messages of the same value $v \in V \cup \{\bot\}$ at the same iteration $r$ such that $r = r_q$, $q$ sets $\lockvalue = v$ and $\lockround = r$.
    \item (forward condition) If node $q$ has seen $2\tmax +1$ pre-commit messages of the same value $v \in V \cup \{\bot\}$ at the same iteration $r$ such that $r > r_q$, $q$ sets $\clock = 2\lambda$ and starts the iteration $r$ from Step 2.
    \item (forward condition) If the node $q$ has seen $2\tmax +1$ commit messages of any value at the same iteration $r$ such that $r \geq r_q$, $q$ sets $\clock = 2\lambda$ and starts the iteration $r+1$ from Step 2. 
\end{enumerate}
We say that node $q$ achieves the \emph{lock condition}, if the condition 1 holds.
We say that node $q$ achieves the \emph{forward condition} if the condition 2 or the condition 3 holds. Node $q$ goes into the next iteration immediately if it achieves the forward condition even if it does not achieve the forward condition at Step 4.

\paragraph{Protocol description}
\RBA~(Algorithm \ref{algorithm:ba}) is an iteration-based protocol. 
Initially, for all honest nodes $q \in S$, $q$ initializes its internal variables by $r_q = 1$, $\lockvalue = \SKIP$, $\lockround = -1$ and $\clock = 0$ and also chooses its initial value $v_q \in V$. 

Our protocol has four steps in each iteration. At Step 1, all the nodes broadcast their own initial value $v_q$ in the format $(\init,v_q,q,\langle y_q,\pi_q \rangle)$. 

When $\clock = 2\lambda$, $q$ enters Step 2. If $\lockvalue \in \{\SKIP,\bot\}$, $q$ verifies the initial messages it receives and computes the set $U_q$ of nodes whose VRF values are valid. If $U_q \neq \emptyset$, $q$ identifies its leader $\ell_q$ and pre-commits $\ell_q$'s value; otherwise, $q$ pre-commits $\bot$. If $\lockvalue \not \in \{\SKIP,\bot\}$, node $q$ pre-commits $\lockvalue$. We say node $q$ pre-commits on a value $v$ if node $q$ broadcasts the message $(\pre,v,q,r)$ where $r$ is the iteration index that node $q$ is working at.
Note that node $q$ updates its $\lockvalue$ and $\lockround$ immediately if the lock condition holds.

When $\clock = 4\lambda$, node $q$ enters Step 3. 
Node $q$ commits on its current $\lockvalue$.
We say node $q$ commits on a value $v$ if node $q$ broadcasts the message $(\com,v,q,r)$ where $r$ is the iteration index that $q$ is working at. After node $q$ broadcasts the commit message, $q$ enters Step 4, at which $q$ waits for the forward conditions.

\paragraph{Termination condition}
Node $q$ \emph{decides} on a value $v$ \emph{as soon as} node $q$ has seen $2\tmax +1$ commit messages of the same value $v \in V \cup \{\bot\}$ at the same iteration $r$.

The protocol for a node $q$ is summarized as Algorithm \ref{algorithm:ba}.
\begin{algorithm}[!htbp]
\caption{Robust Byzantine Agreement for node $q$}
\label{algorithm:ba}
    \Input{an initial value $v_q \in V$ from node $q$ and the public key $\{pk_q\}$ from all nodes}
    \Output{an agreed value $v_{fin}\in V \cup \{\bot\}$ from some node}
  Initialize $r_q = 1$, $\lockvalue = \SKIP$, $\lockround = -1$ and $\clock = 0$
  
  \Step { 1: when $\clock = 0$,}{
    broadcast$(\init,v_q,q,\langle y_q,\pi_q \rangle)$
  }
  \Step { 2: when $\clock = 2\lambda$, }{
    \If{$\lockvalue \in \{\SKIP,\bot\}$ and $U_q \neq \emptyset$}{
      node $q$ identifies its leader $\ell_q$ at $q$'s current view\\
      broadcast$(\pre,v_{\ell_q},q,r_q)$
    }
    \ElseIf{$\lockvalue \in \{\SKIP,\bot\}$ and $U_q = \emptyset$}{
      broadcast$(\pre,\bot,q,r_q)$
    }
    \Else{broadcast$(\pre,\lockvalue,q,r_q)$}
  }
  \Step { 3: when $\clock = 4\lambda$,}{
      broadcast$(\com,\lockvalue,q,r_q)$
  }
  \Step { 4: when $\clock  \in (4\lambda,\infty)$}{
    wait until the \emph{forward condition} is achieved
  }
\end{algorithm}

\subsection{Agreement}
\label{subsection:agreement}
We first show that our protocol will reach agreement; that is, two honest nodes never decide on the different values.

\begin{lemma}\label{lemma:agreement1}
Assume $t \leq \tmax$. Suppose a node $p$ receives $2\tmax +1$ commit messages of $v_p$ and another node $q$ receives $2\tmax +1$ commit messages of $v_q$. If both these $2\tmax +1$ commit messages all come from the iteration $r$, then $v_p = v_q$.
\end{lemma}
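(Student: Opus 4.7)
The proof is a standard quorum-intersection argument, and I would organize it in three short steps.

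First, I would count the intersection of the two commit quorums. Since the total population has size $n \le 3\tmax+1$, any two subsets of size $2\tmax+1$ must intersect in at least $(2\tmax+1)+(2\tmax+1)-(3\tmax+1) = \tmax+1$ nodes. Let $I$ denote the set of nodes that appear in both the $2\tmax+1$ commit messages for $v_p$ and the $2\tmax+1$ commit messages for $v_q$, all at iteration $r$.

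Second, I would use the assumption $t\le\tmax$ to argue that $I$ contains an honest node. Because $|I|\ge\tmax+1$ and at most $\tmax$ of the nodes are Byzantine, at least one node $h\in I$ is honest. Since messages are signed (authenticated setting in Section~\ref{section:model}), a Byzantine adversary cannot forge a commit message from $h$, so the two commit messages attributed to $h$ were genuinely broadcast by $h$.

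Third, I would invoke the protocol specification to conclude $v_p=v_q$. By the description of Step~3 in Algorithm~\ref{algorithm:ba}, an honest node, upon entering iteration $r$, broadcasts exactly one commit message $(\com,\lockvalue,h,r)$ where $\lockvalue$ is the locked value held at that moment, and does not broadcast another commit message in iteration $r$ afterwards. Thus the commit message from $h$ at iteration $r$ contains a unique value, which forces $v_p=v_q$.

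The step that needs the most care is the third one: I must make sure the protocol really guarantees that an honest node sends at most one commit message per iteration, rather than, say, re-broadcasting a new commit after its $\lockvalue$ is updated by the lock condition. Reading Step~3 and the forward/lock conditions, Step~3 is executed once per iteration (when $\clock=4\lambda$) and subsequent updates to $\lockvalue$ either stay within the same iteration without retriggering Step~3, or advance $q$ to a later iteration via the forward condition. So exactly one commit per honest node per iteration is sent, and the argument closes.
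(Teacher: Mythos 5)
Your proof is correct and follows essentially the same route as the paper's: the paper also argues by quorum intersection (phrased as a pigeonhole argument) that some honest node would have to commit on both $v_p$ and $v_q$ at iteration $r$, contradicting the fact that an honest node commits on only one value per iteration. Your version merely makes explicit the intersection count and the single-commit-per-iteration guarantee from Step~3, which the paper leaves implicit.
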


\begin{theorem}[Agreement]
\label{thm:baagreement}
Assume $t \leq \tmax$ and the adversary is adaptive. Regardless of partition, if an honest node $p$ decides on some value $v_p$ and another honest node $q$ decides on some value $v_q$, then $v_p = v_q$. That is, the honest nodes will never decide on different values.
\end{theorem}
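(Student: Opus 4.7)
My plan is to dispatch same-iteration decisions directly via Lemma \ref{lemma:agreement1} and handle cross-iteration decisions by showing that $v_p$'s locked quorum survives forward in time until iteration $r_q$. If $p$ and $q$ decide at the same iteration $r$, each witnesses $2\tmax+1$ commits for its value at $r$, and Lemma \ref{lemma:agreement1} gives $v_p=v_q$ at once. Otherwise, without loss of generality $p$ decides $v_p$ at $r_p$ and $q$ decides $v_q$ at $r_q>r_p$, and the goal is to show $v_p=v_q$.

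From $p$'s $2\tmax+1$ commits for $v_p$ at $r_p$, I extract the set $A$ of nodes that were honest at $r_p$ and broadcast $(\com,v_p,\cdot,r_p)$; each such node has $\lockvalue=v_p$ at Step 3 of $r_p$. A counting argument using the cap $t\leq\tmax$ on the Byzantine budget shows that at every iteration $r\in[r_p,r_q]$, at least $\tmax+1$ members of $A$ remain honest. The main technical step is then an induction on $r\geq r_p$ establishing the invariant that at Step 3 of iteration $r$, those still-honest members of $A$ continue to hold $\lockvalue=v_p$. Given the invariant, at every iteration $r\geq r_p$ at least $\tmax+1$ honest nodes pre-commit $v_p$, so any $v'\neq v_p$ gathers at most $n-(\tmax+1)\leq 2\tmax$ pre-commits at $r$, strictly below the $2\tmax+1$ lock threshold; hence no honest node can ever fire the lock condition on a value other than $v_p$ at iteration $r\geq r_p$, which is exactly what sustains the induction. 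Applying the invariant at $r_q$ gives $\tmax+1$ still-honest members of $A$ with $\lockvalue=v_p$ at Step 3 of $r_q$, while $q$'s decision supplies $\tmax+1$ honest committers of $v_q$ with $\lockvalue=v_q$ at Step 3 of $r_q$; a quorum-intersection argument forces these two subsets of the honest population at $r_q$ to share at least one node, yielding $v_p=v_q$.

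The main obstacle is the window between Step 3 of $r_p$ (where we only know the initial locks) and Step 2 of $r_p+1$ (where the induction needs lockers to already pre-commit $v_p$), because in that interval an adaptive adversary can deliver late pre-commits or trigger the forward condition to catapult lockers to new iterations, none of which is regulated by $\lambda$. The resolution is that the invariant is self-certifying at the base: at iteration $r_p$ itself no $v'\neq v_p$ can already have accumulated $2\tmax+1$ pre-commits, because the $\tmax+1$ honest committers of $v_p$ either pre-committed $v_p$ themselves at Step 2 of $r_p$, or acquired $\lockvalue=v_p$ during $r_p$ via a $2\tmax+1$-pre-commit quorum for $v_p$ at $r_p$ — and a disjoint $2\tmax+1$-pre-commit quorum for $v'$ at the same iteration would have to share an honest sender with the $v_p$-quorum, contradicting the rule that an honest node pre-commits exactly one value per iteration. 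Once the base case is secured, the induction propagates uneventfully, and both partition-resilience and adaptivity are accommodated because the invariant is stated purely in terms of message receipt and of the protocol-wide bound $t\leq\tmax$, never in terms of the timer $\lambda$.
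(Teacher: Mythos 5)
Your overall strategy coincides with the paper's: dispatch the same-iteration case through Lemma \ref{lemma:agreement1}, then extract at least $\tmax+1$ forever-honest committers of $v_p$ at $r_p$ and show by induction that no value $v'\neq v_p$ can ever assemble a $2\tmax+1$ pre-commit quorum tagged with an iteration $\geq r_p$, so these nodes never release their lock on $v_p$. Your treatment of the base case and of the window around the forward condition is in fact more careful than the paper's one-line version of the same invariant, and the corruption-budget counting for the adaptive adversary is fine.

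The concluding step, however, contains a genuine counting error. You intersect two subsets of the honest population of size $\tmax+1$ each: the surviving members of $A$ and the honest committers of $v_q$ at $r_q$. Two sets of size $\tmax+1$ are only forced to meet inside a population of size at most $2\tmax+1$, but the honest population has size $n-t$, which can be as large as $3\tmax+1$ (e.g.\ $t=0$); so in general these two sets can be disjoint and no contradiction follows. The repair is immediate and stays within your framework: intersect the $(\tmax+1)$-set $A$ with the \emph{full} quorum of $2\tmax+1$ commit messages of $v_q$ tagged $r_q$ that $q$ receives. Since $(\tmax+1)+(2\tmax+1)=3\tmax+2>n$, some node $w\in A$ sent a commit of $v_q$ at iteration $r_q$; $w$ is forever honest, so it commits its current $\lockvalue$, and your invariant says that $\lockvalue$ is still $v_p$ at $w$'s Step~3 of $r_q$, whence $v_q=v_p$. (This route also sidesteps the question of whether the $v_q$-committers' locks were acquired before or after $r_p$, which your $(\tmax+1)$-versus-$(\tmax+1)$ intersection would otherwise have to confront.) With that one substitution the argument closes; as written, the last step does not.
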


\subsection{Termination}\label{subsection:baTermination}
We now analyze when the algorithm terminates if no partition exists or if the system recovers from a previous partition in different adversary model.

\begin{proposition}[Termination without partition in adaptive adversary]\label{proposition:terminationWOpartition}
Assume $t \leq \tmax$ and the adversary is adaptive. If all the honest nodes start at the $r$-th iteration within time $\lambda$ and no partition exists, all the honest nodes will decide on some values in $t+1$ iterations. 
\end{proposition}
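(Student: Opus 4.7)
The plan is to establish termination within $t+1$ iterations by combining a timing-synchronization invariant with a structural progress argument that bounds the number of iterations in which the Byzantine adversary can stall consensus. First, I would propagate the hypothesis across iterations: if honest nodes enter iteration $r+k$ within a window of $\lambda$ and the network remains unpartitioned, they enter iteration $r+k+1$ within $\lambda$ as well. The key observation is that whenever the earliest honest node triggers the forward condition by observing $2\tmax+1$ commit messages at iteration $r+k$, those same commit messages are delivered to every other honest node within $\lambda$, causing them to advance too. This invariant implies that at step 2 of each iteration, every honest node has already received every honest initial and pre-commit message broadcast so far, so their views differ only with respect to Byzantine contributions.

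Next, I would prove a convergence lemma: in each iteration $r+k$ in which no honest node decides, some irreversible progress toward agreement is made. The two forms of progress to track are (a) the set of honest nodes holding a non-$\SKIP$ lock value strictly grows, where by Lemma \ref{lemma:agreement1} that value is common to all such honest nodes; or (b) a Byzantine node's equivocation budget is depleted, in the sense that any further equivocation would contradict pre-commit or commit messages already broadcast by honest peers. Once a supermajority of $2\tmax+1$ honest nodes share the same locked value $v$, iteration $r+k+1$ triggers the lock condition on $v$ for every honest node, then the commit on $v$, then the forward condition on $2\tmax+1$ commits on $v$, and hence termination on $v$.

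The main obstacle I anticipate is formalizing the charging in the convergence lemma so that each of the $t$ Byzantine nodes is responsible for at most one failed iteration. The adversary has several disruption strategies: signing conflicting initial messages that share the same VRF value but carry different proposed values to different honest nodes, withholding Byzantine pre-commits to keep the total below the $2\tmax+1$ threshold, or scheduling Byzantine initial messages so that $U_q$ evolves non-uniformly across honest nodes. The bookkeeping must tie each strategy to a specific Byzantine identity and exploit the timing invariant to show that any such strategy observable to one honest node at iteration $r+k$ is observable to all honest nodes by iteration $r+k+1$, so the same strategy cannot be reused indefinitely without contradicting public broadcasts. The resulting pigeonhole argument caps the number of failed iterations at $t$, yielding termination by iteration $r+t+1$ as required.
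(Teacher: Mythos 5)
Your skeleton matches the paper's: a timing invariant keeping all honest nodes within $\lambda$ of one another across iterations, plus a pigeonhole argument charging each stalled iteration to a distinct Byzantine node. The paper carries out that charge by observing that, when no honest node is locked, every honest node pre-commits its leader's value, so the only way an iteration can fail is that some Byzantine node sends conflicting signed initial messages to different honest nodes; since honest nodes relay every initial message they receive, any such equivocation becomes common to all honest views within $\lambda$ and cannot be reused, so the adversary's best strategy is to stagger its $t$ equivocations over $t$ iterations. The propagation observation at the end of your third paragraph is precisely this resolution of the ``main obstacle'' you flag, so the crux is present even though you leave it as a plan. Note also that one of the disruption strategies you list, withholding Byzantine pre-commits, is vacuous: there are at least $2\tmax+1$ honest nodes, so once all honest nodes pre-commit the same value the lock threshold is met with no Byzantine help.

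The one genuine defect is branch (a) of your convergence lemma. Lemma \ref{lemma:agreement1} concerns $2\tmax+1$ commit messages at a common iteration and says nothing about locked values; moreover, before some value accumulates $\tmax+1$ honest commits, honest nodes may lock different values at different iterations (a node overwrites $\lockvalue$ whenever it sees a fresh pre-commit quorum), so the claim that the set of honest nodes holding a common non-$\SKIP$ lock strictly grows is neither implied by that lemma nor true in general. Fortunately it is also unnecessary: the paper replaces gradual progress with an all-or-nothing dichotomy per iteration. If any single honest node ever sees $2\tmax+1$ pre-commit messages on some $v$, it relays that quorum, every honest node locks $v$ within $\lambda$, and all decide one iteration later (the paper's Cases 1 and 2); otherwise every honest node still has $\lockvalue \in \{\SKIP,\bot\}$ and pre-commits its leader's value, and the only obstruction is a fresh initial-message equivocation, which is the resource bounded by $t$ (the paper's Case 3). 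I would drop (a) and restate your convergence lemma in this two-case form; the rest of your argument then goes through.
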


Note that the honest nodes will broadcast commit messages when $\clock = 4\lambda$. If all the honest nodes start \RBA~simultaneously, they will all receive the commit messages when $\clock = 5\lambda$. However, we allow that any two honest nodes start the protocol with at most $\lambda$ time difference. 
Thus, the node who start the protocol earliest will receive the commit messages when its local time $\clock = 6\lambda$.
That is, the best termination time of \RBA ~is $6\lambda$.

As we shown in Proposition \ref{proposition:terminationWOpartition}, all the honest nodes will terminate in $t+1$ iterations with certainty.
According to the forward conditions, nodes start the $r$-th iteration from $\clock = 2\lambda$ for all $r \geq 2$.
Hence, the first iteration costs $6\lambda$ to complete, but the following iterations only cost $4\lambda$ for each.
Thus, the $t+1$ iterations cost $(4t + 6)\lambda$ in the worst case.

\begin{proposition}[Expected termination in static adversary]
\label{proposition:terminationExpected}
Assume $t \leq \tmax$ and the adversary is static.
Suppose all honest nodes start at $r$-th iteration within time $\lambda$ and no partition exists. Then, it is expected that all honest nodes will decide on some values in $8$ rounds.
\end{proposition}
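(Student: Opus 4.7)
The plan is to use the pseudorandomness of the VRF against a static adversary to show that each iteration succeeds with probability at least $p = 2/3$, and then apply a geometric-tail bound to obtain expected time $\le 8\lambda$. The key probabilistic input is that, since the adversary is static, the corrupted set is fixed before the commit-and-reveal PKI completes, hence before any VRF outputs are determined; by pseudorandomness, the outputs $\{y_q\}_{q\in S}$ are computationally indistinguishable from uniform independent strings, so $\arg\min_{q \in S} y_q$ is distributed uniformly over $S$ up to a negligible gap. Consequently, the global minimum VRF value is produced by an honest node with probability at least $(n-t)/n \ge 2/3$.

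Conditioned on the event that the global minimum is produced by an honest node $\ell$, I would show that the relevant iteration terminates. For iteration $1$ this is cleanest: all honest nodes are unlocked (initial state $\lockvalue = \SKIP$), and since there is no partition every honest $q$ receives $\ell$'s initial message by Step 2; because every Byzantine VRF output present in $U_q$ is $\ge y_\ell$, we have $\ell_q = \ell$ for every honest $q$, so every honest node pre-commits $v_\ell$. The $\ge 2\tmax + 1$ honest pre-commits for $v_\ell$ trigger the lock condition at every honest node; Step 3 then yields $\ge 2\tmax + 1$ commit messages for $v_\ell$, and every honest node decides $v_\ell$.

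Treating each iteration as an independent Bernoulli trial with success probability $p \ge 2/3$, the expected iteration count is $\le 1/p \le 3/2$. Following Proposition \ref{proposition:terminationWOpartition}, iteration $1$ consumes $6\lambda$ (allowing for the $\lambda$ start-time skew), while each subsequent iteration begins at $\clock = 2\lambda$ via the forward condition and takes $4\lambda$. Hence
\[
\mathbb{E}[T] \;\le\; 6\lambda + 4\lambda \cdot \frac{1-p}{p} \;\le\; 6\lambda + 4\lambda \cdot \tfrac{1}{2} \;=\; 8\lambda.
\]

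The main obstacle is formally justifying the Bernoulli-trial treatment of iterations beyond the first: because the VRF outputs are fixed once and for all, the global minimum does not change between iterations, so if it was Byzantine at iteration $1$ it remains Byzantine at every iteration, and naive independence across iterations fails. The resolution is a coupling argument against the adversary's only remaining degrees of freedom, namely its revelation schedule for Byzantine initial messages and the content of its pre-commit/commit messages, showing that conditioned on reaching iteration $r$ the probability that some honest node wins the effective leader contest among the currently visible VRFs is still $\ge 2/3$, possibly after accounting for the constraints imposed on honest locks by Lemma \ref{lemma:agreement1} and Theorem \ref{thm:baagreement}.
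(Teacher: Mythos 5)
Your proposal matches the paper's argument: an honest effective leader makes the iteration decide (your iteration-1 analysis is the paper's Case 3), the per-iteration failure probability is at most $t/n\le 1/3$, and the cost accounting is $6\lambda$ for the first iteration plus $4\lambda$ per additional one, giving $6+4\cdot\frac{1/3}{2/3}=8$ rounds. The ``coupling argument'' you defer as the main obstacle is exactly how the paper closes the gap: since a Byzantine node can reveal its initial message only once without being caught, and each new partial revelation must undercut every already-propagated credential, interfering for $k$ consecutive iterations forces the $k$ smallest VRF values to all be Byzantine, an event of probability $\prod_{i=0}^{k-1}\frac{t-i}{n-i}\le\left(\frac{t}{n}\right)^k$, which supplies precisely the geometric domination your Bernoulli treatment needs.
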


\begin{proposition}[Fast recovery from partition in adaptive adversary]
\label{proposition:partitionResolved}
Assume $t \leq \tmax$ and the adversary is adaptive.
If the partition is resolved, all the honest nodes will decide on some values in $t+2$ iteration. 
If the adversary is static, it is to be expected that all honest nodes will decide on some values in $12$ rounds.
\end{proposition}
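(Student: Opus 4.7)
My plan is to reduce the post-partition analysis to the no-partition setting by establishing a short catch-up phase, after which Proposition \ref{proposition:terminationWOpartition} (and Proposition \ref{proposition:terminationExpected} in the static case) applies directly. Let $r^{*}$ denote the maximum iteration any honest node is working at when the partition resolves. The leading honest node(s) reached $r^{*}$ by invoking a forward condition, which required observing $2\tmax+1$ pre-commit messages at iteration $r^{*}$ or $2\tmax+1$ commit messages at iteration $r^{*}-1$; since at most $\tmax$ of these messages originate from Byzantine nodes, at least $\tmax+1$ honest nodes have already broadcast pre-commit or commit messages for the relevant iteration. Once the partition resolves, every such honest broadcast is delivered to every honest node within time $\lambda$. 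I would use this to show that, within one additional iteration ($4\lambda$ time), the combined effect of the delayed honest broadcasts and the leading nodes' continued activity at iteration $r^{*}$ causes every trailing honest node to satisfy a forward condition and jump to iteration $\geq r^{*}$. At this point, all honest nodes occupy a common iteration within $\lambda$ of each other, which is exactly the hypothesis of Proposition \ref{proposition:terminationWOpartition}.

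Once the honest nodes are resynchronized, Proposition \ref{proposition:terminationWOpartition} guarantees decision within $t+1$ further iterations against an adaptive adversary, and adding the single catch-up iteration yields $t+2$ iterations in total. For the static adversary case, Proposition \ref{proposition:terminationExpected} yields an expected $8$ rounds once resynchronization is achieved; combined with the $4$ rounds spent catching up, the expected termination time is $12$ rounds, matching the claim.

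The main obstacle is formalizing the catch-up phase under Byzantine equivocation. During the partition, Byzantine nodes may have selectively sent messages only to the leading honest node, so those messages are never delivered to trailing honest nodes upon resolution. I must argue carefully that the guaranteed-to-be-delivered messages---broadcasts from at least $\tmax+1$ honest nodes at iteration $r^{*}-1$ or $r^{*}$, together with the leading nodes' subsequent honest broadcasts at iteration $r^{*}$---suffice to trigger a forward condition for every trailing honest node within one iteration. I also need to verify that the resynchronization preserves the lock-value invariants used in the proof of Theorem \ref{thm:baagreement}, so that applying Proposition \ref{proposition:terminationWOpartition} afterward does not conflict with the agreement property already established.
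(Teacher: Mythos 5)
Your proposal follows essentially the same route as the paper: identify the leading honest node's iteration, argue that the forward-condition certificate reaches every trailing node within $\lambda$ of the partition resolving so that all honest nodes resynchronize within one extra iteration, and then invoke Proposition \ref{proposition:terminationWOpartition} (resp.\ Proposition \ref{proposition:terminationExpected}) for the remaining $t+1$ iterations (resp.\ expected $8$ rounds), giving $t+2$ iterations and $8+4=12$ rounds in total. The obstacle you flag --- Byzantine messages having been delivered only to the leading node --- is handled in the paper by the standing assumption that honest nodes propagate every message they receive, so the leading node re-gossips the full $2\tmax+1$-message certificate (not merely the $\tmax+1$ honest contributions), and this, rather than the honest broadcasts alone, is what triggers the trailing nodes' forward conditions.
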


\subsection{Strongly Fair Validity}
\label{subsection:strongfair}
In this section, we prove that \RBA ~achieves strongly fair validity.
Intuitively, when the network operates synchronously, every honest nodes receives all the initial messages from each other before $2\lambda$.
Then, as long as the underlying VRF is secure, the probability of being the leader is approximate the uniform distribution, so \RBA ~achieves strongly fair validity.
The result is formalized as the following theorem.

\begin{theorem}[strongly fair validity]
\label{thm:rbafair}
Suppose the network is synchronous and $F$ is a secure VRF.
Then, \RBA ~achieves strongly fair validity under the assumption of static adversary.
\end{theorem}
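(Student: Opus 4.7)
My plan is to isolate a single ``good event'' $E_i$ with two properties: (a) $E_i$ forces every honest node to decide $v_i$ in the first iteration, and (b) $\Pr[E_i]\ge 1/n-\negl(\kappa)$. The natural choice is $E_i=\{\,y_i<y_j\text{ for every other node }q_j\in S\,\}$, where the $y_j$'s are the VRF outputs on input $\status$.

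To verify (a) I would exploit synchrony. At $\clock=0$ every honest node broadcasts its initial message; by $\clock=\lambda$ every honest node has received $q_i$'s initial message with its VRF proof, so $q_i\in U_q$ for every honest $q$. Under $E_i$, every other $q_j\in U_q$ has $y_j>y_i$, so $\ell_q=q_i$ when Step~2 executes at $\clock=2\lambda$. Since in the first iteration $\lockvalue=\SKIP$ and $U_q$ is non-empty, every honest node broadcasts $(\pre,v_i,q,1)$. Because $t\le\tmax$ there are at least $2\tmax+1$ honest nodes, so by $\clock=3\lambda$ every honest $q$ has received $2\tmax+1$ pre-commits on $v_i$, triggering the lock condition and setting $\lockvalue=v_i$. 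At $\clock=4\lambda$ every honest node broadcasts $(\com,v_i,q,1)$; within one more delay each collects $2\tmax+1$ commits on $v_i$ and decides $v_i$.

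The heart of the argument is (b). I would set up a hybrid experiment that replaces the VRF outputs of honest nodes, one at a time, by independent uniform values in the VRF range. Pseudorandomness of $F$ bounds the distinguishing advantage between consecutive hybrids by a negligible function; the reduction embeds the VRF challenge key into the targeted honest slot, which is possible because the adversary is static (corruption is fixed before keys are used) and because the commit-and-reveal PKI from Section~\ref{section:model} prevents a Byzantine key from being chosen as a biased function of an honest one. In the final hybrid the honest outputs are i.i.d.\ uniform, the Byzantine outputs are fixed independently of them, and elementary combinatorics gives $\Pr[E_i]=1/n$ up to a negligible collision term (the range has size exponential in $\kappa$). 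Undoing the hybrids gives (b), and combining with (a) yields the theorem.

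I expect the pseudorandomness reduction to be the delicate step: the simulator must continue to run the rest of the protocol (further VRF evaluations in subsequent iterations, all voting and locking steps) without knowing the embedded secret key, and one has to handle the fact that Byzantine VRF outputs are \emph{not} pseudorandom, yet can still be treated as values chosen independently of the honest outputs. The commit-and-reveal PKI together with the static-adversary assumption are precisely what make this work, but writing out the simulator and accounting for the (negligible) tie probabilities requires some care. Everything else is essentially bookkeeping once $E_i$ has been isolated.
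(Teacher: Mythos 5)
Your overall architecture coincides with the paper's: the good event is ``node $i$ has the smallest VRF value,'' synchrony turns that event into a guaranteed decision on $v_i$ in the first iteration (your part (a) is exactly the paper's Lemma~\ref{lemma:idealworld} on fairness in the ideal world), and a node-by-node hybrid over the pseudorandomness of $F$ transports the probability bound from an idealized experiment to \RBA. Part (a) is fine.

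The gap is in the terminal hybrid of part (b). You replace only the \emph{honest} nodes' outputs by i.i.d.\ uniform values, keep the Byzantine outputs as ``values chosen independently of the honest outputs,'' and then claim $\Pr[E_i]=1/n$ by elementary combinatorics. That combinatorial step is false: exchangeability of all $n$ values is what gives $1/n$, and a fixed (possibly small) Byzantine value destroys it. Concretely, with $n=2$, one honest node with $y_1$ uniform on $\{0,1\}^{\ell}$ and one Byzantine node whose committed key happens to yield a value in the bottom tenth of the range, $\Pr[E_1]\approx 1/10$, not $1/2$; nothing in the standard VRF definition (whose pseudorandomness is stated only for honestly generated keys) or in commit-and-reveal registration by itself excludes an adversary that commits to a degenerate key whose output on every input is small. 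The paper takes the other horn: its ideal world routes \emph{every} node, Byzantine ones included, through the ideal functionality $\Iprove$, so all $n$ outputs are i.i.d.\ uniform and $\Pr[y_i=\min_j y_j]=1/n$ exactly; the hybrids then swap each node back to the real VRF. To repair your argument you must either also idealize the Byzantine outputs (and justify why switching an adversarially keyed VRF for a uniform value is indistinguishable --- this is where the commit-and-reveal assumption has to do real work), or strengthen the VRF assumption to guarantee pseudorandomness or min-entropy for maliciously registered keys. As written, part (b) does not yield the $\frac{1}{n}-\negl(\kappa)$ bound.
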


\section{Hybrid Byzantine Agreement}
\label{section:fastba}
In the previous section, the leader is selected by the lowest value of VRF in \RBA.
This limits the performance of \RBA~since nodes have to wait for the worst-case network latency to ensure all the messages from honest nodes are received.
In this section,
we improve the efficiency by hybridizing of \RBA~and the pre-determined leader method. 
Intuitively, our protocol consists of two phases.
At the beginning of the protocol, all the nodes can uniquely identify a particular node, called \emph{pioneer}, in a deterministic way.
In the first phase, the pioneer broadcast its value and other nodes pre-commit on pioneer's value as soon as possible.
If the pioneer does not propose a value, other nodes start \RBA, if timeouts.

Due to the hybrid structure, we call the improved BA in this section \emph{hybrid Byzantine agreement} (\HBA).
Note that, since there is only one pioneer, each node can decide when enough votes are received instead of waiting for the worst-case network latency.
Thus, this technique achieves \textit{responsiveness}.

\subsection{Protocol}
Now we formally introduce \HBA($p$) protocol, where $p$ is the parameter related to the \textit{pioneer election} in following part.

\paragraph{Message types}
Except for initial message, pre-commit message and commit message, we need the fourth type of message in \HBA:
\begin{enumerate}
    \item[4.] the \emph{fast message} of the pioneer node $q$: $\left(\fast,v_q\right)$
\end{enumerate}

\paragraph{Pioneer election}
Let $u_i$ be the $i^{th}$ node by sorting all the user according to their public key.
The pioneer is node $u_p$, where $p$ is the parameter of \HBA~protocol.


\paragraph{Leader election}
The leader election of \HBA~is the same as \RBA.

\paragraph{Updating internal variables}
The conditions for updating internal variables in \HBA~are almost the same in \RBA, except that the forward conditions start the next iteration from Step 4 and $\clock = 5\lambda$.
Suppose a node $q$ is working at iteration $r_q$. The node $q$ updates its internal variables \emph{as soon as} one of the following conditions holds:
\begin{enumerate}
    \item (lock condition) If node $q$ has seen $2\tmax +1$ pre-commit messages of the same value $v \in V \cup \{\bot\}$ at the same iteration $r$ such that $r = r_q$, $q$ sets $\lockvalue = v$ and $\lockround = r$.
    \item (forward condition) If node $q$ has seen $2\tmax +1$ pre-commit messages of the same value $v \in V \cup \{\bot\}$ at the same iteration $r$ such that $r > r_q$, $q$ sets $\clock = 5\lambda$ and starts the iteration $r$ from Step 4.
    \item (forward condition) If the node $q$ has seen $2\tmax +1$ commit messages of any value at the same iteration $r$ such that $r \geq r_q$, $q$ sets $\clock = 5\lambda$ and starts the iteration $r+1$ from Step 4. 
\end{enumerate}

\paragraph{Protocol description}
Before Step 1, the pioneer can be uniquely determined by all the nodes according to the pioneer election.

At Step 1, the pioneer broadcast its own initial value in the format $(\fast,v_q)$.
For every non-pioneer node $q$, $q$ starts \HBA~from Step 2.
At Step 2, the pioneer broadcasts the pre-commit message of its value.
For every node $q$, when $q$ receives pioneer's fast message, $q$ broadcasts the pre-commit message $(\pre,v_{leader},q,\infty)$ immediately if $\clock \leq 3\lambda$.

For every node $q$, if $q$ receives $2\tmax +1$ pre-commit messages of the same value $v\in V$ and $\clock \leq 3\lambda$, $q$ updates the variables $\lockvalue = v$ and $\lockround = 0$ (according to the lock condition) and broadcasts the commit message $broadcast(\com,\lockvalue,q,\infty)$ immediately.
Note that the honest nodes broadcast the pre-commit and commit messages as soon as the conditions are satisfied. 
It is the core idea to achieve the responsiveness.

Step 3 to Step 6 is a \RBA~protocol, except that the internal variables $\lockvalue$ and $\lockround$ may be changed in Step 2.

\paragraph{Termination condition}
The termination condition in \HBA~is the same as \RBA:
Node $q$ \emph{decides} on a value $v$ \emph{as soon as} node $q$ has seen $2\tmax +1$ commit messages of the same value $v \in V \cup \{\bot\}$ at the same iteration $r$.

The protocol for a node $q$ is summarized as Algorithm \ref{algorithm:hba}.

\begin{algorithm}[!htbp]
\caption{Hybrid Byzantine Agreement for node $q$}
\label{algorithm:hba}
    \PARM{a pioneer number $p$}
    \Input{an initial value $v_q \in V$ from node $q$ who has own secret key $sk_q$  and the public key $\{pk_i\}_i$ from all nodes, and public randomness and information}
    \Output{an agreed value $v_{fin}\in V \cup \{\bot\}$ from some node}
  Initialize $r_q = 0$, $\lockvalue = \SKIP$, $\lockround = -1$ and $\clock = 0$\\
  Identify the pioneer\\

  \Step { 1: when $\clock = 0$,}{
    \If{node $q$ is the pioneer}{
    $broadcast(\fast,v_q)$\\
  }
  }
  
  \Step { 2: when $\clock \leq 3\lambda$, }{
  \If{receiving pioneer's value $v_{pioneer}$ \text{or} node $q$ is the pioneer, }{
      check then $broadcast(\pre,v_{pioneer},q,0)$}
  
    \If{node $q$ sees $2\tmax +1$ pre-commit messages of the same value $v\in V$}{
      $broadcast(\com,\lockvalue,q,0)$\\
      }
  }

  \Step { 3: when $\clock = 3\lambda$,}{
    broadcast$(\init,v_q,q,\langle y_q,\pi_q \rangle)$\\
    set $r_q = 1$
  }
  \Step { 4: when $\clock = 5\lambda$, }{
    \If{$\lockvalue \in \{\SKIP,\bot\}$ and $U_q \neq \emptyset$}{
      node $q$ identifies its leader $\ell_q$ at $q$'s current view\\
      broadcast$(\pre,v_{\ell_q},q,r_q)$
    }
    \ElseIf{$\lockvalue \in \{\SKIP,\bot\}$ and $U_q = \emptyset$}{
      broadcast$(\pre,\bot,q,r_q)$
    }
    \Else{broadcast$(\pre,\lockvalue,q,r_q)$}
  }
  \Step { 5: when $\clock = 7\lambda$,}{
    broadcast$(\com,\lockvalue,q,r_q)$
  }
  \Step { 6: when $\clock  \in (7\lambda,\infty)$}{
    wait until the \emph{forward condition} is achieved
  }
\end{algorithm}


\subsection{Agreement}
\begin{theorem}[Agreement of \HBA]
\label{theorem:agreementhba}
Assume $t \leq \tmax$ and the adversary is adaptive. Regardless of partition, if an honest node $p$ decides on some value $v_p$ and another honest node $q$ decides on some value $v_q$, then $v_p = v_q$. That is, the honest nodes will never decide on different values.
\end{theorem}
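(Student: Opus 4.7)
The plan is to mirror the proof of Theorem \ref{thm:baagreement} for \RBA{}, extending it to cover the fast phase of \HBA{} (Step 1 and Step 2, which we identify with iteration $r=0$). The architecture of \HBA{} from Step 3 onward is exactly \RBA{} but with lockvalues possibly already set in the fast phase with $\lockround = 0$. So the goal is to show that (i) Lemma \ref{lemma:agreement1} still holds with $r$ ranging over $\{0,1,2,\dots\}$, and (ii) the inductive argument that a single value, once ``committed by a supermajority,'' locks out every competing value propagates across the fast-phase/\RBA{} boundary.

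First I would prove the analogue of Lemma \ref{lemma:agreement1}: if a node sees $2\tmax+1$ commit messages for $v_p$ at iteration $r$ and another sees $2\tmax+1$ commit messages for $v_q$ at the same iteration $r$, then $v_p = v_q$. For $r \geq 1$ this is exactly Lemma \ref{lemma:agreement1}. For $r=0$, any two supermajorities of size $2\tmax+1$ in a population of $3\tmax+1$ intersect in at least $\tmax+1$ nodes, so in at least one honest node. That honest node's lockvalue (set to $v_p$ via the fast-phase lock condition) was unique because it only pre-commits on the value it received from the pioneer (or on values it already locked), and its commit in Step 2 uses exactly its current $\lockvalue$; hence $v_p = v_q$.

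Next I would prove the cross-iteration propagation lemma: if $2\tmax+1$ commit messages for $v$ appear at some iteration $r \geq 0$, then for every iteration $r' > r$, at least $\tmax+1$ honest nodes enter Step 4 (or Step 2, in the fast-phase case) of iteration $r'$ with $\lockvalue = v$ and $\lockround \geq r$, and every honest node that pre-commits in iteration $r'$ on a value in $V \cup \{\bot\}$ pre-commits on $v$. The base step: among the $2\tmax+1$ committers at iteration $r$, at least $\tmax+1$ are honest, and each of them reached the lock condition (or the fast-phase lock condition) for $v$ before broadcasting that commit, so they have $\lockvalue = v$. The inductive step uses the updating rules: an honest node only replaces its lockvalue when it sees $2\tmax+1$ pre-commits on some value at the same iteration as $r_q$; by the inductive hypothesis, no supermajority of pre-commits for any $v' \neq v$ can form, because at least $\tmax+1$ honest nodes refuse to pre-commit anything but $v$.

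Combining these two ingredients gives agreement: suppose honest nodes decide on $v_p$ at iteration $r_p$ and $v_q$ at iteration $r_q$. If $r_p = r_q$, apply the extended Lemma \ref{lemma:agreement1}. If $r_p < r_q$, the propagation lemma applied to the commit supermajority at $r_p$ implies that no supermajority of commits for any value other than $v_p$ can ever form at iteration $r_q$, so $v_q = v_p$. Partition does not enter the argument at any point, because both lemmas reason purely about message counts and the order-independent update rules, never about timing bounds. The main obstacle will be the fast-phase base case: one must carefully check that the ``immediate'' pre-commit and commit broadcasts in Step 2 (with $\lockround = 0$) interact correctly with the lock/forward conditions so that an honest node truly only commits its current $\lockvalue$ in iteration $0$, and that $\lockround = 0$ is sufficient to block subsequent \RBA{}-phase unlockings in the inductive step.
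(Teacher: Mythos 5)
Your proposal is correct and follows essentially the same route as the paper's proof: it treats the fast phase as iteration $0$ so that the quorum-intersection lemma and the cross-iteration locking argument (at least $\tmax+1$ honest committers of $v_p$ keep pre-committing $v_p$, and the remaining $2\tmax$ nodes can never assemble a $2\tmax+1$ pre-commit quorum for another value) carry over verbatim. One small caveat: the clause in your propagation lemma asserting that \emph{every} honest node pre-commits on $v$ in later iterations is too strong (an honest node still holding \SKIP{} may pre-commit a different leader value), but your argument never uses it --- only the weaker, correct fact that no competing value can reach $2\tmax+1$ pre-commits.
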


\subsection{Termination}

\begin{proposition}[Termination without partition in static adversary]
\label{proposition:staticTermination}
Assume $t \leq \tmax$ and the adversary is static. 
If all the honest nodes start \HBA~within time $\lambda$ and no partition exists, all the honest nodes will decide on some values in $4\lambda$, $6.33\lambda$ and $t+1$ iterations in the best case, the average case and the worst case, respectively.
\end{proposition}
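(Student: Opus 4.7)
The plan is to handle the best, worst, and expected cases separately, leveraging the termination analyses of \RBA~already established in Propositions~\ref{proposition:terminationWOpartition} and~\ref{proposition:terminationExpected}. For the best case I trace Steps 1--2 assuming the pioneer is honest and every inter-honest message arrives well before clock $3\lambda$: at clock $0$ the pioneer broadcasts its fast message; within one actual delay each honest node receives it and immediately pre-commits on the pioneer's value; within another delay each honest node collects at least $2\tmax+1$ matching pre-commits (the $n-t\ge 2\tmax+1$ honest nodes all pre-commit on the same value) and immediately commits; within one more delay the $2\tmax+1$ commits arrive and every honest node decides. Together with the up to $\lambda$ skew in start times this yields the $4\lambda$ bound, and because every trigger in Step 2 fires before clock $3\lambda$ no honest node enters Step 3.

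For the worst case, suppose the pioneer is Byzantine or too slow. By clock $3\lambda$ every honest node enters Step 3, after which Steps 3--6 mirror Steps 1--4 of \RBA~with a $3\lambda$ offset. The subtlety is that Step 2 may already have set $\lockvalue$ and $\lockround$ for some honest nodes. I would handle this with a lemma analogous to Lemma~\ref{lemma:agreement1} applied at iteration $0$: since at most one value can receive $2\tmax+1$ pre-commits in Step 2, no two honest nodes enter Step 3 with different locked values, so the invariants used in the proof of Proposition~\ref{proposition:terminationWOpartition} are preserved. Invoking that proposition on the \RBA~portion of \HBA~then delivers termination within $t+1$ iterations.

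For the expected case I would condition on whether the pre-determined pioneer is honest. Under a static adversary and the (implicit) assumption that the index $p$ is unbiased with respect to the corruption set, the pioneer is honest with probability at least $\frac{n-t}{n}\ge \frac{2}{3}$, in which case the fast path decides in $4\lambda$. With complementary probability the fast path consumes roughly $3\lambda$ before Step 3 begins, after which Proposition~\ref{proposition:terminationExpected} bounds the expected \RBA~completion time by $8\lambda$ (treating the first \RBA~iteration as $6\lambda$, each additional as $4\lambda$, and using that the VRF elects an honest leader with probability $\ge 2/3$, giving $\frac{3}{2}$ expected iterations). Combining,
\[
\mathbb{E}[T] \le \frac{2}{3}\cdot 4\lambda + \frac{1}{3}\bigl(3\lambda + 8\lambda\bigr) = \frac{19}{3}\lambda \approx 6.33\lambda.
\]

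The main obstacle will be the reduction from the \RBA~portion of \HBA~to \RBA~itself: I must verify that any state inherited from Step 2---particularly a possibly non-default $\lockvalue$ and $\lockround$---is consistent across all honest nodes and does not break the termination argument of Proposition~\ref{proposition:terminationWOpartition}. Everything else is timing bookkeeping and direct application of the earlier \RBA~results.
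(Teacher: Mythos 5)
Your proposal is correct and follows essentially the same route as the paper: the best case traces the fast path to $4\lambda$, the worst case reduces Steps 3--6 to \RBA~and invokes Proposition~\ref{proposition:terminationWOpartition}, and the expected case is the identical computation $\frac{2}{3}\cdot 4\lambda + \frac{1}{3}(3\lambda+8\lambda) = \frac{19}{3}\lambda$. You are in fact somewhat more careful than the paper on the one genuine subtlety---that the $\lockvalue$ inherited from Step 2 is consistent across honest nodes and only helps termination (it lands in Case~2 of the proof of Proposition~\ref{proposition:terminationWOpartition})---whereas the paper dispatches this by citing Theorem~\ref{theorem:agreementhba}.
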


\begin{proposition}[Termination without partition in adaptive adversary]\label{proposition:terminationWOpartitionfastBA}
Assume $t \leq \tmax$ and the adversary is adaptive. 
If all the honest nodes start \HBA~within time $\lambda$ and no partition exists, all the honest nodes will decide on some values in $t+1$ iterations.
\end{proposition}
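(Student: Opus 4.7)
The plan is to reduce termination of \HBA~to termination of \RBA~(Proposition \ref{proposition:terminationWOpartition}) by showing that the fast path (Steps 1--2, iteration $r_q = 0$) either already decides the protocol or leaves the honest nodes in a consistent state when they enter the \RBA-like portion (Steps 3--6). I would split into two cases based on whether any honest node decides during Step 2.

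In the case where some honest node $p$ decides on a value $v$ during the fast path, I would first observe that among the $2\tmax + 1$ commit messages witnessed by $p$ at iteration $0$, at least $\tmax + 1$ come from honest nodes; each such honest node had first locked on $v$ (setting $\lockvalue = v$) and broadcast its commit. Since no partition exists, these $\tmax + 1$ honest commits reach every other honest node within $\lambda$, so every honest node enters Step 3 carrying $\lockvalue = v$. A counting argument analogous to Lemma \ref{lemma:agreement1} then rules out any value $v' \neq v$ ever being locked in later iterations, and the \RBA~termination analysis drives all honest nodes to decide on $v$ well within $t+1$ iterations. In the complementary case, no honest node decides during Steps 1--2, so every honest node reaches Step 3 by $\clock = 3\lambda$ and begins the \RBA-like portion. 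The honest nodes still start iteration $r_q = 1$ within $\lambda$ of one another and no partition exists, so the remaining execution is structurally identical to a fresh run of \RBA~(Algorithm \ref{algorithm:ba})---the only difference being that some honest nodes may carry a non-$\SKIP$ $\lockvalue$ inherited from Step 2. Applying Proposition \ref{proposition:terminationWOpartition} to this portion yields termination in at most $t+1$ iterations.

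The main obstacle, mostly bookkeeping but subtle, is verifying that Proposition \ref{proposition:terminationWOpartition}'s argument still goes through when some honest nodes enter iteration $r = 1$ with a non-trivial $\lockvalue$. In particular, one has to check that an honest node holding a lock on $v$ and pre-committing on $v$ (rather than on the VRF-elected leader's initial value) cannot prevent an honest-leader iteration from reaching the $2\tmax + 1$ pre-commit threshold. The resolution, already implicit in the first case, is that all inherited locks among honest nodes must agree on a single value, so the pre-commits of locked honest nodes combine with, rather than conflict with, the pre-commits induced by an honest leader; the adaptive-adversary termination argument for \RBA~then transfers directly to \HBA, giving the claimed $t+1$ iteration bound.
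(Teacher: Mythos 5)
Your overall route is the same as the paper's: the paper's proof is a two-sentence reduction observing that an adaptive adversary can always corrupt the pre-determined pioneer, which places the execution in Case~2 of Proposition~\ref{proposition:staticTermination}, and then invokes Proposition~\ref{proposition:terminationWOpartition} for the \RBA{} portion. Your Case~1 (some honest node decides in Step~2) and your reduction of the remaining case to \RBA{} match this. You deserve credit for explicitly isolating the one point the paper glosses over --- honest nodes entering Step~4 with a lock inherited from the fast path --- but the resolution you offer for that obstacle does not hold up.

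The false step is the claim that, because all inherited locks agree on a single value $v$ (which is true, by the usual quorum-intersection count), the locked nodes' pre-commits ``combine with, rather than conflict with, the pre-commits induced by an honest leader.'' A node whose locked value lies outside $\{\SKIP,\bot\}$ pre-commits that locked value at Step~4, whereas an unlocked honest node pre-commits the VRF leader's \emph{initial value} $v_\ell$, and nothing forces $v_\ell = v$ even when the leader is honest. Concretely, a Byzantine pioneer together with message scheduling can arrange that exactly $k$ honest nodes, $1 \le k \le \tmax$, see $2\tmax+1$ pre-commit messages of $v$ before the $\clock \le 3\lambda$ cutoff and lock, while the gossiped quorum reaches the remaining honest nodes only after they have advanced to $r_q = 1$, where iteration-$0$ pre-commits no longer trigger the lock condition. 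In every subsequent iteration the honest pre-commits then split, with $k \le \tmax$ on $v$ and the rest on $v_\ell$; neither value reaches the $2\tmax+1$ threshold without Byzantine cooperation, no lock condition fires, and the ``honest leader forces a decision'' step of Case~3 in Proposition~\ref{proposition:terminationWOpartition} --- precisely the step your argument needs to transfer to \HBA{} --- does not go through. To be fair, the paper's own proof is silent on this and inherits the same difficulty; but as written, your resolution of the obstacle you correctly identified is an assertion rather than a proof, and closing the gap requires an additional mechanism (for instance, guaranteeing that either all or none of the honest nodes carry the iteration-$0$ lock into Step~4, or letting unlocked nodes adopt a certified iteration-$0$ lock whenever they see one).
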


\begin{proposition}[Fast recovery from a partition in adaptive adversary]
\label{proposition:hbafastrecoveryfrompartition}
Assume $t \leq \tmax$ and the adversary is adaptive. If the partition is resolved, all the honest nodes will decide on some values in $t+2$ iterations. 
If the adversary is static, it is to be expected that all honest nodes will decide on some values in $12$ rounds.
\end{proposition}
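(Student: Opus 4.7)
The plan is to reduce this proposition to the no-partition termination result (Proposition~\ref{proposition:terminationWOpartitionfastBA}) by first arguing that, once the network heals, all honest nodes synchronise to within one iteration of each other, after which we are essentially in the situation of a fresh \HBA{} execution whose first iteration is the common ``catch-up'' iteration.

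\textbf{Step 1: resynchronisation.} Let $r^{\ast}$ denote the largest iteration index reached by any honest node at the moment the partition resolves. By the definition of partition recovery, every honest node receives the backlog of undelivered messages within $\lambda$. Any honest node currently at iteration $r < r^{\ast}$ therefore sees, within that $\lambda$ window, either (i) the $2\tmax+1$ pre-commit messages that pushed some honest node to iteration $r^{\ast}$, or (ii) the $2\tmax+1$ commit messages that pushed some honest node past iteration $r^{\ast}-1$. In either case its forward condition fires and it jumps to iteration $r^{\ast}$ with $\clock=5\lambda$, starting at Step~4. Thus within one additional iteration of wall-clock time after partition resolution, every honest node is working on the same iteration $r^{\ast}$, and the skew between any two honest nodes is at most $\lambda$ (the delivery slack).

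\textbf{Step 2: applying no-partition termination.} Once all honest nodes are at iteration $r^{\ast}$ within a $\lambda$-window, the hypotheses of Proposition~\ref{proposition:terminationWOpartitionfastBA} are satisfied from iteration $r^{\ast}$ onwards (treating iteration $r^{\ast}$ as the ``first'' iteration in that proposition). We also need to check that the Step~2 (pioneer) phase of \HBA{} does not interfere here: after resynchronisation the nodes enter from Step~4, bypassing the pioneer short-cut, so termination is governed purely by the \RBA{}-style leader rounds. Proposition~\ref{proposition:terminationWOpartitionfastBA} then guarantees decision within $t+1$ further iterations, yielding a total of at most $t+2$ iterations from the moment the partition resolves.

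\textbf{Step 3: expected bound against a static adversary.} For the $12$-round claim I would mirror the argument used for Proposition~\ref{proposition:partitionResolved}: the resynchronisation consumes one iteration ($4\lambda$ once inside Step~4), and thereafter we invoke the expected termination analysis underlying Proposition~\ref{proposition:staticTermination}. Since the leader chosen via the VRF is honest with probability at least $2/3$ against a static adversary, the expected number of additional iterations to lock and decide is a small constant; combined with the initial catch-up iteration, summing the round counts gives the $12$-round bound.

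\textbf{Main obstacle.} The delicate point is Step~1: when the partition resolves, honest nodes may hold heterogeneous $\lockvalue$ values picked up during the partition, and a malicious sender can craft messages that selectively reach only some nodes just before recovery. I need to argue carefully that the forward condition triggers on \emph{every} straggler within one $\lambda$ after recovery and that the resulting common iteration $r^{\ast}$ is consistent with the agreement invariant of Theorem~\ref{theorem:agreementhba}, so that the subsequent iterations cannot be destabilised by the locks inherited from the partition period. Once this invariant-preservation is pinned down, the rest reduces cleanly to already-proved propositions.
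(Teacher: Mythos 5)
Your overall strategy matches the paper's: resynchronise all honest nodes to a common iteration within $\lambda$ of recovery, then invoke the no-partition termination result to get $t+1$ further iterations, for $t+2$ total, and add one $4\lambda$ iteration to the expected $8$ rounds to get $12$. The paper simply packages the resynchronisation step as a citation to Proposition~\ref{proposition:partitionResolved} (the \RBA{} recovery result), since after Step~2 the protocol \emph{is} \RBA{}; your inlined version of that argument is essentially the one appearing in the proof of Proposition~\ref{proposition:partitionResolved}. However, there is one genuine omission. Your Step~1 tacitly assumes every honest node is still participating and merely lagging in iteration index. It does not cover the case where some honest node has already \emph{decided} --- in particular, decided in the fast (pioneer) phase at iteration~$0$ --- before or during the partition. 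Such a node has terminated and will not emit pre-commits or commits in iteration $r^{\ast}$ and beyond, so the hypotheses you want for the no-partition termination proposition are not met by "treating $r^{\ast}$ as the first iteration." The paper handles this as a separate first case: a decided node holds a certificate of $2\tmax+1$ commit messages on its value, and propagating that certificate after recovery makes every honest node decide the same value within $\lambda$. You need this case split explicitly; without it the reduction in your Step~2 does not go through.

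On your self-identified "main obstacle": the concern about heterogeneous $\lockvalue$s inherited from the partition is legitimate to raise, but it is already discharged by the results you are citing, so you should close it rather than leave it open. Agreement under partition (Theorem~\ref{theorem:agreementhba}) guarantees no two conflicting values can both accumulate $2\tmax+1$ commit quorums, and Cases~1 and~2 in the proof of Proposition~\ref{proposition:terminationWOpartition} show that once any honest node holds a $2\tmax+1$ pre-commit or commit certificate, propagating it forces all honest nodes to converge on that single value within one iteration; stale locks held by a minority are overwritten by the lock condition when the certificate arrives. Also, your citation in Step~2 should really be to Proposition~\ref{proposition:terminationWOpartition} (the \RBA{} statement, which is parameterised by an arbitrary starting iteration $r$) rather than to Proposition~\ref{proposition:terminationWOpartitionfastBA}, whose hypothesis is that nodes start \HBA{} from the pioneer phase.
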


\subsection{Responsiveness}
The following proposition directly implies that \HBA~is responsive.
\begin{proposition}
\label{proposition:hbaresponsiveness}
Assume the actual network delay is $\delta$, and all the nodes start \HBA~within $\tau$ time difference. 
If there is no partition and the pioneer is honest, all the honest nodes will decide on some values in $\tau + 3\delta$.
\end{proposition}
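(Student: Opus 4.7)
The plan is to follow the natural cascade of messages along the responsive fast path of \HBA, which consists of three network hops between the pioneer's initial broadcast and each honest node's termination: the \fast{} message, the iteration~$0$ pre-commit messages, and the iteration~$0$ commit messages. Because the pioneer is honest and we are on the responsive branch of Step~2 (where messages are acted upon as soon as they arrive rather than at fixed wall-clock moments), each hop should cost only the actual delay $\delta$.

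Concretely, let $t_0$ denote the earliest wall-clock time at which any honest node starts the protocol, and let $t_p\in[t_0,t_0+\tau]$ denote the pioneer's start time, at which it broadcasts $(\fast,v_{\text{pioneer}})$. Since $n\geq 3\tmax+1$ and at most $\tmax$ nodes are corrupted, there are at least $2\tmax+1$ honest nodes, so it suffices to track the honest subset. By time $t_0+\tau+\delta$, every honest node either has already entered Step~2 or receives the queued \fast{} message upon entering it, and in either case immediately broadcasts $(\pre,v_{\text{pioneer}},q,0)$. By time $t_0+\tau+2\delta$, each honest node has collected $2\tmax+1$ matching pre-commit messages from the honest set alone, fires its lock condition, and immediately broadcasts $(\com,v_{\text{pioneer}},q,0)$. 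After one further hop, by $t_0+\tau+3\delta$, every honest node holds $2\tmax+1$ matching commit messages at iteration~$0$ and decides on $v_{\text{pioneer}}$.

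The hard part is confirming that every action in this chain actually occurs inside the responsive Step~2 window, i.e.\ while $\clock\leq 3\lambda$; otherwise a node would fall through to the non-responsive Step~3 onward and the bound would involve $\lambda$ rather than $\delta$. Writing $s_i$ for node $i$'s start time, the latest event that must still fit in Step~2 is a commit broadcast at wall-clock time at most $t_0+\tau+2\delta$, which, since $s_i\geq t_0$, reduces to checking $\tau+2\delta\leq 3\lambda$. This follows from the no-partition hypothesis ($\delta\leq\lambda$) together with the synchronous start skew ($\tau\leq\lambda$). A secondary subtlety worth spelling out is that all honest pre-commits and commits in this cascade carry the same value $v_{\text{pioneer}}$: an honest node only pre-commits after verifying the pioneer's signed \fast{} message, and the honest pioneer signs exactly one value, so no honest node pre-commits on anything else. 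This is what lets the $2\tmax+1$ threshold be met on a single value at both the pre-commit and commit stages, and hence lets the lock and termination conditions fire unambiguously within $\tau+3\delta$ of the earliest start.
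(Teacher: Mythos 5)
Your proof is correct and follows essentially the same route as the paper's: the three-hop cascade (\fast{} message, then pre-commits, then commits, each costing one actual delay $\delta$) offset by the start-time skew $\tau$. Your additional check that the whole cascade fits inside the responsive Step~2 window $\clock\leq 3\lambda$ (using $\delta\leq\lambda$ and $\tau\leq\lambda$) is a detail the paper's proof silently omits, and it is a worthwhile addition rather than a divergence in approach.
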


\subsection{Weakly Fair Validity}

\begin{theorem}[weakly fair validity]
\label{thm:hbafairness}
Suppose the network is synchronous.
Then, \HBA ~achieves weakly fair validity under the assumption of static adversary.
\end{theorem}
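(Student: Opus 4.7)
The plan is to exploit the deterministic, permutation-based pioneer election of \HBA: when the protocol is executed $M$ times with the parameter $p$ cycling through $\{1,2,\ldots,n\}$, each node is the pioneer at least $\lfloor M/n\rfloor$ times, and I will show that in each such execution in which the pioneer is honest, the synchronous fast path forces every honest node to decide on the pioneer's value except with negligible probability. Because the adversary is static, the set of honest nodes, and hence the pioneer slots of each honest $q_i$, are fixed at the start of the $M$ executions; let $\mathcal{J}_i$ be the indices of those executions in which honest $q_i$ is designated pioneer, so $|\mathcal{J}_i|\geq \lfloor M/n\rfloor$. The target then reduces to showing $\Pr[X_{ij}=1]\geq 1-\negl(\kappa)$ for every $j\in\mathcal{J}_i$.

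Next I would trace the fast path in a single execution $j\in\mathcal{J}_i$. Honest pioneer $q_i$ broadcasts $(\fast,v_i)$ at $\clock=0$. By synchrony every honest node receives this message within $\lambda$ and, while the Step~2 window $\clock\leq 3\lambda$ is still open, immediately broadcasts $(\pre,v_i,\cdot,0)$. Because at least $2\tmax+1$ honest nodes pre-commit on $v_i$, every honest node soon sees $2\tmax+1$ matching pre-commits, triggers the lock condition with $\lockvalue=v_i$, and broadcasts $(\com,v_i,\cdot,0)$; the at most $\tmax$ Byzantine nodes cannot themselves form a supermajority for any competing value, so no honest node locks on anything else in Step~2. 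Consequently every honest node collects $2\tmax+1$ commit messages for $v_i$ and decides on $v_i$ by the termination condition, with the timing details matching Proposition~\ref{proposition:hbaresponsiveness}. The only way this chain can fail is a signature forgery on behalf of an honest party (for example, an injected conflicting fast message), which happens with at most negligible probability; agreement from Theorem~\ref{theorem:agreementhba} rules out any honest node deciding a different value.

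Finally, I would aggregate by linearity of expectation together with a union bound over the polynomially many cryptographic failure events across the $M$ executions:
\begin{align*}
\mathbb{E}[X_i]\;\geq\;\sum_{j\in\mathcal{J}_i}\Pr[X_{ij}=1]\;\geq\;\lfloor M/n\rfloor-\negl(\kappa),
\end{align*}
which is exactly the weakly fair validity bound. The main obstacle I foresee is not the fast-path analysis itself, whose skeleton already exists in Proposition~\ref{proposition:hbaresponsiveness}, but the bookkeeping required to make explicit that (i) the sequence of pioneer parameters $p$ is externally specified and oblivious to the adversary rather than adversarially chosen, and (ii) the negligibly many cryptographic failures across the $M$ executions can be collapsed into a single negligible function on the right-hand side; both points are standard once the cyclic schedule and the signature scheme's unforgeability are made explicit.
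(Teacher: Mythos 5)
Your proposal is correct and follows essentially the same route as the paper's own proof: trace the synchronous fast path to show every honest node locks and decides on an honest pioneer's value, then use the deterministic permutation of pioneers to get the $\lfloor M/n\rfloor$ count, absorbing signature-forgery events into the negligible term. Your version is simply more explicit about the supermajority of honest pre-commits, the fixed pioneer schedule under a static adversary, and the union bound over the $M$ executions.
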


\section{Experiment}
\label{section:experiment}
We implemented our protocol by Go language and deployed on Google Cloud Platform services.
We ran \RBA~ and \HBA~ on 21 GCP instances (4 vCPU and 8GB RAM) uniformly distributed throughout its 10 regions spanning 3 continents.

We set $\lambda = 0.5$ second by the reason of our experiment on the latency on GCP.
However, it is optimistic to set this bound for a general network.
For \RBA~, the experiment repeats 703 times and the histogram is shown in Figure. 1. 
The average latency of \RBA~ is 3.20 seconds and the standard deviation is 87.60 ms.
The results show the latency of \RBA~ is expected 6.4 round, which is close to the round complexity of the best case.

\begin{figure}[ht]
\label{fig:rbalatency}
\centering
\includegraphics[width=6cm]{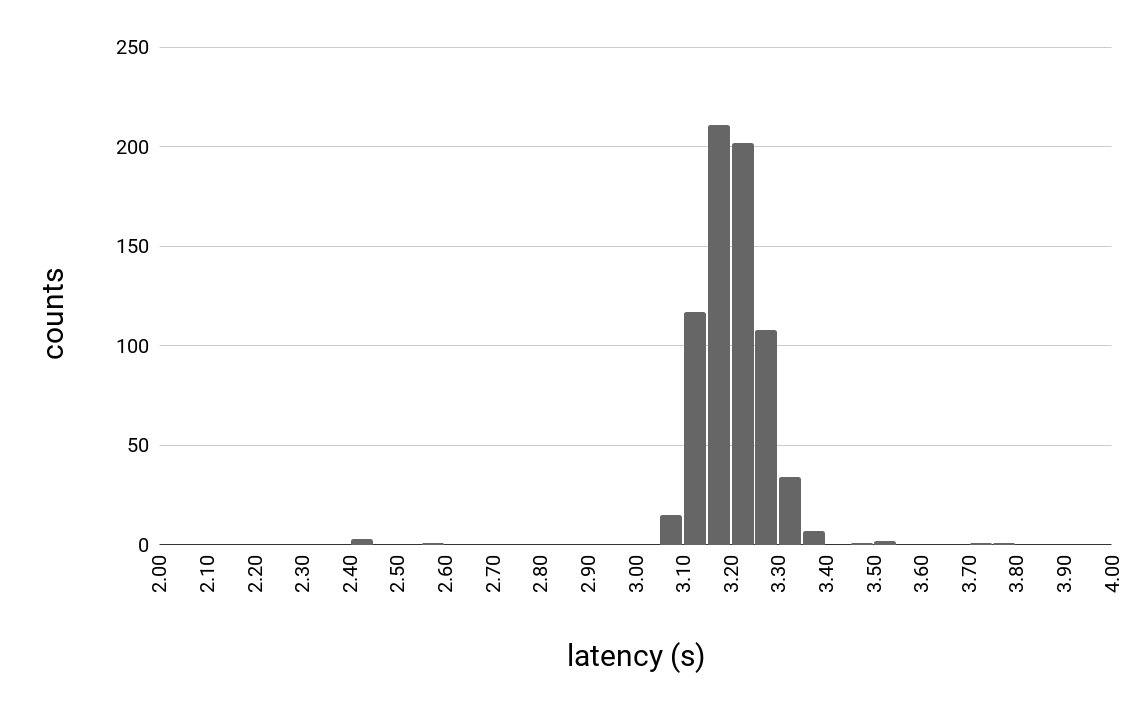}
\caption{The histogram of latency of \RBA.}
\end{figure}

For \HBA~, the experiment repeats 715 times and the histogram is shown in Figure. 2. 
This result confirms the responsiveness of \HBA~.
The average latency of \HBA~ is 241.79 ms and the standard deviation is 83.06 ms.

\begin{figure}[ht]
\label{fig:hbalatency}
\centering
\includegraphics[width=6cm]{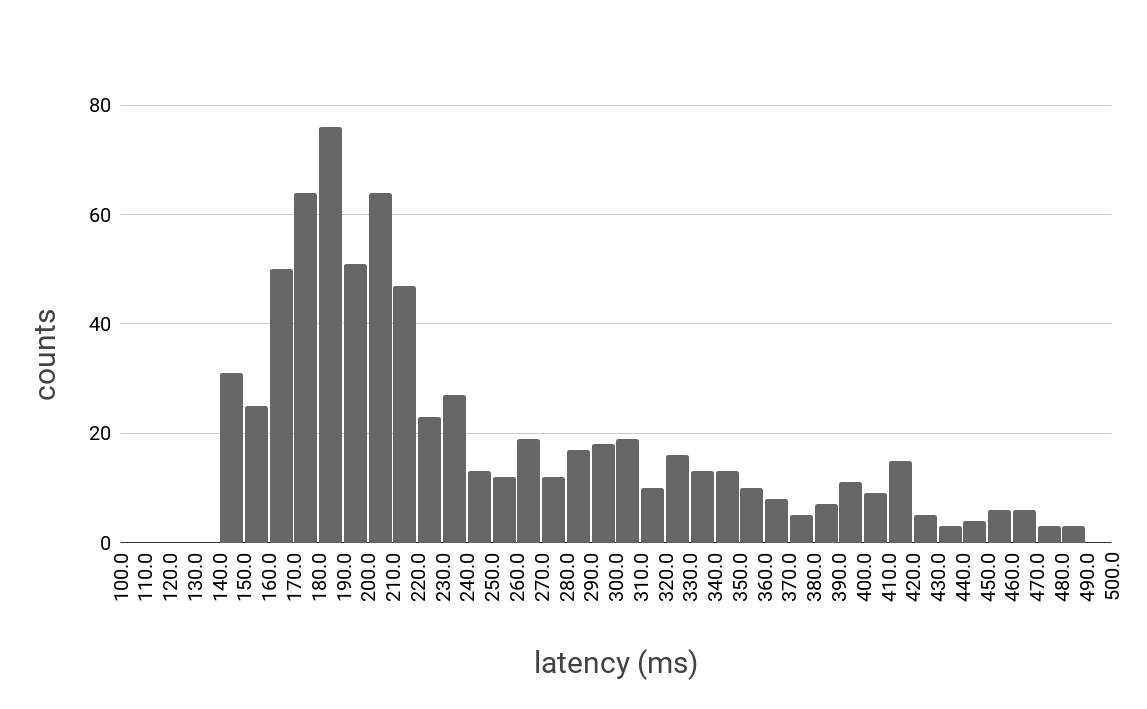}
\caption{The histogram of latency of \HBA.}
\end{figure}

\section{Simulation}
\label{section:simulation}
To demonstrate the performance, responsiveness, and partition-resilience of \HBA, we implement three other Byzantine agreements and compare the simulation results.
The three protocols are PBFT \cite{pbft}, the synchronous BA proposed by Abraham et al. (ADD+19) \cite{AbrahamDDNR18} (the version against static adversary), and Algorand agreement \cite{Chen2018ALGORANDAS}.

Let $n$ be the number of nodes, $f$ be the number of maximum faulty nodes,
and $\lambda$ be the predefined maximum network delay for the protocols.
We implement a network module that each node connected to. 
The actual network delay is parametrized by $\mathcal{G}(\mu, \sigma)$ where the delay is sampled from a Gaussian distribution with mean $\mu$ and standard deviation $\sigma$.


The number of messages sent and the latency during a Byzantine agreement process is recorded from the first message sent to the last node decides its value. Note that we do not have any faulty node in this experiment. We run the experiment on a MacBook Pro with 2.6GHz 6-core Intel Core i7, but the latency is calculated by a simulation clock instead of a wall clock or CPU time, so the result should be able to be reproduced on any machine specification. Means and standard deviations from each result are calculated from 100 times of simulation.



We conduct two experiments to show the behaviors of different protocols under different network conditions.

\paragraph{Responsiveness}
In the first experiment, all the network delays are sampled from $\mathcal{G}(250 ms, 50ms)$.
We execute the four protocols under different $\lambda$ (400ms, 1000ms, 2000ms) and the result is shown in Figure \ref{fig:responsiveness}.

From Figure \ref{fig:responsiveness}, we can see that the confirmation time of BAs with responsiveness such as \HBA~and PBFT only depend on the actual network latency.
Thus, the confirmation time does not change when $\lambda$ varies.
On the other hand, the confirmation time of synchronous BA without responsiveness such as ADD+19 and Algorand agreement increases as $\lambda$ increases.
The ratio between confirmation time and $\lambda$ is the number of total rounds.
It costs around 6.2 rounds and 2.2 rounds for ADD+19 and Algorand agreement, respectively.

\paragraph{Partition-Resilience}
In the second experiment, the network operates in two modes: the normal mode and the partition mode.
In the normal mode, all the nodes are connected with the delay sampled from $\mathcal{G}(250 ms, 50ms)$.
In the partition mode, the network is divided into three distinct sets of size $\lfloor \frac{n}{3} \rfloor$ or $\lfloor \frac{n}{3} \rfloor +1$.
Within the set, the delay is sampled from $\mathcal{G}(250ms, 50ms)$.
For the messages between two sets, the delays are sampled from $\mathcal{G}(4000ms, 1000ms)$.
All the protocols are executed with $\lambda = 1000ms$.
Thus, when the network is in the partition mode, the delay between different sets exceeds $\lambda$.
The protocols are executed in the partition mode for 60 seconds.
Then, the network becomes the normal mode.
The result is shown in Figure \ref{fig:partition1}.

Notice that the partition is ``benign'' in this model.
Except that the delays are sampled from $\mathcal{G}(4000ms, 1000ms)$, there is no adversary that re-schedules or delay the messages to break the protocols maliciously.
The benign partition captures the case that the Internet cables breaks so that the alternative route is saturated.

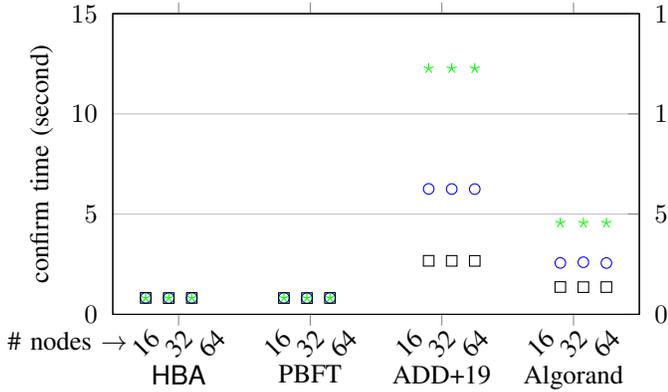
\begin{figure}[ht]
    \centering
    \pgfplotsset{
    MyAxis/.style={
            ybar,
            scale only axis,
            width=7cm,
            height=4cm,
            ymin=0,
            xmin=-0.5,
            xmax=7.5, 
            area legend,
            bar width=6pt
      }
    }
    \begin{tikzpicture}
    \begin{axis}[
        ylabel={confirm time (second)},
        ymax=15,
        ymajorgrids=true,
        xtick={0,0.5,1, 2,2.5,3, 4,4.5,5, 6,6.5,7},
        xtick pos=left,
        ytick pos=left,
        xticklabels = {
           16,32,64,
           16,32,64,
           16,32,64,
           16,32,64
        },
        yticklabel style={xshift=-0.5ex},
        xticklabel style={yshift=0pt,rotate=45}, 
        tickwidth=0pt, 
        extra x ticks={-0.5,2,4.5,7,9.5,12},
        extra x tick labels={},
        minor x tick style = {opacity=0},
        MyAxis
    ]
    \end{axis}

    \begin{axis}[
        name=ax2,
        ymax=15,
        MyAxis,
        ytick pos=right,
        yticklabel style={xshift=0.5ex},
        xtick={0.5,2.5,4.5,6.5},
        xticklabels={\HBA, PBFT, ADD+19, Algorand},
        xticklabel style={yshift=-4mm},
    ]
     \addplot [color=blue, only marks, mark=o,] 
 table[x =x, y =y]{exp2.txt};
      \addplot [color=green, only marks, mark=star,] 
 table[x =x, y =yy]{exp2.txt};
      \addplot [color=black, only marks, mark=square,] 
 table[x =x, y =yyy]{exp2.txt};
    
    \end{axis}
\node at (-0.6, -0.35) {\# nodes $\rightarrow$};
    \end{tikzpicture}
    \caption{Confirmation time of each Byzantine Agreement: The network delay is according to $\mathcal{G}(250 ms, 50ms)$ and the network bound $\lambda = 400$ (resp. 1000 and 2000) ms for the black square (resp. blue circle and green star).}
    \label{fig:responsiveness}
    \end{figure}

\begin{figure}[ht]
    \centering
    \pgfplotsset{
    MyAxis/.style={
            ybar,
            scale only axis,
            width=6cm,
            height=4cm,
            ymin=0,
            xmin=-0.5,
            xmax=7.5, 
            area legend,
            bar width=6pt
      }
    }
    \begin{tikzpicture}
    \begin{axis}[
        ylabel={\# messages},
        ymax=180000,
        ymajorgrids=true,
        xtick={0,0.5,1, 2,2.5,3, 4,4.5,5, 6,6.5,7},
        xtick pos=left,
        ytick pos=left,
        xticklabels = {
           16,32,64,
           16,32,64,
           16,32,64,
           16,32,64
        },
        yticklabel style={xshift=-0.5ex},
        xticklabel style={yshift=0pt,rotate=45}, 
        tickwidth=0pt, 
        extra x ticks={-0.5,2,4.5,7,9.5,12},
        extra x tick labels={},
        minor x tick style = {opacity=0},
        MyAxis
    ]
    \addplot[draw=black,fill=yellow!20] table [x=xa,y=G] {exp.txt};
    \addplot[draw=black,fill=red!20] table [x=xaa,y=H] {exp.txt};
    \addplot[draw=black,fill=green!20] table [x=xaaa,y=I] {exp.txt};
    \addplot[black,sharp plot,dashed]
coordinates {(-0.5,126000) (12,126000)}
;
    \end{axis}

    \begin{axis}[
        name=ax2,
        ymax=80,
        scale only axis,
        width=6cm,
        height=4cm,
        ylabel=confirm time  (second),
        ytick pos=right,
        yticklabel style={xshift=1.5ex},
        xtick={0.5,2.5,4.5,6.5},
        xticklabels={\HBA, PBFT, ADD+19, Algorand},
        xticklabel style={yshift=-5mm},
    ]
    \addplot [color=blue, only marks, mark=o,] plot [error bars/.cd, y dir = both, y explicit]
coordinates {(0,21.664)+-(0,2.411) (0.35,21.978)+-(0,1.911) (0.7,22.743)+-(0,1.187)};
    \addplot [color=purple, only marks, mark=o,] plot [error bars/.cd, y dir = both, y explicit]
coordinates {(2.1,35.484)+-(0,3.664) (2.45,34.88)+-(0,0.769) (2.8,35.782)+-(0,0.731)};
    \addplot [color=black, only marks, mark=o,] plot [error bars/.cd, y dir = both, y explicit]
coordinates {(4.3,70.27)+-(0,0.009) (4.65,70.27)+-(0,0.005) (5,70.267)+-(0,0.004)};
    \addplot [color=green, only marks, mark=o,] plot [error bars/.cd, y dir = both, y explicit]
coordinates {(6.3,64.911)+-(0,1.977) (6.65,66.213)+-(0,0.572) (7,63.836)+-(0,1.842)};
    \end{axis}
\node at (-0.6, -0.35) {\# nodes $\rightarrow$};
\node at (1.3, 3) {network recover};


    \end{tikzpicture}
    \caption{Bandwidth cost and confirmation time of each Byzantine Agreement in partitioned network, where bar chart (left-side) shows the number of message and scatter diagram (right-side) shows the confirm time}
    \label{fig:partition1}
    \end{figure}
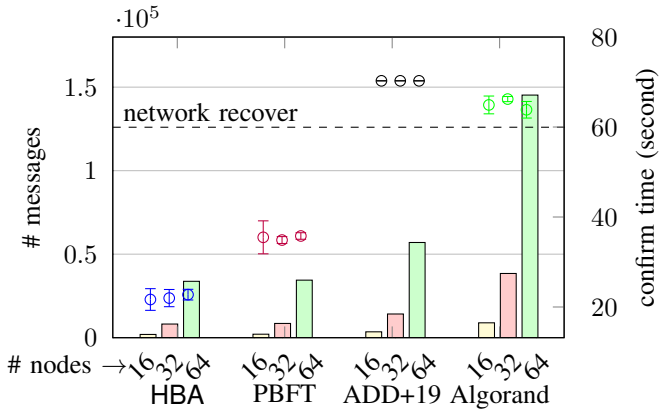

In this experiment, the agreement holds for all the protocols.
From Figure \ref{fig:partition1}, we can see that all four protocols terminate successfully.
In particular, \HBA ~and PBFT terminates before the network is recovered (at 60 seconds).

Concretely speaking, when running \HBA, the honest node $q$ updates $\lockvalue$ when it receives $2\tmax + 1$ pre-commit messages at the current iteration.
In other words, as long as the lock condition is triggered before the forward condition, the honest node $q$ will update $\lockvalue$ and broadcast the commit message of $\lockvalue$ in the next iteration.
Then, honest nodes terminate when they receive $2\tmax + 1$ commit messages.

In order to prevent honest nodes from termination by delaying messages, the adversary needs to trigger the forward condition before the lock condition.
However, such condition rarely happens in practice if the network is not manipulated maliciously.

As for PBFT, the timeout scales up when the view change happens, so once the timeout exceeds the delay, the protocol terminates.
For ADD+19, the protocol is design for the synchronous network, and the partition-resilience is not claimed in their paper, but the protocol terminates after the partition is resolved
\footnote{When the messages are delayed maliciously, the agreement of ADD+19 may be broken. However, such a network condition is beyond their assumption.}.
As Algorand claimed, the protocol terminates immediately after the network is recovered.



\paragraph{Bandwidth Usage}
Finally, we give a short remark to the bandwidth usages.
The numbers of messages are highly related to bandwidth usages.
From Figure \ref{fig:partition1}, the numbers of messages are similar for \HBA ~and PBFT under different participating nodes.
The numbers of messages of ADD+19 and Algorand are more than 69\% and 330\% larger than \HBA~for any setting, respectively.

\section{Conclusion}
\label{section:conclusion}
In this paper, we figure out what counts a suitable BA for blockchains and give the concrete constructions that achieve the properties.
We discuss three desired properties from the aspects of incentive model, security and performance.
The first property is fair validity, and we prove two impossibilities: any BA cannot achieve weakly fair validity in the asynchronous network, and any responsive BA cannot achieve strongly fair validity.
The second property is partition-resilience because the real-world internet is sometimes unstable or attacked by adversaries.
The third property is responsiveness because the latency is usually limited by the time bounds of the synchronous BAs.

We also give two constructions, \RBA ~and \HBA, to demonstrate these properties.
The first protocol, \RBA, achieves strongly fair validity and partition-resilience.
Based on \RBA, the second protocol, \HBA, achieves weakly fair validity, partition-resilience, and responsiveness.
Moreover, comparing to PBFT, \HBA ~enjoys a better resistance to DDoS and better latency in the network partition.
With these properties, \HBA~strikes a balance between fairness, security, and performance.

\bibliographystyle{ieeetr}
\bibliography{BA}

\begin{appendices}

\section{Definition of VRF}
\label{appvrf}
The definition is paraphrased from \cite{vrf}.
\begin{defi}[verifiable random function]
Let $(\keygen, \prove, \veri)$ be a 3-tuple polynomial-time algorithm, where
  \begin{enumerate}
      \item \keygen ~takes as input a security parameter $\kappa$ and outputs a pair of key $(pk,sk)$.
      \item \prove ~takes as input a seed $x$ and a secret key $sk$; it outputs a value $F_{sk}(x)$ and a proof $\pi_{sk}(x)$.
      \item \veri ~takes as input $(pk,x,y,\pi)$; it verifies whether $y = F_{sk}(x)$ by using the proof $\pi$ and key $pk$.
  \end{enumerate}
Let $a:\mathbb{N} \rightarrow \mathbb{N}\cup \{*\}$ and $a:\mathbb{N} \rightarrow \mathbb{N}$ be any functions such that $a(\kappa)$ and $b(\kappa)$ are computable in time $poly(\kappa)$.
We say $(\keygen, \prove, \veri)$ is a verifiable random function with input length $a(\kappa)$ and output length $b(\kappa)$ if the following properties hold:
  \begin{enumerate}
      \item \textbf{Correctness.} If $(y,\pi) = \prove(sk,x)$, then \[
      \Pr[\veri(pk,x,y,\pi) = yes] \geq 1 - \negl(\kappa).
      \]
      \item \textbf{Uniqueness.} For every $(pk,x,y_1,y_2,\pi_1,\pi_2)$ such that $y_1\neq y_2$, the following holds for either $i=1$ or $i=2$: \[
      \Pr[\veri(pk,x,y_i,\pi_i) = yes] \leq \negl(\kappa).
      \]
      \item \textbf{Pseudorandomness.} 
      (Sketched) Any probabilistic polynomial time adversary cannot distinguish the output of a VRF from a uniform random variable.
  \end{enumerate}
\end{defi}

Intuitively, pseudorandomness requires that the output of a VRF should be indistinguishable from a string sampled from a uniform distribution.

\section{Proof in \RBA} \label{app:rba}

\subsection{Agreement}
\begin{lemma*}
Assume $t \leq \tmax$. Suppose a node $p$ receives $2\tmax +1$ commit messages of $v_p$ and another node $q$ receives $2\tmax +1$ commit messages of $v_q$. If both these $2\tmax +1$ commit messages all come from the iteration $r$, then $v_p = v_q$.
\end{lemma*}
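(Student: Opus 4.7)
The plan is a standard quorum-intersection argument. Let $C_p$ denote the set of nodes from which $p$ receives commit messages of $v_p$ at iteration $r$, and let $C_q$ denote the set of nodes from which $q$ receives commit messages of $v_q$ at iteration $r$. By hypothesis $|C_p|, |C_q| \geq 2\tmax+1$, while the total population satisfies $n \leq 3\tmax+1$.

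First I would lower-bound the intersection. By inclusion–exclusion,
\[
|C_p \cap C_q| \;\geq\; |C_p| + |C_q| - n \;\geq\; (2\tmax+1) + (2\tmax+1) - (3\tmax+1) \;=\; \tmax+1.
\]
Since there are at most $t \leq \tmax$ Byzantine nodes, at least one node $h \in C_p \cap C_q$ must be honest. By construction, the commit messages $(\com, v_p, h, r)$ and $(\com, v_q, h, r)$ have both been received, with valid signatures of $h$.

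Next I would argue that an honest node broadcasts at most one commit message for each iteration index. Inspecting Algorithm~\ref{algorithm:ba}, the only location where an honest node emits a commit message is Step~3, which broadcasts $(\com,\lockvalue,q,r_q)$ exactly once per iteration that the node enters. The forward conditions only ever \emph{increase} $r_q$, so $h$ never re-executes Step~3 for the same iteration index $r$; and by the authentication assumption no attacker can forge a signed commit message in $h$'s name. Hence $h$'s commit message at iteration $r$ is unique, forcing $v_p = v_q$.

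The main obstacle is really this uniqueness step: one must rule out that the two commit messages ascribed to $h$ could differ, which could only happen either if $h$ broadcast Step~3 twice for the same $r$ (impossible since $r_q$ is monotone) or if an adversary spoofed $h$'s signature (impossible by the PKI/signature assumption from Section~\ref{section:model}). Everything else is a direct counting argument.
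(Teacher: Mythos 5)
Your proof is correct and follows essentially the same route as the paper's: a quorum-intersection/pigeonhole argument showing some honest node lies in both commit sets, combined with the fact that an honest node commits at most one value per iteration. You simply make explicit the counting (via inclusion--exclusion with $n \leq 3\tmax+1$) and the uniqueness/non-forgeability of an honest node's commit, which the paper asserts without elaboration.
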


\begin{proof}[Proof of lemma \ref{lemma:agreement1}]
We prove this lemma by contradiction. Suppose $v_p \neq v_q$. Because as many as $\tmax$ Byzantine nodes exist, there exists at least one honest node that both commits on $v_p$ and $v_q$ by the pigeonhole principle. 
However, honest nodes can only commit on one value at one iteration, which leads to a contradiction.
\end{proof}

\begin{theorem*}[Agreement]
Assume $t \leq \tmax$ and the adversary is adaptive. Regardless of partition, if an honest node $p$ decides on some value $v_p$ and another honest node $q$ decides on some value $v_q$, then $v_p = v_q$. That is, the honest nodes will never decide on different values.
\end{theorem*}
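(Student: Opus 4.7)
My plan is to split the argument into two cases according to whether the two honest nodes decide in the same iteration. In the same-iteration case I would simply invoke Lemma \ref{lemma:agreement1}. The remaining case is when, without loss of generality, $p$ decides on $v_p$ at iteration $r_p$ and $q$ decides on $v_q$ at some later iteration $r_q > r_p$; here I would derive a contradiction via an induction that shows $v_q$ must coincide with $v_p$.

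The heart of the argument is the invariant that, once a decision for $v_p$ is recorded at iteration $r_p$, a sizeable set of nodes is permanently ``anchored'' to $v_p$. Concretely, the $2\tmax+1$ commit messages for $v_p$ seen at iteration $r_p$ contain at least $\tmax+1$ messages from nodes that were honest at the moment of sending and had $\lockvalue=v_p$ at Step 3 of iteration $r_p$; call this set $H_p$. A bookkeeping calculation I would work out carefully is that, even against an adaptive adversary with total corruption budget $\tmax$, at every future iteration $r$ the subset $H_p^{(r)}$ of nodes in $H_p$ that are still honest at iteration $r$ has size at least $\tmax+1$. This is because if $t_{r_p}$ nodes are corrupted by iteration $r_p$, then $|H_p|\ge (2\tmax+1)-t_{r_p}$, and the number of further corruptions available to the adversary is at most $\tmax-t_{r_p}$, leaving $|H_p^{(r)}|\ge \tmax+1$.

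The induction itself is on $r\ge r_p$ and proves the following claim: no value $v'\neq v_p$ can accumulate $2\tmax+1$ pre-commit messages at iteration $r$. Assuming this holds up through iteration $r-1$, any honest node in $H_p^{(r)}$ must still carry $\lockvalue=v_p$ at iteration $r$, because updating the lock requires a supermajority of pre-commits for some value, and the inductive hypothesis forbids such a supermajority for any value different from $v_p$. Since an honest node with $\lockvalue=v_p\in V$ broadcasts exactly $(\pre,v_p,q,r_q)$ at Step 2, none of the $\geq\tmax+1$ members of $H_p^{(r)}$ can appear among the $2\tmax+1$ pre-committers of $v'$. Combined with $n\le 3\tmax+1$, this forces a set of pre-committers for $v'$ of size at most $n-|H_p^{(r)}|\le 2\tmax$, contradicting the supposition. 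Applying the claim to the iteration at which the honest supermajority locks on $v_q$ yields $v_q=v_p$.

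The main obstacle I expect is handling the adaptive adversary and the degenerate value $v_p=\bot$ cleanly. The adaptive case is the delicate one: I have to argue that even though the adversary can corrupt members of $H_p$ after $r_p$, the corruption budget already ``spent'' at $r_p$ limits the future damage exactly enough to preserve $|H_p^{(r)}|\ge \tmax+1$. For $v_p=\bot$, the pre-commit rule sends honest nodes with $\lockvalue\in\{\SKIP,\bot\}$ to the leader's value rather than to $\bot$, so the invariant needs to be refined: I would show that the commit step at $r_p$ still requires $\tmax+1$ honest nodes to have had the lock condition achieved for $\bot$, which rules out any competing value acquiring $2\tmax+1$ pre-commits in that iteration, and then carry the argument forward. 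Note that partition-resilience comes for free here, since the induction never uses any timing assumption on message delivery — it depends only on quorum intersection.
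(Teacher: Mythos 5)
Your proposal is correct and follows essentially the same route as the paper's proof: Lemma~\ref{lemma:agreement1} disposes of the same-iteration case, and for $r_p<r_q$ both arguments anchor at least $\tmax+1$ honest committers of $v_p$ whose persistent pre-commits prevent any competing value from ever reaching a $2\tmax+1$ pre-commit quorum, which forces $v_q=v_p$. Your two refinements --- the explicit corruption-budget count $|H_p^{(r)}|\ge(2\tmax+1)-t_{r_p}-(\tmax-t_{r_p})=\tmax+1$ against the adaptive adversary, and the observation that a node with $\lockvalue=\bot$ pre-commits the leader's value rather than $\bot$ so the anchoring invariant genuinely needs repair when $v_p=\bot$ --- address points the paper's proof passes over silently, so your write-up is if anything the more careful of the two (though note that your sketched fix for the $\bot$ case only controls iteration $r_p$ itself and would still need an argument for later iterations).
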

\begin{proof}[Proof of theorem \ref{thm:baagreement}]
Because $p$ decides on $v_p$ and $q$ decides on $v_q$, $p$ and $q$ must see $2\tmax +1$ commit messages of $v_p$ and $2\tmax +1$ commit messages of $v_q$, respectively. Suppose both these $2\tmax +1$ commit messages come from the same iteration $r$. By Lemma \ref{lemma:agreement1}, we have $v_p = v_q$.

Suppose the $2\tmax +1$ commit messages that $p$ receives come from the iteration $r_p$ and the $2\tmax +1$ commit messages that $q$ receives come from the iteration $r_q$. Without loss of generality, we assume $r_p < r_q$. Because there are up to $\tmax$ Byzantine nodes, there must be at least $\tmax+1$ honest nodes commit on $v_p$ so that $p$ can receive $2\tmax +1$ commit messages of $v_p$. 
For all iterations $r > r_p$, these $\tmax+1$ honest nodes will always pre-commit on $v_p$ until they see $2\tmax +1$ pre-commit messages of $v' \neq v_p$. 
However, only $2\tmax$ nodes remain, so these $\tmax+1$ honest nodes will never pre-commit any $v' \neq v_p$ for all $r > r_p$. Thus, for all $r > r_p$, if some value $v$ has $2\tmax +1$ pre-commit messages, then $v = v_p$. 

Because $q$ receives $2\tmax +1$ commit messages of $v_q$, there must exist at least $\tmax+1$ honest nodes that commit on $v_q$ at the iteration $r_q$. These $\tmax+1$ honest nodes commit on $v_q$ only if they have seen $2\tmax +1$ pre-commit messages of $v_q$ at iteration $r_q$. Therefore, $v_q = v_p$.
\end{proof}

\subsection{Termination}
\begin{proposition*}[Termination without partition in adaptive adversary]
Assume $t \leq \tmax$ and the adversary is adaptive. If all the honest nodes start at the $r$-th iteration within time $\lambda$ and no partition exists, all the honest nodes will decide on some values in $t+1$ iterations. 
\end{proposition*}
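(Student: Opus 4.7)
The plan is to combine a timing analysis, a quorum-intersection invariant governing locks, and a leader-progress argument that bounds the number of ``bad'' iterations by $t$.

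First I would verify the timing claim. Because all honest nodes enter iteration $r$ within $\lambda$ of one another and the adversary cannot delay any message beyond $\lambda$, every pre-commit (respectively commit) sent by an honest node at Step 2 (respectively Step 3) is delivered to every other honest node before the latter exits Step 4. Hence either some honest node triggers the decide condition in iteration $r$ and, by propagation of its $2\tmax+1$ commit-quorum via forward condition 3, all honest nodes decide within a further $\lambda$; or all honest nodes enter iteration $r+1$ within a new $\lambda$-window. Termination therefore reduces to reaching a successful iteration within $t+1$ such windows.

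Second I would invoke a quorum intersection argument. Because $2(2\tmax+1) > n$, any two $(2\tmax+1)$-sized pre-commit quorums share at least $\tmax+1$ senders, hence at least one honest sender; so honest lock updates in a single iteration are unique, as in Lemma \ref{lemma:agreement1}. More importantly, once $\tmax+1$ honest nodes are locked on a common value $v$, any $v'\neq v$ can accrue at most $n-(\tmax+1)=2\tmax$ pre-commits in any future iteration, so henceforth only $v$-quorums can form and, combined with the propagation argument of Theorem \ref{thm:baagreement}, honest lockvalues never diverge.

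Third comes the leader-progress argument. An honest node's leader $\ell_q$ is the VRF-argmin over its initial-message set $U_q$, which always contains every honest VRF value. The only ways an adaptive adversary can cause honest nodes to see different leaders, or to elect a Byzantine leader, are (a) to inject or selectively deliver a Byzantine initial message whose VRF value lies below the current honest minimum, or (b) to adaptively corrupt the apparent leader. Each such move permanently consumes one of at most $t$ Byzantine identity slots; so within $t+1$ iterations at least one iteration has all honest nodes agreeing on the same honest leader $\ell$. In that iteration, combining the invariants above with the timing lemma, every honest node pre-commits the same value -- $\ell$'s proposal if no prior lock exists, or the common lock value otherwise (which by step two is the only feasible lock value) -- a $2\tmax+1$ pre-commit quorum forms, everyone locks and commits, and everyone decides.

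The main obstacle is step three: formally bounding the number of disruptive adversarial moves by $t$ in the adaptive setting, since the adversary can interleave corruption decisions with delivery decisions to undo prior leader agreement. I would handle this by defining a potential function -- for instance, the number of distinct Byzantine VRF values lying below the smallest honest VRF value that are not yet visible to every honest node, plus one per uncorrupted apparent leader -- and showing that it strictly decreases (or the iteration terminates) in every non-terminating iteration, bottoming out after at most $t+1$ iterations.
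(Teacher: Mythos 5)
Your overall strategy is the same as the paper's: a timing/propagation argument reduces termination to reaching one ``good'' iteration, and a counting argument shows the adversary can spoil at most $t$ iterations by selectively revealing initial messages, since each Byzantine identity can inject a new low VRF value into the honest nodes' views only once before honest relaying makes it universally visible. However, two steps of your third stage have genuine problems. First, the claim that within $t+1$ iterations ``all honest nodes agree on the same \emph{honest} leader'' is false as stated: since the VRF is evaluated on the fixed $\status$, a Byzantine node holding the globally minimal VRF value who broadcasts its initial message consistently is every honest node's leader in \emph{every} iteration, so an honest-leader iteration may never occur. What the paper actually needs (and proves in its Case 3) is weaker and achievable: an iteration in which all honest nodes compute the \emph{same} leader with a single, already-propagated initial value; that leader may be Byzantine, and the honest nodes then happily decide on its value. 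Your counting should be restated in those terms; relatedly, your ``plus one per uncorrupted apparent leader'' term in the potential function is unnecessary once signed-equivocation is excluded, because corrupting a node whose initial message is already propagated gains the adversary nothing.

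Second, your successful-iteration argument does not handle the mixed state in which some honest nodes hold $\lockvalue \notin \{\SKIP,\bot\}$ and others do not. Your quorum-intersection step only gives \emph{uniqueness} of honest lock values, not \emph{universality}: if $\tmax+1$ honest nodes are locked on $v$ while the remaining honest nodes pre-commit the leader's value $v_\ell \neq v$, neither value is guaranteed a $2\tmax+1$ pre-commit quorum without Byzantine cooperation, and the iteration stalls. The paper closes this by a separate case (its Case 2): as soon as any honest node sees a $2\tmax+1$ pre-commit quorum for $v$, it relays that certificate, so within $\lambda$ \emph{every} honest node satisfies the lock condition and sets $\lockvalue = v$; hence in the next iteration all honest nodes pre-commit $v$ and decide. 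You need this certificate-propagation step (or an equivalent invariant that locks are all-or-nothing among honest nodes up to one iteration of skew) to make your ``common lock value otherwise'' branch go through. With those two repairs your argument aligns with the paper's three-case proof of Proposition \ref{proposition:terminationWOpartition}.
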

\begin{proof}[Proof of proposition \ref{proposition:terminationWOpartition}]
In this proof, we divide all the possibilities into three cases. First, we suppose there is an honest node has decided on some value. Second, we suppose that no honest node has decided, but there exists an honest node has seen $2\tmax +1$ pre-commit messages of the same value. The third case includes all the else possibilities.

\noindent\textit{Case 1: Some honest node has decided.}
If an honest node $p$ has decided on the value $v_p$, $p$ must have seen $2\tmax +1$ commit messages of $v_p$. Because $p$ propagates these $2\tmax +1$ commit messages, all the honest nodes will hold this information after time $\lambda$ and decide on $v_p$ in one iteration.

\noindent\textit{Case 2: Some honest node has seen $2\tmax +1$ pre-commit messages on the same value.}
Suppose no node has decided but there exists an honest node $p$ that has seen $2\tmax +1$ pre-commit messages of a value $v_p$. Because $p$ propagates these $2\tmax +1$ pre-commit messages, all the honest nodes will hold this information after time $\lambda$. 
With these $2\tmax +1$ pre-commit messages, all the honest nodes update their internal variables $\lockvalue = v_p$ according to the condition $1$.
Consequently, all the honest nodes will pre-commit on $v_p$ at the next iteration and thus commit on $v_p$ as well. 
At the end of the next iteration, they will all decide on $v_p$.

\noindent\textit{Case 3: Else possibilities.}
Because no honest node has ever seen $2\tmax +1$ pre-commit messages, $\lockvalue = \bot$ for all honest node $q$. Thus, they will identify their leader by their local view.
Because all honest nodes start at the $r$-th iteration within time $\lambda$, they can receive all the initial values from other honest nodes before identifying the leaders. 
Thus, there exist some honest nodes that pre-commit different values relative to each other only if a Byzantine node proposes different initial values to different nodes\footnote{Note that not proposing any initial value is considered to be equivalent to proposing $\bot$.}. However, the honest nodes will propagate the initial value so all honest nodes will have the same set of initial values after time $\lambda$. Thus, to prevent the honest nodes from agreeing on the same leader, Byzantine nodes must propose different initial values to different nodes at every iteration. However, a node can only propose an initial value once, or it will be caught. Thus, the best strategy of Byzantine nodes is that different Byzantine nodes propose their initial values at different iterations so $t$ Byzantine nodes can only interfere during $t$ iterations. Thus, all the honest nodes will decide on some values in $t+1$ iterations with certainty.
\end{proof}

\begin{proposition*}[Expected termination in static adversary]
Assume $t \leq \tmax$ and the adversary is static.
Suppose all honest nodes start at $r$-th iteration within time $\lambda$ and no partition exists. Then, it is expected that all honest nodes will decide on some values in $8$ rounds.
\end{proposition*}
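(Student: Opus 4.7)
The plan is to bound the expected rounds by treating the iterations as a geometric-style process whose success probability is controlled by the pseudorandomness of the VRF. I would first observe that under a static adversary, the corruption set is fixed before any VRF output is produced, so by the pseudorandomness of $F$ the ordering of the $n$ nodes by their VRF values is essentially a uniformly random permutation (up to negligible error). Consequently, conditional on the nodes holding the smallest $k-1$ VRF values all being Byzantine, the node holding the $k$-th smallest value is honest with probability at least $(n-t)/(n-k+1) \geq (n-t)/n \geq 2/3$.

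Next, I would show that if the ``effective leader'' at iteration $k$ --- meaning the node with the smallest VRF among those not already caught for signing two distinct initial messages tied to the same VRF output --- is honest, then iteration $k$ is terminal. Since the effective leader's initial message is relayed to every honest node within $\lambda$, by $\clock = 2\lambda$ every honest $q$ has it in $U_q$ and no surviving Byzantine produces a smaller VRF in $U_q$; thus every honest $q$ sets $\ell_q$ to the effective leader, pre-commits on the same value, satisfies the lock condition via $\geq 2\tmax+1$ identical pre-commits, commits on this value, and decides. Invoking the ``caught after one interference'' argument from the proof of Proposition \ref{proposition:terminationWOpartition} (each Byzantine can send at most one valid initial message or be excluded), the effective leader at iteration $k$ is indeed the holder of the $k$-th smallest VRF, so the per-iteration conditional success probability is bounded below by $(n-t)/n$.

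Finally, I would assemble the round count. The first iteration costs $6\lambda$ (Step 1 at $\clock = 0$, Step 2 at $2\lambda$, Step 3 at $4\lambda$, last commit delivered by $5\lambda$, plus the $\lambda$-spread in start times), and each subsequent iteration costs $4\lambda$ since it begins at $\clock = 2\lambda$ via the forward condition and ends by $\clock = 5\lambda$ after commits. If $K$ denotes the terminating iteration, the cost is $6 + 4(K-1)$. By stochastic domination over an i.i.d.\ geometric process with parameter $p^{*} = (n-t)/n$, the expected cost is at most
\[
6 + 4 \cdot \frac{1-p^{*}}{p^{*}} \;=\; 6 + \frac{4t}{n-t} \;\leq\; 6 + \frac{4t}{2t+1} \;<\; 8,
\]
with the bound 8 approached in the limit $n = 3\tmax+1$, $t \to \infty$; the tail is finite because Proposition \ref{proposition:terminationWOpartition} deterministically caps $K$ at $t+1$. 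The main obstacle is formalizing the ``caught'' mechanism: showing that a Byzantine node issuing two distinct signed initial messages bound to the same VRF output is unambiguously detected by all honest nodes, and that this detection propagates fast enough (within one iteration) to exclude the node from the next leader election, so that the effective-leader identity at iteration $k$ really coincides with the $k$-th smallest VRF rank as the probabilistic argument requires.
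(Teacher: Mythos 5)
Your proposal is correct and follows essentially the same route as the paper's proof: both reduce to the observation (from Case 3 of Proposition \ref{proposition:terminationWOpartition}) that an iteration with an honest leader is terminal and that each Byzantine node can interfere at most once before being caught, then bound the expected cost by a geometric process with success probability at least $(n-t)/n \geq 2/3$, yielding $6 + 4t/(n-t) \leq 8$ rounds with the deterministic cap of $t+1$ iterations. The only differences are cosmetic: you make the VRF-pseudorandomness reduction and the conditional probability $(n-t)/(n-k+1)$ explicit, whereas the paper writes the exact product $\prod_{i=0}^{k-1}\frac{t-i}{n-i}\cdot\frac{n-t}{n}$ directly; the ``caught'' mechanism you flag as the main obstacle is likewise left informal in the paper.
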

\begin{proof}[Proof of proposition \ref{proposition:terminationExpected}]
From the proof of Proposition \ref{proposition:terminationWOpartition}, we know that if some honest node has decided on value $v$ or has seen $2\tmax +1$ pre-commit messages of a value $v$, then all the honest nodes will decide on $v$ in one iteration.

In a network without partition, the best strategy for the Byzantine nodes has been described in Case 3 in the proof of Proposition \ref{proposition:terminationWOpartition}. However, to interfere with $k$ iterations successfully, the Byzantine nodes must win the leadership in the following $k$ iterations. The probability of such an event is \[
\prod_{i=0}^{k-1} \left(\frac{t-i}{n-i}\right)(\frac{n-t}{n}) \leq \left(\frac{t}{n}\right)^k(\frac{n-t}{n}).
\]
Thus, in expectation, the number of rounds can be computed by \[
\sum_{i=0}^t \left(\frac{t}{n}\right)^i(\frac{n-t}{n}) \cdot (6+4i) \leq 6+\frac{4 \cdot \frac{t}{n}}{(1-\frac{t}{n})}.
\]
Because $n \geq 3\tmax+1 \geq 3t+1$, the expected number of rounds is 8. 
\end{proof}

\begin{proposition*}[Fast recovery from partition in adaptive adversary]
Assume $t \leq \tmax$ and the adversary is adaptive.
If the partition is resolved, all the honest nodes will decide on some values in $t+2$ iteration. 
If the adversary is static, it is to be expected that all honest nodes will decide on some values in $12$ rounds.
\end{proposition*}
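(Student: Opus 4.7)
The plan is to reduce the post-partition situation to the already-analyzed partition-free case at the cost of one extra ``synchronization'' iteration. Informally: once the partition is resolved, every message that honest nodes sent during the partition is delivered within $\lambda$ time, and from then on delivery respects the synchronous bound. My first step would be to use this to argue that, within $\lambda$ time after resolution, every honest node catches up to the same iteration. Concretely, let $r^{*}$ be the largest iteration index any honest node is currently working at. That node reached $r^{*}$ either by the Step~4 timer or by a forward condition, so in particular it must have received (and therefore re-broadcast) the $2\tmax+1$ pre-commit or commit messages that justified the jump. After partition resolution, all other honest nodes receive those messages within $\lambda$ and, by the forward condition, move to iteration $r^{*}$ and start at Step~2 with $\clock = 2\lambda$.

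The second step is to handle any value that may have been locked during the partition. The agreement argument in the proof of Theorem~\ref{thm:baagreement} already shows that at most one value $v$ can ever accumulate $2\tmax+1$ pre-commit messages across all iterations once the first such event occurs, and that any honest node locking a value locks this same $v$. So after the synchronization iteration, honest nodes' $\lockvalue$ is either $\SKIP$, $\bot$, or this unique $v$. I would then split into the same three cases as Proposition~\ref{proposition:terminationWOpartition}: if some honest node has already decided, one extra iteration of propagation terminates everyone; if some honest node has seen $2\tmax+1$ matching pre-commits, all honest nodes will lock in the next iteration and commit the one after; and otherwise the leader-election argument kicks in and Byzantine nodes can stall progress for at most $t$ iterations by winning leadership. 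Adding the single iteration needed to synchronize yields the claimed $t+2$ iterations.

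For the static-adversary, expected-$12$-rounds part, I would mirror the expectation computation from the proof of Proposition~\ref{proposition:terminationExpected}. Once honest nodes are synchronized at iteration $r^{*}+1$, the probability that the adversary controls the next $k$ consecutive leaders is bounded by $(t/n)^{k}$, and the same geometric sum
\[
\sum_{i=0}^{t}\left(\tfrac{t}{n}\right)^{i}\!\left(\tfrac{n-t}{n}\right)(6 + 4i) \;\leq\; 8
\]
gives an expected $8$ rounds from the synchronization point. Charging an extra $4\lambda$ (one iteration, since the post-synchronization iteration restarts from Step~2) for the synchronization itself gives the $12$-round expectation.

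The main obstacle I foresee is being careful about the exact starting state of the ``synchronization'' iteration: honest nodes at $r^{*}$ may have entered Step~2 at different real times and with different internal $\lockvalue$'s, and I need to verify that the forward-condition catch-up really does land all of them in Step~2 of iteration $r^{*}$ simultaneously (up to the usual $\lambda$ skew) so that Proposition~\ref{proposition:terminationWOpartition} can be applied verbatim from iteration $r^{*}+1$ onwards. The uniqueness of any locked value, inherited from the agreement proof, is what prevents the partition-era locks from derailing this and keeps the overhead to a single extra iteration rather than something that scales with the partition duration.
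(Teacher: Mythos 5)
Your proposal is correct and follows essentially the same route as the paper: handle the already-decided case by propagation of the $2\tmax+1$ commit messages within $\lambda$, use the forward condition to pull every honest node up to the latest iteration (your $r^{*}$ is the paper's $r_p$) within $\lambda$ of resolution, then invoke Proposition~\ref{proposition:terminationWOpartition} from the next iteration for the $t+1$ bound, and add one $4\lambda$ iteration to the expected $8$ rounds of Proposition~\ref{proposition:terminationExpected} to get $12$. Your extra care about the uniqueness of any partition-era locked value is a sensible elaboration of a point the paper leaves implicit, but it does not change the argument.
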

\begin{proof}[Proof of proposition \ref{proposition:partitionResolved}]
If there exists a node $p$ that has decided on a value $v_p$, $p$ must have seen $2\tmax +1$ commit messages of $v_p$. All the honest nodes will receive these $2\tmax +1$ commit messages of $v_p$ within time $\lambda$ after the partition is resolved and decide on $v_p$.

Suppose no node has decided and $p$ is the node working on the latest iteration $r_p$. To enter the iteration $r_p$, $p$ must achieve the forward condition at iteration $r_p - 1$. Because the partition is resolved, all honest nodes will also achieve the forward condition within time $\lambda$ after the partition is resolved and also enter the iteration $r_p$. Later on, if some node $q$ achieves the forward condition and enters the iteration $r_p + 1$, other honest nodes will also achieve the forward condition within time $\lambda$. Thus, all honest nodes start at the iteration $r_p + 1$ with time difference $< \lambda$ and Proposition \ref{proposition:terminationWOpartition} guarantees that they will decide on some values within the following $t+1$ iterations. 

Because each iteration costs $4\lambda$, similarly, if the adversary is static, it is to be expected that all honest nodes will decide on some values in $8 + 4$ rounds according to Proposition \ref{proposition:terminationExpected}.
\end{proof}

\subsection{Strongly Fair Validity}
We first show that the probability is exactly lower-bounded by the uniform distribution in the ideal world.
Then, we show that \RBA ~works the same as the ideal world except the negligible probability.

We define the VRF oracle, consisting of two algorithm: $\Oprove$ and $\Overi$.
$\Oprove$ is defined as:
\begin{enumerate}
    \item Take as input a seed $x$ and a secret key $sk$.
    \item Return $\prove(x,sk)$.
\end{enumerate}
$\Overi$ is defined as:
\begin{enumerate}
    \item Take as input a public key $pk$, a seed $x$, a value $y$ and a proof $\pi$.
    \item Return $\veri(pk,x,y,\pi)$.
\end{enumerate}
We also define the ideal functionality of VRF, consisting of two algorithm: $\Iprove$ and $\Iveri$.
$\Iprove$ is defined as:
\begin{enumerate}
    \item Takes as input a seed $x$ and a secret key $sk$.
    \item Check whether $Q(x,sk)$ is defined. If not, choose $y\leftarrow \{0,1\}^\ell$ and $\pi\leftarrow \{0,1\}^\ell$ uniformly at random. Then, set $Q(x,sk) = (y,\pi)$.
    If $Q(x,sk)$ is defined, $\Iprove(x,sk)$ return $Q(x,sk)$.
\end{enumerate}
$\Iveri$ is defined as:
\begin{enumerate}
    \item Takes as input a public key $pk$, a seed $x$, a value $y$ and a proof $\pi$.
    \item Check whether $Q(x,sk)$ is defined. If not, return false; otherwise, return true.
\end{enumerate}

In the ideal world, all the nodes does not compute and verify the value of VRF locally.
Instead, they query the oracle $\Iprove$ and $\Iveri$.
All the else operations are the same as \RBA.

\begin{lemma}[fairness in the ideal world]
\label{lemma:idealworld}
Suppose the network is synchronous.
Then, in the ideal world, for all adversaries and for all $q \in \mathcal{H}$, conditioned on all the honest nodes have decided on some values, it holds that
\begin{align}
    \Pr[q\text{'s value is the decided by some honest node}] \geq \frac{1}{n}.
\end{align}
\end{lemma}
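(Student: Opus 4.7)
The plan is to exhibit a clean good event $E_q$ in the ideal execution that (i) has probability at least $1/n$ and (ii) forces every honest node to decide on $v_q$. I would take $E_q$ to be the event that $q$'s pseudorandom value $y_q$ is the smallest among the $n$ values $y_1, \ldots, y_n$ produced by the oracle $\Iprove$ on the participants' secret keys.

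To establish (i), I would use that in the ideal world $\Iprove$ is a lazily sampled uniform random function keyed by $(x, sk)$. Since the seed $x = \status$ is pre-determined public information fixed only after the commit-and-reveal PKI of Section \ref{section:model} has locked in all participants (for instance a hash of all registered public keys), the adversary cannot grind its Byzantine secret keys to bias the corresponding $y$-values. Hence $y_1, \ldots, y_n$ are iid uniform on $\{0,1\}^\ell$, and by symmetry (with any symmetric tie-breaking) $\Pr[E_q] \geq 1/n$.

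To establish (ii), I would use synchrony. Since all honest nodes start within $\lambda$ of one another, the initial message $(\init, v_q, q, \langle y_q, \pi_q \rangle)$ broadcast by $q$ at Step 1 is delivered to every other honest $p$ before $p$'s local clock reaches $2\lambda$; hence $q \in U_p$. On $E_q$, $y_q$ is no larger than every entry of $U_p$, so each honest $p$ sets $\ell_p = q$ and, since iteration $1$ starts with $\lockvalue = \SKIP$, broadcasts $(\pre, v_q, p, 1)$ at Step 2. As there are at least $n - t \geq 2\tmax + 1$ honest nodes, every honest node then sees $2\tmax + 1$ pre-commit messages of $v_q$, triggers the lock condition, commits $v_q$ at Step 3, and decides $v_q$ once $2\tmax + 1$ matching commit messages arrive.

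Combining (i) and (ii), the unconditional probability that $v_q$ is decided by some honest node is at least $1/n$. Because a decision implies termination, conditioning on the event that every honest node has decided only inflates this probability (dividing by $\Pr[\text{termination}] \leq 1$), which yields the bound stated in the lemma. The step I expect to be most delicate is (i): against an adaptive adversary one has to ensure that the ordering of $\Iprove$ queries and corruption decisions does not break the symmetry argument, and I would formalize this by coupling the ideal execution with one in which all $y$-values are sampled independently at setup.
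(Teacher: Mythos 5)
Your proposal is correct and follows essentially the same route as the paper's own proof: the good event is $q$ attaining the minimum ideal-world VRF value (probability $1/n$ by uniformity), and synchrony then forces every honest node to identify $q$ as leader, pre-commit, lock, commit, and decide on $v_q$. Your additional care about key-grinding, tie-breaking, and the conditioning on termination only tightens details the paper leaves implicit.
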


\begin{proof}[Proof of lemma \ref{lemma:idealworld}]
Because the VRF value $y_i$ are chosen uniformly at random for all nodes $i$ in the ideal world, the probability that the node $q$ wins the minimum value among $\min_{i\in \{1,\cdots,n\}}y_i$ (the leadership) is exact $\frac{1}{n}$.

Once the node $q$ wins the leadership, all the honest nodes will broadcast the pre-commit messages on $v_q$ at $2\lambda$ and broadcast the commit messages on $v_q$ at $4\lambda$ because the network is synchronous.
In this case, $q$'s value will be decided by all hones nodes.
\end{proof}

\begin{theorem*}[strongly fair validity]
Suppose the network is synchronous and $F$ is a secure VRF.
Then, \RBA ~achieves strongly fair validity under the assumption of static adversary.
\end{theorem*}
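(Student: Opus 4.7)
The plan is to reduce the real-world execution of \RBA~to the ideal-world execution defined just before Lemma \ref{lemma:idealworld}, and then invoke that lemma. Fix an arbitrary static adversary $\mathcal{A}$ corrupting $t \leq \tmax$ nodes and an arbitrary honest node $q \in \mathcal{H}$. Let $p_{\text{real}}$ and $p_{\text{ideal}}$ denote the probability that $v_q$ is decided by some honest node when \RBA~is executed against $\mathcal{A}$ in the real world (using $F$) and in the ideal world (using $\Iprove$, $\Iveri$), respectively. Lemma \ref{lemma:idealworld} gives $p_{\text{ideal}} \geq 1/n$, so it suffices to show $|p_{\text{real}} - p_{\text{ideal}}| \leq \negl(\kappa)$.

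To obtain this gap, I would set up a standard hybrid argument over the honest nodes. Enumerate the honest nodes as $q_1,\ldots,q_h$ and define $\Hyb_i$ for $i = 0,1,\ldots,h$: in $\Hyb_i$ the VRF outputs of honest nodes $q_1,\ldots,q_i$ are drawn through the ideal functionality $(\Iprove, \Iveri)$, while honest nodes $q_{i+1},\ldots,q_h$ still use the real VRF $F$; Byzantine nodes always use the keys they chose themselves. Then $\Hyb_0$ is the real execution and $\Hyb_h$ is the ideal execution, so a triangle inequality reduces the task to bounding the gap between consecutive hybrids $\Hyb_i$ and $\Hyb_{i+1}$.

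For the consecutive hybrid step, I would build a VRF pseudorandomness distinguisher $D$ targeting $q_{i+1}$. Using the static adversary assumption, $\mathcal{A}$ commits to its corruption set before any keys are revealed, so $D$ can embed its challenge public key as $q_{i+1}$'s key, generate every other node's keys honestly, obtain a single challenge response on the fixed seed $\status$ (which is either $F_{sk_{q_{i+1}}}(\status)$ or a uniform random pair), and simulate the entire \RBA~execution including network scheduling controlled by $\mathcal{A}$; $D$ outputs $1$ iff the simulated execution decides $v_q$. The advantage of $D$ matches the hybrid gap, hence is negligible by pseudorandomness of $F$.

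The main obstacle is ensuring that the reduction is fully simulatable. I need that (a) honest nodes only ever call the VRF on the single input $\status$, so one challenge query really suffices, which is the case by the specification of the initial message; (b) verifications of Byzantine nodes' proofs still work without knowledge of their secret keys, handled directly by running $\veri$; and (c) the uniqueness property of the VRF prevents Byzantine nodes from equivocating on their own VRF values across honest nodes' views, so that $U_q$ and $\ell_q$ are well-defined uniformly. Once these are in place, summing the negligible gap over the polynomially many hybrids yields $p_{\text{real}} \geq 1/n - \negl(\kappa)$, and since $q$ was arbitrary, \RBA~satisfies strongly fair validity.
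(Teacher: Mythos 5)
Your proposal follows essentially the same route as the paper: it invokes Lemma \ref{lemma:idealworld} for the ideal world and bridges to the real execution by a per-node hybrid argument that swaps the ideal VRF functionality for the real VRF, with the gap bounded by VRF security over polynomially many hybrids. Your write-up is somewhat more explicit about the reduction (embedding the challenge key, the single seed $\status$, and uniqueness against equivocation), but the decomposition and key lemma are the same as the paper's.
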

\begin{proof}[Proof of theorem \ref{thm:rbafair}]
We prove it by the hybrid argument.
Let $\Hyb_1$ be the protocol the same as the ideal world except that the node $q_1$ queries $\Oprove$ and $\Overi$ instead of $\Iprove$ and $\Iveri$, respectively.
Then, for all $i\in\{2,\cdots,n\}$, let $\Hyb_i$ be the protocol the same as $\Hyb_{i-1}$ except that the node $q_i$ queries $\Oprove$ and $\Overi$ instead of $\Iprove$ and $\Iveri$, respectively.

Because $F$ is a secure VRF, the behavior of $(\Oprove,\Overi)$ is indistinguishable from $(\Iprove,\Iveri)$.
Thus, the ideal world is indistinguishable from $\Hyb_1$.
Similarly, for all $i\in\{2,\cdots,n\}$, $\Hyb_{i-1}$ is indistinguishable from $\Hyb_i$.
Because $n$ is bounded by $poly(\kappa)$, the ideal world is indistinguishable from $\Hyb_n$.

Then, the honest nodes in \RBA ~always compute VRF correctly.
So, there is no different for the honest nodes that whether the VRF is computed locally or is queried by $(\Oprove,\Overi)$.
Therefore, $\Hyb_n$ is indistinguishable from \RBA.

Combining the arguments above, we have that the ideal world is indistinguishable from \RBA.
That is, there exists a negligible function $\eta$ such that \RBA ~works the same as the ideal world except the negligible probability $\eta(\kappa)$.
Let $X_q$ be the event that $q$'s value is the decided by some honest node conditioned on \RBA ~works the same as the ideal world for the node $q$.
Let $\overline{X_q}$ be the event that $q$'s value is the decided by some honest node conditioned on \RBA ~does not work the same as the ideal world for the node $q$.
Combine the result with Lemma \ref{lemma:idealworld}, we have that in \RBA, for all $q \in \mathcal{H}$,
\begin{align}
  &\Pr[q\text{'s value is the decided by some honest node}]\\
= &\Pr[X_q] \cdot (1-\eta(\kappa)) + \Pr[\overline{X_q}] \cdot \eta(\kappa) \\
\geq &\frac{1}{n} \cdot (1-\eta(\kappa)).
\end{align}
Thus, \RBA ~achieves strongly fair validity.
\end{proof}

\section{Proof in \HBA} \label{app:hba}
\subsection{Agreement}
\begin{theorem*}[Agreement of \HBA]
Assume $t \leq \tmax$ and the adversary is adaptive. Regardless of partition, if an honest node $p$ decides on some value $v_p$ and another honest node $q$ decides on some value $v_q$, then $v_p = v_q$. That is, the honest nodes will never decide on different values.
\end{theorem*}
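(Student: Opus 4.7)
The plan is to reduce the agreement proof of \HBA ~to the structure already used for \RBA ~(Theorem \ref{thm:baagreement}) by treating the pioneer phase (Steps 1--2) as a specially-shaped iteration $0$ that still respects the same quorum invariants. Concretely, I would first observe that every honest commit in \HBA, whether generated in Step 2 (iteration $0$) or in Step 5 (iterations $r\geq 1$), is only broadcast \emph{after} the node has locked on its current $\lockvalue$ via the lock condition, which in turn requires having seen $2\tmax+1$ pre-commit messages of the same value at that iteration. Hence the analogue of Lemma \ref{lemma:agreement1} still applies to \HBA, uniformly over all iterations including $r=0$: two honest nodes cannot collect $2\tmax+1$ commit messages for distinct values at the same iteration, because a pigeonhole argument would force some honest node to have committed on two different values at the same iteration, which the protocol forbids.

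Given that lemma, the main case split mirrors the proof of Theorem \ref{thm:baagreement}. If $p$ and $q$ decide at the same iteration $r$, the lemma directly yields $v_p=v_q$. Otherwise, assume without loss of generality that the $2\tmax+1$ commit messages that $p$ collects come from iteration $r_p$ strictly smaller than the iteration $r_q$ from which $q$'s commit messages come. Among those $2\tmax+1$ commits on $v_p$, at least $\tmax+1$ are from honest nodes, and each such honest node set $\lockvalue=v_p$ with $\lockround=r_p$ before broadcasting its commit. I would then argue by induction on $r>r_p$ that these $\tmax+1$ honest nodes keep $\lockvalue=v_p$ forever: they would only switch if they saw $2\tmax+1$ pre-commits of some $v'\neq v_p$ at a later iteration, but at most $2\tmax$ other nodes remain in the system, so no such quorum for a distinct value can ever form. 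Consequently, in every iteration $r>r_p$, the only value that can gather $2\tmax+1$ pre-commits is $v_p$, and therefore the only value that can gather $2\tmax+1$ commits is also $v_p$, forcing $v_q=v_p$.

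The only genuinely new wrinkle is the pioneer fast path, and the part of the argument I expect to require the most care is confirming that commits issued during iteration $0$ satisfy exactly the same invariant as commits issued during iterations $r\geq 1$. I would verify this by inspecting Step 2 of Algorithm \ref{algorithm:hba}: the commit message there is $(\com,\lockvalue,q,0)$, and it is emitted only after the node has observed $2\tmax+1$ pre-commit messages of the same value at iteration $0$, which is precisely the lock condition with $r=r_q=0$. So the invariant ``committing at iteration $r$ implies locking at iteration $r$ on the same value with a $2\tmax+1$ pre-commit quorum'' holds uniformly, making the reduction to the \RBA ~argument go through unchanged. Since neither the partition model nor the adaptive corruption power affects the quorum counting above (these only affect termination and liveness, not safety), the agreement claim follows regardless of partition and against an adaptive adversary.
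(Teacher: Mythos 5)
Your proposal is correct and follows essentially the same route as the paper's own proof: the paper likewise treats the Step-2 fast-path commit as coming from ``iteration $0$,'' reuses the pigeonhole argument of Lemma \ref{lemma:agreement1} for the same-iteration case, and for $r_p < r_q$ argues that the $\tmax+1$ honest nodes locked on $v_p$ can never see a $2\tmax+1$ pre-commit quorum for a different value since only $2\tmax$ other nodes remain. Your explicit check that Step-2 commits are emitted only after the lock condition at iteration $0$ is exactly the observation the paper relies on when it says each honest node commits on only one value per iteration.
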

\begin{proof}[Proof of theorem \ref{theorem:agreementhba}]
The proof is almost the same as the proof of Theorem \ref{thm:baagreement}. 
For completeness, we state the formal proof here.
We call the commit message with the timestamp $0$ (sent in Step 2) comes from the iteration $0$.
Hence, for each iteration, an honest node can only commit on one value.

Because $p$ decides on $v_p$ and $q$ decides on $v_q$, $p$ and $q$ must see $2\tmax +1$ commit messages of $v_p$ and $2\tmax +1$ commit messages of $v_q$, respectively. Suppose both these $2\tmax +1$ commit messages come from the same iteration $r$. According to the proof of Lemma \ref{lemma:agreement1}, we have $v_p = v_q$.

Suppose the $2\tmax +1$ commit messages that $p$ receives come from the iteration $r_p$ and the $2\tmax +1$ commit messages that $q$ receives come from the iteration $r_q$. Without loss of generality, we assume $r_p < r_q$. Because there are up to $\tmax$ Byzantine nodes, there must be at least $\tmax+1$ honest nodes commit on $v_p$ so that $p$ can receive $2\tmax +1$ commit messages of $v_p$. 
For all iterations $r > r_p$, these $\tmax+1$ honest nodes will always pre-commit on $v_p$ until they see $2\tmax +1$ pre-commit messages of $v' \neq v_p$. 
However, only $2\tmax$ nodes remain, so these $\tmax+1$ honest nodes will never pre-commit any $v' \neq v_p$ for all $r > r_p$. Thus, for all $r > r_p$, if some value $v$ has $2\tmax +1$ pre-commit messages, then $v = v_p$. 

Because $q$ receives $2\tmax +1$ commit messages of $v_q$, there must exist at least $\tmax+1$ honest nodes that commit on $v_q$ at the iteration $r_q$. These $\tmax+1$ honest nodes commit on $v_q$ only if they have seen $2\tmax +1$ pre-commit messages of $v_q$ at iteration $r_q$. Therefore, $v_q = v_p$.

\end{proof}

\subsection{Termination}

\begin{proposition*}[Termination without partition in static adversary]
Assume $t \leq \tmax$ and the adversary is static. 
If all the honest nodes start \HBA~within time $\lambda$ and no partition exists, all the honest nodes will decide on some values in $4\lambda$, $6.33\lambda$ and $t+1$ iterations in the best case, the average case and the worst case, respectively.
\end{proposition*}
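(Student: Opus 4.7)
My plan is to split the proof into the three regimes named in the statement, reusing the agreement and lock/forward machinery already established for \RBA.

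\textbf{Best case ($4\lambda$).} I would trace messages under the assumption that the pioneer is honest and the network is synchronous. At its own local time $0$, the pioneer broadcasts $(\fast, v_q)$. Since all honest nodes start within $\lambda$ of each other, by real time $2\lambda$ every honest node has (i) started and (ii) received the fast message, and so broadcasts $(\pre, v_{pioneer}, q, 0)$ within its Step~2 window. At least $n-t \geq 2\tmax+1$ honest nodes do so with the same value, so by real time $3\lambda$ each honest node has seen $2\tmax+1$ matching pre-commits, triggers the lock condition with $\lockvalue=v_{pioneer}$, and broadcasts the commit message. By real time $4\lambda$, every honest node has collected $2\tmax+1$ commit messages and decides.

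\textbf{Worst case ($t+1$ iterations).} Here I would reduce to the proof of Proposition~\ref{proposition:terminationWOpartition}. If some honest node has already decided, propagation gives decision for the rest in one additional iteration. If some honest node has seen $2\tmax+1$ pre-commits for a value $v$, the lock condition forces all honest nodes to pre-commit $v$ next iteration, hence decide. Otherwise every honest node enters Step~4 with $\lockvalue \in \{\SKIP,\bot\}$ and picks its leader by VRF from the received init messages. The only way a Byzantine node can steer honest nodes toward different leaders or different values at an iteration is to equivocate on its \emph{initial} message; as in the \RBA{} analysis, honest nodes re-broadcast what they see, so once a Byzantine node has equivocated at one iteration it cannot do so undetected again. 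Consequently at most $t$ iterations can be ``spoiled,'' so the $(t+1)$-th iteration in the RBA portion has an honest, agreed-upon leader and all honest nodes decide.

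\textbf{Average case ($6.33\lambda$).} I would compute the expectation over leader/pioneer choices under a static adversary with $t/n \leq 1/3$. With probability $(n-t)/n$ the pioneer is honest and we terminate in $4\lambda$ by the best-case analysis. Otherwise the fast path stalls until $\clock = 3\lambda$, after which we are running \RBA{} starting at Step~3, and by Proposition~\ref{proposition:terminationExpected} the expected cost of the remaining \RBA{} portion is the same geometric sum as before. Writing this as
\begin{equation*}
E[\text{rounds}] \;\leq\; \frac{n-t}{n}\cdot 4 \;+\; \frac{t}{n}\Bigl(3 + \sum_{i=0}^{t}\bigl(\tfrac{t}{n}\bigr)^{i}\tfrac{n-t}{n}(6+4i)\Bigr),
\end{equation*}
plugging in $t/n = 1/3$ and evaluating the geometric series gives $4\cdot\tfrac{2}{3} + \tfrac{1}{3}\cdot(3+8) = \tfrac{19}{3} \approx 6.33$.

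The main obstacle I expect is the bookkeeping in the average-case computation, specifically accounting for the extra $3\lambda$ ``wasted'' in Step~2 whenever the pioneer fails and showing that this cost appears only once (not once per iteration), so that the geometric series for subsequent \RBA{} iterations has exactly the same shape as in Proposition~\ref{proposition:terminationExpected}. Once that is carefully set up, producing the constant $19/3$ is routine; the argument for the best and worst cases is largely a restatement of the \RBA{} bounds with the fast-path prefix prepended.
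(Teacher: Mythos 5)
Your proposal is correct and follows essentially the same route as the paper: the same case split on whether the pioneer is honest, the same message-tracing for the $4\lambda$ best case, the same reduction of the worst case to Proposition~\ref{proposition:terminationWOpartition}, and the identical expectation $\tfrac{2}{3}\cdot 4\lambda + \tfrac{1}{3}(3\lambda + 8\lambda) = \tfrac{19}{3}\lambda$ for the average case via Proposition~\ref{proposition:terminationExpected}. The only cosmetic difference is that you write the mixture weights as $(n-t)/n$ and $t/n$ before substituting $1/3$, where the paper uses $2/3$ and $1/3$ directly.
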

\begin{proof}[Proof of proposition \ref{proposition:staticTermination}]
We categorize into two cases:\\
\noindent\textit{Case 1: The pioneer is an honest nodes.}
Because the leader is honest, it will broadcast the message at the beginning of \HBA.
All the honest nodes receive the leader's value $v_\ell$ and reply in $2\lambda$.
Then, all the honest nodes receive $2\tmax + 1$ pre-commit message in $3\lambda$.
Meanwhile, they broadcast the commit messages on $v_\ell$.
Thus, all the honest nodes receive $2\tmax + 1$ commit messages on $v_\ell$ and terminate within $4\lambda$, which is the best case.

\noindent\textit{Case 2: The pioneer is a Byzantine node.}
By theorem \ref{theorem:agreementhba}, all honest nodes decide either in Step 2 or in Step 4-6 of some iteration.
In the former case, all honest will reach the termination condition within $4 \lambda$.
In the latter case, all honest nodes terminate in $t+1$ iterations in the worst case according to Proposition \ref{proposition:terminationWOpartition}.

In expectation, they terminate in $8$ rounds according to Proposition \ref{proposition:terminationExpected}.
Since the probability of the leader in fast phase is honest node is $2/3$, the expected time of termination is
\[
\frac{2}{3}\cdot 4\lambda + \frac{1}{3} \cdot (3\lambda+8\lambda) = 6.33 \lambda.
\]
\end{proof}

\begin{proposition*}[Termination without partition in adaptive adversary]
Assume $t \leq \tmax$ and the adversary is adaptive. 
If all the honest nodes start \HBA~within time $\lambda$ and no partition exists, all the honest nodes will decide on some values in $t+1$ iterations.
\end{proposition*}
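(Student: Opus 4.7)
The plan is to mirror the case analysis used in the proof of Proposition~\ref{proposition:terminationWOpartition} for \RBA, adding one case corresponding to the fast phase (Step~2 of \HBA). The key observation is that, by the pigeonhole argument underlying Theorem~\ref{theorem:agreementhba}, once any honest node locks on a value $v^*$ during the fast phase, at least $\tmax + 1$ honest nodes must share that lock, and these nodes will from then on only pre-commit on $v^*$; since only $2\tmax$ nodes remain, no other value can accumulate $2\tmax + 1$ pre-commit messages in any future iteration.

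I would split into two main cases. \textbf{Case A (some honest node locks during the fast phase):} If some honest node has already decided, then within time~$\lambda$ the $2\tmax + 1$ commit messages it has seen propagate to every honest node, so all honest nodes decide in at most one further iteration. Otherwise, the $2\tmax + 1$ pre-commit messages that triggered the lock propagate within time~$\lambda$, so every honest node locks on $v^*$ before pre-committing in iteration~$1$, commits on $v^*$ in iteration~$1$, and decides by the end of that iteration. Either way, at most one iteration of \RBA-style work is used. \textbf{Case B (no honest node locks during the fast phase):} All honest nodes enter Step~$3$ with their locking variables still set to the initial values, and the protocol then behaves exactly like \RBA~from iteration~$1$ onward, with every honest node starting iteration~$1$ within a $\lambda$-window. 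The adaptive-adversary argument of Proposition~\ref{proposition:terminationWOpartition} then applies verbatim: each Byzantine node can equivocate on its initial value in at most one iteration before being caught, so $t$ Byzantine nodes can interfere with at most $t$ iterations, giving termination within $t+1$ iterations.

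The main obstacle is ruling out that partial progress in the fast phase---for instance, some honest nodes locking on $v^*$ in Step~$2$ while others advance to Step~$3$ without locking---leaves the honest set in inconsistent local states that the adaptive adversary can exploit to stretch termination beyond $t+1$ iterations. Case~A is designed precisely to address this: once any honest node locks, the ``$\tmax + 1$ honest nodes locked on $v^*$'' invariant is established, which dominates the subsequent \RBA~dynamics no matter how the adversary chooses corruptions from iteration~$1$ onward, forcing convergence within a single additional iteration.
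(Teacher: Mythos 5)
Your Case~B and the first subcase of Case~A are sound and follow the same overall strategy as the paper (reduce to Proposition~\ref{proposition:terminationWOpartition}). The genuine gap is in the second subcase of Case~A, and it is twofold. First, the pigeonhole argument does not give you that ``at least $\tmax+1$ honest nodes share the lock on $v^*$'': an honest node locks in the fast phase only after \emph{seeing} $2\tmax+1$ pre-commit messages of $v^*$, whereas pigeonhole only yields that $\tmax+1$ honest nodes \emph{pre-committed} $v^*$. Pre-committing the pioneer's value does not set $\lockvalue$, and the adversary can deliver its own $\tmax$ pre-commits to a single honest node so that only that node assembles the $2\tmax+1$ threshold before its $\clock$ passes $3\lambda$. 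Second, your fallback claim that the propagated $2\tmax+1$ pre-commit messages make \emph{every} honest node lock on $v^*$ before iteration~$1$ does not go through under the protocol as written: those messages carry iteration index~$0$, the lock condition requires $r=r_q$, and a recipient whose $\clock$ has reached $3\lambda$ has already set $r_q=1$, so iteration-$0$ pre-commits arriving afterwards (they can lag the lock event by up to $2\lambda$ given the start-time skew plus network delay) trigger neither the lock nor a forward condition. You can therefore be left with $1\le k\le\tmax$ honest nodes locked on $v^*$ while the remaining honest nodes pre-commit the VRF leader's value in later iterations; neither value then reaches $2\tmax+1$ pre-commits without adversarial cooperation, and the ``no other value can be locked'' invariant you invoke is a safety statement that does not by itself force progress. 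So Case~A as written does not deliver termination within one additional iteration.

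For comparison, the paper's own proof is a two-line reduction: an adaptive adversary can always corrupt the pre-determined pioneer, which is the worst case of Case~2 in Proposition~\ref{proposition:staticTermination}, and that case simply invokes Proposition~\ref{proposition:terminationWOpartition}. In other words, the paper treats the fast phase as either producing a decision (which then propagates within $\lambda$) or as irrelevant, and never engages with the mixed-lock scenario you correctly identify as the main obstacle. Your instinct to isolate that scenario is the right one; what is still missing---from your argument and from the paper's---is a proof that a fast-phase lock held by fewer than $2\tmax+1$ honest nodes is either unreachable or is propagated to (or dissolved at) all honest nodes before iteration~$1$ begins.
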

\begin{proof}[Proof of proposition \ref{proposition:terminationWOpartitionfastBA}]
Because the leader in Step 2 is pre-determined, the adaptive adversary can always compromise the leader.
This is the worst case in Case 2. in Proposition \ref{proposition:staticTermination} and we have that all the honest nodes will terminate in $t+1$ iterations by Proposition \ref{proposition:terminationWOpartition}.
\end{proof}

\begin{proposition*}[Fast recovery from a partition in adaptive adversary]
Assume $t \leq \tmax$ and the adversary is adaptive. If the partition is resolved, all the honest nodes will decide on some values in $t+2$ iterations. 
If the adversary is static, it is to be expected that all honest nodes will decide on some values in $12$ rounds.
\end{proposition*}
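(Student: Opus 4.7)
The plan is to mirror the proof of Proposition~\ref{proposition:partitionResolved} for \RBA, exploiting the observation that once \HBA\ has left the pioneer phase (Steps 1--3), its dynamics in Steps 4--6 coincide structurally with \RBA. I would split into two cases based on whether some honest node has already decided before the partition resolves.

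For the first case, suppose some honest node $p$ has decided on value $v_p$. Then $p$ holds $2\tmax+1$ commit messages for $v_p$ from some common iteration $r$ (possibly $r=0$ from the fast phase). As soon as the partition is resolved, these messages propagate to every honest node within time $\lambda$, and by the termination condition all honest nodes decide on $v_p$ within a single propagation step. For the second case, suppose no honest node has decided, and let $p$ be the honest node working on the latest iteration $r_p$. To have entered $r_p$, $p$ must have triggered the forward condition at iteration $r_p - 1$, meaning $p$ is holding a supermajority certificate (either $2\tmax+1$ pre-commit or $2\tmax+1$ commit messages). Once the partition resolves, this certificate reaches every honest node within time $\lambda$, forcing them all to trigger the forward condition and synchronize at some iteration $r^\star \ge r_p$. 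From that point, all honest nodes begin a common iteration within a $\lambda$ window, so Proposition~\ref{proposition:terminationWOpartitionfastBA} applies and termination follows in at most $t+1$ further iterations. Counting the single iteration spent on synchronization yields the claimed $t+2$ bound.

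For the static adversary claim, I would reuse the probabilistic argument from Proposition~\ref{proposition:terminationExpected}. After synchronization, the leader-election dynamics in Steps 4--6 of \HBA\ are identical to those of \RBA, so the expected number of rounds to decide is $8$. Adding $4\lambda$ for the single iteration needed to synchronize after the partition gives the claimed expected $12$ rounds.

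The main obstacle I anticipate is handling nodes that are still inside the pioneer phase (Steps 1--3) when the partition resolves, since the forward conditions in \HBA\ jump to Step 4 rather than to Step 2. I would need to argue that any such node either (i) receives the pioneer's fast message together with $2\tmax+1$ pre-commit messages and locks on the pioneer's value, reducing to Case 1 of the analysis above, or (ii) sees a forward-condition certificate from a node already in the iterative phase and moves directly to Step 4, reducing to Case 2. In either subcase, the $\lambda$-synchronization after partition resolution still holds, and the remainder of the argument carries over verbatim from \RBA.
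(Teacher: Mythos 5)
Your proposal is correct and follows essentially the same route as the paper: the paper also splits on whether some honest node decided in the fast phase (propagate the commit certificate, everyone decides within $\lambda$) versus not (in which case the nodes are in the iterative phase, which is structurally \RBA, so the argument of Proposition~\ref{proposition:partitionResolved} applies to give $t+2$ iterations and expected $8+4=12$ rounds). You merely inline the \RBA\ recovery argument instead of citing it wholesale, and your explicit treatment of nodes still in the pioneer phase addresses an edge case the paper's proof glosses over.
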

\begin{proof}[Proof of proposition \ref{proposition:hbafastrecoveryfrompartition}]
Suppose some honest nodes have decided in the fast mode. Then, after the partition is resolved, they would broadcast the proof, and all honest nodes will terminate and agree on the value proposed in the fast mode in $\lambda$.
If no honest node has decided in the fast mode, then all the honest nodes proceed to the normal mode. In this case, the termination property is exactly the same as \RBA~and we have proved it in Proposition \ref{proposition:partitionResolved}.
\end{proof}

\subsection{Responsiveness}
\begin{proposition*}
Assume the actual network delay is $\delta$, and all the nodes start \HBA~within $\tau$ time difference. 
If there is no partition and the pioneer is honest, all the honest nodes will decide on some values in $\tau + 3\delta$.
\end{proposition*}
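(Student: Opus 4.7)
The plan is to trace the causal chain of messages triggered by the honest pioneer along the critical path, bounding each network hop by the actual delay $\delta$, and to show that the full chain fits inside Step~2 of \HBA. Let $T_0$ denote the absolute time at which the earliest honest node enters the protocol; by the $\tau$-skew assumption, every honest node, in particular the pioneer, has entered Step~1 by time $T_0 + \tau$.

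First, because the pioneer is honest, it broadcasts $(\fast,v_{pioneer})$ at its Step~1 no later than absolute time $T_0 + \tau$. Every honest node $q$ receives this fast message by time $T_0 + \tau + \delta$, at which point $q$ is still in Step~2 (since responsiveness is being claimed exactly in the regime where the whole cascade fits inside the $\clock \le 3\lambda$ window), and therefore immediately broadcasts $(\pre,v_{pioneer},q,0)$. Because the pioneer is honest, the only fast message carrying a valid pioneer signature is the one on $v_{pioneer}$, so all honest pre-commits agree on $v_{pioneer}$.

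Second, by time $T_0 + \tau + 2\delta$ every honest node has received pre-commit messages from at least $2\tmax + 1$ honest nodes, all carrying $v_{pioneer}$; the lock condition fires, $q$ sets $\lockvalue = v_{pioneer}$, and immediately broadcasts $(\com,v_{pioneer},q,0)$. By $T_0 + \tau + 3\delta$ every honest node has collected $2\tmax + 1$ matching commit messages of $v_{pioneer}$ and decides, which is within $\tau + 3\delta$ of the earliest start time $T_0$, as claimed.

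The main obstacle is the implicit side condition that the cascade completes before Step~2's responsive window closes, i.e., that every honest node still satisfies $\clock \le 3\lambda$ when it fires its pre-commit and its commit; this is precisely what distinguishes the responsive path from the fallback into Step~3, and the $\tau + 3\delta$ bound is only meaningful in the regime $\tau + 2\delta \le 3\lambda$. Apart from this regime assumption, the argument is a direct propagation bound and requires no probabilistic or adversarial analysis, since the honesty of the pioneer rules out equivocation on the fast message and the remaining up to $\tmax$ Byzantine nodes cannot prevent $2\tmax + 1$ honest pre-commits and commits from being delivered on schedule.
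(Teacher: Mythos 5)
Your proof is correct and follows essentially the same route as the paper's: a direct three-hop propagation bound (fast message, then pre-commit, then commit), each hop costing $\delta$, offset by the start skew $\tau$, with the pioneer's honesty guaranteeing that all honest pre-commits carry the same value. Your explicit remark that the cascade must finish while $\clock \leq 3\lambda$ is left implicit in the paper's proof, but it holds automatically in the stated regime (no partition forces $\delta \leq \lambda$, and the skew is bounded as in the other termination propositions), so it does not alter the argument.
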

\begin{proof}[Proof of proposition \ref{proposition:hbaresponsiveness}]
Suppose all the honest nodes start \HBA~simultaneously. The honest pioneer broadcasts its value at $\clock = 0$. All the honest nodes will receive pioneer's value and reply the pre-commit messages in $\delta$. All the honest nodes will receive $2\tmax + 1$ pre-commit messages before $\clock = 2\delta$. 
Because the pioneer is honest, these $2\tmax + 1$ messages all pre-commit on the same value.
Thus, all the honest nodes broadcast the commit messages and will decide in $3\delta$.

If the pioneer broadcasts its value at $\clock = \tau$ for some node $q$ due to the time difference, $q$ will decide at $\clock = \tau + 3\delta$.
\end{proof}

\subsection{Wearkly Fair Validity}
\begin{theorem*}[weakly fair validity]
Suppose the network is synchronous.
Then, \HBA ~achieves weakly fair validity under the assumption of static adversary.
\end{theorem*}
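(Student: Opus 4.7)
The plan is to leverage the deterministic pioneer rotation together with the fast-path termination established in Proposition~\ref{proposition:staticTermination}. Since the pioneer for the $j$-th execution is determined by sorting the public keys and selecting the node indexed by the rotation parameter $p$, each honest node $q_i \in \mathcal{H}_M$ will serve as pioneer in at least $\lfloor M/n \rfloor$ of the $M$ executions. Whenever an honest pioneer runs in a synchronous network under a static adversary, the fast phase of Step~2 in Algorithm~\ref{algorithm:hba} forces every honest node to decide on the pioneer's value. The weakly fair validity bound then follows by summing over these guaranteed executions.

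Concretely, I would proceed as follows. First, fix an honest node $q_i \in \mathcal{H}_M$ and let $J_i \subseteq \{1, \ldots, M\}$ denote the set of indices at which $q_i$ is the designated pioneer; by the rotation scheme $|J_i| \geq \lfloor M/n \rfloor$. Next, I would argue that for every $j \in J_i$ the event $X_{ij}=1$ holds except with negligible probability: since $q_i$ is honest and the network is synchronous, $q_i$ broadcasts $(\fast, v_i)$ at $\clock = 0$; by time $\lambda$ every honest node has received it and broadcasts a pre-commit on $v_i$; by time $2\lambda$ every honest node has collected $2\tmax + 1$ pre-commit messages on $v_i$ and broadcasts a commit on $v_i$; and by time $3\lambda$ every honest node has $2\tmax + 1$ commit messages on $v_i$, which meets the termination condition before the Step~2 deadline $\clock \leq 3\lambda$ elapses. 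The static adversary assumption ensures that $q_i$ actually plays the pioneer role, since it cannot be corrupted after the pioneer slot is fixed. By linearity of expectation,
\begin{align}
\mathbb{E}[X_i] \;=\; \sum_{j=1}^M \Pr[X_{ij}=1] \;\geq\; \sum_{j \in J_i} \Pr[X_{ij}=1] \;\geq\; \lfloor M/n \rfloor \cdot (1 - \negl(\kappa)),
\end{align}
which rearranges to $\mathbb{E}[X_i] \geq \lfloor M/n \rfloor - \negl(\kappa)$ after absorbing the polynomially many negligible terms.

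The main obstacle I anticipate is handling the cryptographic slack carefully. One must rule out the rare events in which the adversary forges a signature on a fast, pre-commit, or commit message attributed to an honest node, which could cause honest nodes to lock a wrong value or to miss the fast-path deadline during a $j \in J_i$ execution. Each such event occurs with probability bounded by the existential unforgeability of the digital signature scheme, so a standard union bound over the $M = \text{poly}(\kappa)$ executions and the $\text{poly}(\kappa)$ messages per execution keeps the overall slack negligible in $\kappa$. A secondary point to verify is that a node in $\mathcal{H}_M$ is honest in every execution where it is the pioneer, but this is immediate from the static adversary model and the definition of $\mathcal{H}_M$, so the rotation argument goes through cleanly without a case analysis on when corruption occurs.
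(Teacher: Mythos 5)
Your proposal is correct and follows essentially the same route as the paper's own proof: the deterministic rotation gives each honest node at least $\lfloor M/n\rfloor$ pioneer slots, the synchronous fast path forces all honest nodes to decide the honest pioneer's value in those executions, and the only slack is signature forgery, absorbed as a negligible term. Your version is slightly more explicit about the linearity-of-expectation step and the union bound over forgery events, but the substance is identical.
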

\begin{proof}[Proof of theorem \ref{thm:hbafairness}]
When the node $q$ is elected as the pioneer, because the network is synchronous, all the honest nodes will receive $q$'s fast message and broadcast the pre-commit messages on $v_q$ before $2\lambda$ (we allow honest nodes start the protocol within $\lambda$ time drift).
Then, all the honest nodes will receive $2\tmax + 1$ pre-commit messages on $v_q$ before $3\lambda$, so they all set their $\lockvalue$ on $v_q$.
In this case, they will all terminates on $v_q$.
Thus, as long as the network is synchronous, honest nodes will always terminates on honest pioneer's value.

Because the pioneer is elected by the permutation of nodes' public keys, all the nodes will be the pioneer once if \HBA ~is executed $n$ times.
Except that the adversary can forge the signature (only with negligible probability), all the honest nodes can propose a value that be decided by all honest nodes at least $\lfloor \frac{M}{n}\rfloor$ times after \HBA is executed $M$ times.
Thus, \HBA ~achieves weakly fair validity.
\end{proof}

\section{Complexity Analysis and Discussion}
\label{section:complexity}
 We analyze the communication complexity for \RBA~and \HBA~in Section \ref{subsec:RBA_complexity} and Section \ref{subsec:HBA_complexity}, respectively.

\subsection{Communication Complexity of Robust Byzantine Agreement}
\label{subsec:RBA_complexity}
We now analyze the communication complexity of a single node for a single round in \RBA. 
Because an honest node will help to propagate the messages, all honest nodes will gossip $\mathcal{O}(n)$ messages in a single round. 
Thus, the communication complexity for all nodes is $\mathcal{O}(n^2)$ in a single round.

As discussed in Section \ref{subsection:baTermination}, if no partition exists or the system recovers from a partition, \RBA~terminates in $t+1$ iterations in the worst case and is expected to terminate in $8$ rounds. 
We assume $n \geq 3t+1$, so the protocol terminates in $\mathcal{O}(n)$ iterations in the worst case and is expected to terminate in $\mathcal{O}(1)$ iterations. Therefore, the total communication complexity of the protocol is $\mathcal{O}(n^3)$ in the worst case and $\mathcal{O}(n^2)$ in the expected case.

\subsection{Communication Complexity of Hybrid Byzantine Agreement}
\label{subsec:HBA_complexity}
Since the core of \HBA~is actually \RBA, except that every node first enters a fast voting procedure.
The average-case and worst-case communication complexity remain the same.
For the best-case, the upper bound of the communication 
complexity is $n + n^2 + n^2 = 2n^2 + n$.

\subsection{Communication-Efficient Recovery}
Nodes can either actively request data or passively receive data
while nodes suspect a partition happened.
As mentioned in Section \ref{section:model}, if a node $q$ recovers from a partition, it should receive all the previous messages which should be delivered.
In this subsection, we argue that other nodes are not necessary to send all the received messages but only the messages that certify the newest status.

Precisely, a node $q$ sends the commit message of $v$ with $2\tmax+1$ signatures from different nodes to certify that $q$ decides on $v$,
the pre-commit message with $2\tmax+1$ signatures from different nodes to certify that $q$ locks on the certain round and $v$, or
pre-commit message with $2\tmax+1$ signatures from different nodes to certify the latest iteration at which $q$ working.

Furthermore, the communication cost can be reduced by using a threshold signature to compact the $2\tmax+1$ signatures into constant size.

\end{appendices}

\end{document}